\journal{Information and Computation}
\newtheorem{theorem}{Theorem}
\newtheorem{conjecture}{Conjecture}
\newtheorem{corollary}{Corollary}
\newtheorem{example}{Example}
\newtheorem{remark}{Remark}
\newtheorem{claim}{Claim}
\newtheorem{lemma}{Lemma}
\newcommand{\scaleOfPictures}{0.8}
\newcommand{\submultiseteq}{\sqsubseteq}
\newcommand{\multisets}[1]{{\cal M}(#1)}
\newcommand{\trans}{\longrightarrow}
\newcommand{\atoms}{{\mathbb A}}
\newcommand{\para}[1]{\vspace{-2mm}\paragraph{\bf #1}}
\newcommand{\paraemph}[1]{\paragraph{#1}}
\newcommand{\Q}{{\mathbb Q}}
\newcommand{\N}{{\mathbb N}}
\newcommand{\aut}[1]{\text{Aut}(#1)}
\newcommand{\nat}{{\mathbb N}}
\newcommand{\setof}[2]{\left\{\, #1 \, | \, #2\,\right\}}
\newcommand{\set}[1]{\{#1\}}
\newcommand{\places}{{\cal P}}
\newcommand{\transitions}{{\cal T}}
\newcommand{\masz}{{\cal M}}
\newcommand{\age}[1]{\text{\sc Age}(#1)}
\newcommand{\embedsin}{\unlhd}
\newcommand{\embedsinpar}[1]{\embedsin_{#1}}
\newcommand{\str}[1]{{\cal #1}}
\newcommand{\wqo}{{\sc{wqo}}\xspace}  
\newcommand{\Wqo}{{\sc{Wqo}}\xspace}
\newcommand{\A}{\atoms}
\newcommand{\G}{{\mathbb G}}
\newcommand{\Aeq}{\A_{=}}
\newcommand{\Aneq}{\A_{1}}
\newcommand{\Ato}{\A_{\leq}}
\newcommand{\lEdge}[4][cGray]{
  \ifthenelse{\equal{#3}{}}{
    \draw[Edge, #1] 
      (#2.center) to
      (#4.center);
  }{
    \draw[Edge, #1] 
      (#2.center) to 
        node[EdgeLabel] {{\strut\small#3}} 
      (#4.center);
  }
}
\newcommand{\lsEdge}[5][cGray]{
  \ifthenelse{\equal{#3}{}}{
    \draw[Edge, #1] 
      (#2.center) to
      (#4.center);
  }{
    \path[Edge, #1] 
      (#2.center) to 
        node[EdgeLabel, #5] {{\strut\small#3}} 
      (#4.center);
  }
}
\definecolor{cBlack}{RGB}{0,0,0}
\definecolor{cA}{RGB}{86,122,217}
\definecolor{cB}{RGB}{79,171,58}
\definecolor{cC}{RGB}{223,64,102}
\definecolor{cX}{RGB}{227,172,42}
\definecolor{cY}{RGB}{74,198,224}
\definecolor{cZ}{RGB}{224,74,157}
\definecolor{cZZ}{RGB}{23,167,158}
\definecolor{cGray}{RGB}{90,90,90}
\definecolor{cLightGray}{RGB}{180,180,180}
\definecolor{cVeryLightGray}{RGB}{220,220,220}
\tikzstyle{PNPlace} = [
\tikzstyle{PNBigPlace} = [
\tikzstyle{PNTransition} = [
\tikzstyle{PNToken} = [
\tikzstyle{PNDataToken} = [
\tikzstyle{PNArrow} = [
\tikzstyle{Basic} = [
\tikzstyle{Node} = [
\tikzstyle{SNode} = [
\tikzstyle{ANode} = [
\tikzstyle{LemmaNode} = [
\tikzstyle{Edge} = [
\tikzstyle{ThinEdge} = [
\tikzstyle{AEdge} = [
\tikzstyle{EdgeLabel} = [
\tikzstyle{Selection} = [
\tikzstyle{doubleEdge} = [
\tikzstyle{BgFrame} = [
\tikzstyle{FlowNode} = [
\tikzstyle{FlowArrow} = [
\tikzstyle{FlowArrow2} = [
\tikzstyle{CaseArrow} = [
\tikzstyle{FArrow} = [
\tikzstyle{FLine2} = [
\tikzstyle{FArrow2} = [
\tikzstyle{FArrow2inv} = [
\tikzstyle{AInstance} = [
\tikzstyle{AResult} = [
\newcommand{\Color}[2]{\ensuremath{\mathbf{\color{#1}{#2}}}}
\renewcommand{\a}{\Color{cA}{a}}
\renewcommand{\b}{\Color{cB}{b}}
\renewcommand{\c}{\Color{cC}{c}}
\newcommand{\x}{\Color{cX}{x}}
\newcommand{\y}{\Color{cY}{y}}
\newcommand{\z}{\Color{cZ}{z}}
\renewcommand{\v}{\Color{cBlack}{v}}
\renewcommand{\u}{\Color{cBlack}{u}}
\newcommand{\w}{\Color{cBlack}{w}}
\newcommand{\n}{\Color{cBlack}{?}}
\newcommand{\Colors}{\pgfutilensuremath{\mathit{Colors}}}
\newcommand{\ColorsTwo}{\pgfutilensuremath{\mathit{Colors}'}}
\newcommand{\Instance}[1]{\medskip\noindent\emph{Instance #1}.}
\newsavebox{\CaseOkBox}
\sbox{\CaseOkBox}{
\begin{tikzpicture}[scale=0.1*\scaleOfPictures]
    \path[fill=cB,rotate around={45:(0,0)}] (0,0) rectangle (3,1);
    \path[fill=cB,rotate around={45:(0,0)}] (0,0) rectangle (1,2);
\end{tikzpicture}
}
\newcommand{\CaseOk}{\usebox{\CaseOkBox}}
\newsavebox{\CaseEdgeExistsBox}
\sbox{\CaseEdgeExistsBox}{
\begin{tikzpicture}[scale=0.3*\scaleOfPictures]
    \node[ANode, scale=0.8] at (0, 0) (n1) {};
    \node[ANode, scale=0.8] at (1, 1) (n2) {};
    \draw[Edge, scale=0.8, line width=0.05cm] (n1) to[bend right=40] node[midway, auto] {\scriptsize$\exists$} (n2);
\end{tikzpicture}
}
\newcommand{\CaseEdgeExists}{\usebox{\CaseEdgeExistsBox}}
\newsavebox{\CaseAEdgeBox}
\sbox{\CaseAEdgeBox}{
\begin{tikzpicture}[scale=0.3*\scaleOfPictures]
    \node[ANode, scale=0.8] at (0, 0) (n1) {};
    \node[ANode, scale=0.8] at (1, 1) (n2) {};
    \draw[Edge, cA, scale=0.8, line width=0.05cm] (n1) to[bend right=40] node[midway, auto] {\a} (n2);
\end{tikzpicture}
}
\newcommand{\CaseAEdge}{\usebox{\CaseAEdgeBox}}
\newsavebox{\CaseBEdgeBox}
\sbox{\CaseBEdgeBox}{
\begin{tikzpicture}[scale=0.3*\scaleOfPictures]
    \node[ANode, scale=0.8] at (0, 0) (n1) {};
    \node[ANode, scale=0.8] at (1, 1) (n2) {};
    \draw[Edge, cB, scale=0.8, line width=0.05cm] (n1) to[bend right=40] node[midway, auto] {\b} (n2);
\end{tikzpicture}
}
\newcommand{\CaseBEdge}{\usebox{\CaseBEdgeBox}}
\newsavebox{\CaseEdgeDoesNotExistBox}
\sbox{\CaseEdgeDoesNotExistBox}{
\begin{tikzpicture}[scale=0.3*\scaleOfPictures]
    \node[ANode, scale=0.8] at (0, 0) (n1) {};
    \node[ANode, scale=0.8] at (1, 1) (n2) {};
    \draw[AEdge, scale=0.8, line width=0.05cm] (n1) to[bend right=40] node[midway, auto] {\scriptsize$\neg\exists$} (n2);
\end{tikzpicture}
}
\newcommand{\CaseEdgeDoesNotExist}{\usebox{\CaseEdgeDoesNotExistBox}}
\newcommand{\sixTokens}{
\begin{tikzpicture}
    \node[PNToken] at (0, 0) {};
    \foreach \i in {1,2,3,4,5} {
    \node[PNToken] at (90+\i*72:0.12cm) {};
    }
\end{tikzpicture}
}
\newcommand{\threeTokens}{
\begin{tikzpicture}
    \foreach \i in {1,2,3} {
    \node[PNToken] at (90+\i*120:0.12cm) {};
    }
\end{tikzpicture}
}
\newcommand{\threeTokensLabeled}[6]{%
\begin{tikzpicture}[baseline=-0.35cm]
    \foreach \i/\l/\c in {1/#1/#2,2/#3/#4,3/#5/#6} {
        \node[PNToken,fill=\c] at (90+\i*120:0.22cm) {\textsf{\l}};
    }
\end{tikzpicture}}
\newcommand{\twoTokensLabeled}[4]{%
\begin{tikzpicture}
    \foreach \i/\l/\c in {1/#1/#2,2/#3/#4} {
        \node[PNToken,fill=\c] at (0+\i*180:0.19cm) {\textsf{\l}};
    }
\end{tikzpicture}}
\newcommand{\oneToken}{
\begin{tikzpicture}
    \node[PNToken] at (0, 0) {};
\end{tikzpicture}
}
\newcommand{\labeledTriangle}[6]{
\begin{tikzpicture}[scale=1*\scaleOfPictures]
    \begin{scope}[every node/.append style=Node]
        \node at (0, 0) (K1) {};
        \node at (1, 0) (K2) {};
        \node at (0.5, .866) (K3) {};
    \end{scope}
    \begin{pgfonlayer}{background}
        \lEdge[Edge, draw=#1, swap]{K1}{#2}{K2}
        \lEdge[Edge, draw=#3, swap]{K2}{#4}{K3}
        \lEdge[Edge, draw=#5]      {K1}{#6}{K3}
    \end{pgfonlayer}
\end{tikzpicture}
}
\let\originalforall=\forall
\renewcommand{\forall}{\mathop{\vcenter{\hbox{\Large$\originalforall$}}}}
\let\originalexists=\exists
\renewcommand{\exists}{\mathop{\vcenter{\hbox{\Large$\originalexists$}}}}
\newcommand{\existsone}{\mathop{\vcenter{\hbox{\Large$\originalexists!$}}}}
\let\originalunderline=\underline
\renewcommand{\underline}[1]{\originalunderline{\smash{#1}}}
\begin{document}

\begin{frontmatter}

\title{WQO dichotomy for 3-graphs\tnoteref{mytitlenote}}
\tnotetext[mytitlenote]{A preliminary shortened version of this paper appeared as~\cite{initial}.}

\author{S{\l}awomir Lasota\fnref{slfootnote}}
\author{Rados{\l}aw Pi{\'o}rkowski\fnref{rpfootnote}}
\address{Institute of Informatics, University of Warsaw}
\fntext[slfootnote]{Partially supported by the European Research Council (ERC) project Lipa
under the EU Horizon 2020 research and innovation programme (grant
agreement No. 683080).} 
\fntext[rpfootnote]{Partially supported by the Polish NCN grant 2016/21/B/ST6/01505.}

\begin{abstract}
We investigate data-enriched models, like Petri nets with data, 
where executability of a transition is conditioned by a relation between data values involved.
Decidability status of various decision problems in such models may depend on the structure of data domain. 
According to the WQO Dichotomy Conjecture, if a data domain is homogeneous then it either exhibits 
a well quasi-order (in which case decidability follows by standard arguments), or essentially all the decision problems are undecidable
for Petri nets over that data domain.

We confirm the conjecture for data domains being 3-graphs (graphs with 2-colored edges). 
On the technical level, this results is a significant step towards classification of homogeneous 3-graphs, 
going beyond known classification results for homogeneous structures.
\end{abstract}

\begin{keyword}
homogeneous structures\sep amalgamation property\sep well quasi orders\sep Petri nets with data
\MSC[2010] 03C13\sep 03C98\sep 03D05\sep 68Q05\sep 20B27
\end{keyword}

\end{frontmatter}


\section{Introduction}


In Petri nets with data, tokens carry values from some data domain, and executability
of transitions is conditioned by a relation between data values involved.
One can consider \emph{unordered data}, like in~\cite{RF11}, i.e.,~an infinite data domain with the equality as the only  relation; or \emph{ordered data}, like in~\cite{LNORW07}, i.e.,~an infinite densely totally ordered data domain; or timed data, 
like in timed Petri nets~\cite{AN01} and timed-arc Petri nets~\cite{JJMS11}.
In~\cite{Las16} an abstract setting of Petri nets with an arbitrary fixed data domain $\A$ has been introduced, 
parametric in a relational structure $\A$.
The setting uniformly subsumes unordered, ordered and timed data (represented by $\A = (\N, =)$, $\A = (\Q, \leq)$ and $\A = (\Q, \leq, +1)$, respectively).

Following~\cite{Las16},
in order to enable finite presentation of Petri nets with data, 
and in particular to consider such models as input to algorithms, 
we restrict to relational structures $\A$ that are \emph{homogeneous}~\cite{survey} and \emph{effective} (the formal definitions are given in Section~\ref{sec:pn}). 
Certain standard decision problems (like the termination problem, the boundedness problem, or the coverability problem, 
jointly called from now on \emph{standard problems}) 
are all  decidable for Petri nets with ordered data~\cite{LNORW07} (and in consequence also for Petri nets with unordered data), 
as the model fits into the framework of well-structured transition systems of~\cite{FS01}. 
Most importantly, the structure $\A = (\Q, \leq)$ of ordered data \emph{admits well quasi-order} (\wqo) in the following sense:
for any \wqo $X$, the set of finite induced substructures of $(\Q, \leq)$ (i.e., finite total orders) labeled by elements of $X$, ordered naturally by embedding, is a \wqo
(this is exactly Higman's lemma).
Moreover, essentially the same argument can be used for any other homogeneous effective data domain which admits \wqo (see~\cite{Las16} for details).
On the other hand, for certain homogeneous effective data domains $\A$ the standard problems become all undecidable.
In the quest for understanding the decidability borderline, the following hypothesis has been formulated in~\cite{Las16}:
\begin{conjecture}[\Wqo Dichotomy Coinjecture~\cite{Las16}]
For an effective homogeneous structure $\A$, either $\A$ admits \wqo (in which case the standard problems are decidable 
for Petri nets with data $\A$), or all the standard problems are undecidable for Petri nets with data $\A$.
\end{conjecture}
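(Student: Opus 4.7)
The ``\wqo implies decidable'' direction is recorded in~\cite{Las16} via the well-structured transition systems framework of~\cite{FS01}, so the entire burden lies in the converse: assuming that effective homogeneous $\A$ does not admit \wqo, prove that all standard problems for Petri nets with data $\A$ are undecidable. My plan is to reduce halting of Minsky two-counter machines to coverability (with essentially the same reduction handling boundedness and termination), driven by an explicit combinatorial witness of the failure of \wqo --- namely a \wqo $X$ and an infinite antichain $t_0, t_1, t_2, \ldots$ of $X$-labelled finite induced substructures of $\A$ with respect to labelled embedding.

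The reduction encodes counter values by multisets of data-carrying tokens whose mutual relations in $\A$ realise successively larger members of the antichain. Incrementing counter $c$ fires a transition that mints a fresh atom, guessing its \fodef\ type over the existing atoms so that the current configuration extends by one more atom of the antichain's generating pattern; decrementing consumes one such atom; zero-testing uses absence of certain labelled substructures as a transition guard. Effectivity of $\A$ is what guarantees the finitely many required transition descriptions are computable, and homogeneity is what makes the ``mint a fresh atom of prescribed type'' step uniformly realisable. Faithfulness of the reduction follows from the antichain property itself: any run of the net that cheats the counter semantics would yield two reachable configurations whose induced substructures embed one into the other, contradicting that the $t_i$ form an antichain.

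The principal obstacle, and the reason the conjecture remains open in general, is manufacturing the antichain in a form amenable to the above encoding. Failure of \wqo in a homogeneous structure can arise from combinatorially very different obstructions --- forbidden substructures, generic colourings, lexicographic wreath products --- and an abstract non-\wqo witness need not be \emph{generable} by extending configurations one atom at a time via a definable rule. A fully uniform proof seemingly requires a canonical-form theorem asserting that every non-\wqo effective homogeneous $\A$ admits such a generable antichain, which one would attempt to prove by a Ramsey-type argument inside the Fra\"{\i}ss\'e limit combined with amalgamation. In the absence of such a uniform normalisation, the conjecture must be attacked class by class, leaning on the classifications of Cherlin and Lachlan--Woodrow; the present paper carries this programme through for 3-graphs, which already demands both a sharpening of the classification and a family of Minsky-style reductions individually tailored to each non-\wqo 3-graph that the classification produces.
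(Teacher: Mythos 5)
The statement you were asked about is the \Wqo\ Dichotomy Conjecture itself, and the paper does not prove it: it only confirms it for effective \emph{strongly homogeneous 3-graphs} (Theorem~\ref{thm:main}), via the core combinatorial result (Theorem~\ref{thm:core}) that a non-\wqo\ such 3-graph contains arbitrarily long induced paths in some $(V, C_i\cup C_j)$, followed by the counter-machine simulation of Section~\ref{sec:core-implies-main}. You correctly identify this overall architecture (WSTS for the decidable direction, a two-counter reduction for the undecidable one, and the need for a ``canonical'' non-\wqo\ witness as the real obstacle), so at the level of strategy your account matches the paper.

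However, two steps of your sketched reduction would fail as stated. First, ``zero-testing uses absence of certain labelled substructures as a transition guard'' is not expressible in this model: the formula labelling a transition may only constrain the data values of the tokens it actually consumes and produces (its free variables must lie on the incident arcs), so a guard cannot quantify over the whole configuration or assert non-existence of a pattern among all tokens on a place --- this is exactly the ``no zero tests'' limitation of Petri nets. The paper circumvents it precisely because the witness it extracts is a \emph{path}: counter value $n$ is a path of length $n+2$, the zero test becomes the local binary condition $x\mathrel{E}y$ on the two end tokens, and the correctness of a freshly guessed value at an increment is not asserted by a universal guard but \emph{verified incrementally} by the traversal transition $t_j$ walking edge by edge from $e_j$ to $b_j$. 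Second, your faithfulness argument (``a cheating run would yield two reachable configurations embedding into one another, contradicting the antichain property'') does not go through: antichain-ness of the chosen family $t_0,t_1,\dots$ constrains nothing about arbitrary reachable configurations, and a run violating counter semantics need not produce two comparable configurations. In the paper faithfulness is instead structural --- the gadgets can only get stuck, never cheat --- and the termination direction additionally needs the separate claim that $(V,E)$ contains an $\omega$-path. So the genuine gap is not only the missing normalisation theorem you flag (which Theorem~\ref{thm:core} supplies for 3-graphs), but also that the normal form must be \emph{locally verifiable} in the above sense for the reduction to be implementable at all.
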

According to~\cite{Las16}, the conjecture could have been equivalently stated for another data-enriched models,
e.g., for finite automata with one register~\cite{BBKL12}.
In this paper we consider, for the sake of presentation, only Petri nets with data.
\Wqo Dichotomy Conjecture holds in special cases when data domains $\A$ are undirected or directed graphs, due to the 
known classifications
of homogeneous graphs~\cite{Lachlan80,Cherlinbook}.

\para{Contributions}

We confirm the \Wqo Dichotomy Conjecture for data domains $\A$ being \emph{strongly}\footnote{
Strong homogeneity is a mild strengthening of homogeneity.} 
homogeneous \emph{3-graphs} (cf.~Thm.~\ref{thm:main} in Section~\ref{sec:results}).
A 3-graph is a logical structure with three irreflexive symmetric binary relations such that every pair of elements
of $\A$ belongs to exactly one of the relations
(essentially, a clique with 3-colored edges).

Our main technical contribution is a complex analysis of possible shapes of strongly homogeneous 3-graphs,
constituting the heart of the proof (cf.~Thm.~\ref{thm:core} in Section~\ref{sec:results}): we prove that a strongly homogeneous 3-graph either admits
\wqo (and thus its structure is very simple) or it embeds arbitrarily long paths. 
We believe that this result, being independent of a particular model of Petri nets for which the conjecture is formulated,
is a significant step towards full classification of homogeneous 3-graphs.
The classification of homogeneous structures is a well-known challenge in model theory, and has been only solved 
in some cases by now:
for undirected graphs~\cite{Lachlan80}, directed graphs (the proof of Cherlin spans a book~\cite{Cherlinbook}), 
multi-partite graphs~\cite{multipartite12}, and few others (the survey \cite{survey} is an excellent overview  
of homogeneous structures). 
Although the full classification of homogeneous 3-graphs was not our primary objective, 
we believe that our analysis significantly improves our understanding of these structures and can be helpful for classification.

Our result does not fully settle the status of  the \Wqo Dichotomy Conjecture.
Dropping the (mild) strong homogeneity assumption,
as well as extending the proof to arbitrarily many symmetric binary relations,
is left for future work.

%
%


\para{Related research}
Net models similar to Petri nets with data have been continuously proposed since the 80s, including,
among the others, high-level Petri nets~\cite{GL81}, colored Petri nets~\cite{J81},
unordered and ordered data nets~\cite{LNORW07}, $\nu$-Petri nets~\cite{RF11},
and constraint multiset rewriting~\cite{CDLMS99,D02a,D02b}.
Petri nets with data can be also considered as a reinterpretation of the classical definition of Petri nets in sets with atoms~\cite{BKL11full,BKLT13}, 
where one allows for \emph{orbit-finite} sets of places and transitions instead of just finite ones.
The decidability and complexity of standard problems for Petri nets over various data domains has attracted a lot of attention recently, see for instance~\cite{HLLLST16,LNORW07,LazicSchmitz16,R17,RF11}.

{\Wqo}s are important for their wide applicability in many areas.
Studies of {\wqo}s similar to ours, in case of graphs, have been conducted by Ding~\cite{Ding92} and Cherlin~\cite{Cherlin11};
their framework is different though, as they concentrate on subgraph ordering while we investigate \emph{induced} subgraph (or substructure) ordering.

\para{Outline}
We start by defining the model of Petri nets with data (in Section~\ref{sec:pn}), formulate our results
(Theorems~\ref{thm:main} and \ref{thm:core} in Section~\ref{sec:results}) and argue how Theorem~\ref{thm:core} implies Theorem~\ref{thm:main}
thus confirming the \Wqo Dichotomy Conjecture (in Section~\ref{sec:core-implies-main}).
Then the main technical part of the paper, spanning over Sections~\ref{sec:proof-core}--\ref{sec:coreA}, is devoted exclusively to the proof
of Theorem~\ref{thm:core}. This part is independent of the model of Petri nets with data, and conducts a complex and delicate analysis of consequences of
the amalgamation property for strongly homogeneous 3-graphs.


\section{Petri nets with homogeneous data} \label{sec:pn}


In this section we provide all necessary preliminaries.
Our setting follows~\cite{Las16} and is parametric in the underlying logical structure $\A$, 
which constitutes a \emph{data domain}.
Here are some example data domains:

\begin{itemize}
\item \emph{Equality data domain}: natural numbers with equality $\Aeq = (\nat, =)$.
Note that any other countably infinite set 
could be used instead of natural numbers, as the only available relation is equality.
\item \emph{Total order data domain}: rational numbers with the standard order $\Ato = (\Q, \leq)$.
Again, any other countably infinite dense total order without extremal elements could be used instead.
\item \emph{Nested equality data domain}: $\Aneq = (\nat^2, =_1, =)$ where $=_1$ is equality on 
the first component: $(n, m) =_1 (n', m')$ if $n = n'$ and $m\neq m'$. Essentially, $\A$ is an equivalence relation with infinitely many
infinite equivalence classes.
\end{itemize}
Note that two latter structures essentially extend the first one: in each case the equality is either present explicitly,
or is definable.
From now on, we always assume a fixed countably infinite
relational structure $\A$ with equality over a finite vocabulary (signature) $\Sigma$.

\para{Petri nets with data}

Petri nets with data are exactly like classical place/tran\-si\-tion Petri nets, 
except that tokens carry data values and these data values must satisfy a
prescribed constraint when a transition is executed.
Formally, a \emph{Petri net with data $\atoms$} consists of two disjoint finite sets $P$ (places) and $T$ (transitions), 
the arcs $A \subseteq P{\times} T \cup  T{\times} P$, and two labelings:
\begin{itemize}
\item arcs are labelled by pairwise disjoint finite nonempty sets of variables; 
\item transitions are labelled by first-order formulas over the vocabulary $\Sigma$ of $\atoms$, such that free variables
of the formula labeling a transition $t$ belong to the union of labels of the arcs incident to $t$. 
\end{itemize}
\begin{example} \rm
For illustration consider a Petri net with equality data $\Aeq$, 
with two places $p_1, p_2$ and two transitions $t_1, t_2$ depicted on Fig.~\ref{fig-udpn}.
\begin{figure}[bpt]
\centering
\newcommand{\netBefore}{%
\begin{tikzpicture}[xscale=2]
    \node (P) [PNBigPlace, label=above:$p_1$] at (1, 0) {\threeTokensLabeled{3}{cA}{1}{cB}{1}{cB}};
    \node (Q) [PNBigPlace, label=above:$p_2$] at (3, 0) {\twoTokensLabeled{5}{cC}{3}{cA}};

    \node (T) [PNTransition, label=above:$t_1$] at (0, 0) {};
    \node (S) [PNTransition, label=above:$t_2$] at (2, 0) {};

    \node[rectangle,draw, above=0.7cm of T] (phi1) {$x_1 \neq x_2$};
    \draw[-] ($(T.north)+(0,0.4)$) -- (phi1);
    \node[rectangle,draw, above=0.7cm of S] (phi2) {$y_1 = y_2 \neq z_3 \ \land \ z_1 = z_2$};
    \draw[-] ($(S.north)+(0,0.4)$) -- (phi2);

    \begin{pgfonlayer}{background}
        \coordinate (PEast) at (P.east);
        \draw[FArrow2, transform canvas={yshift= 0.1cm}] (P.east) to[bend left ] node[above]{$y_1$}  (S.west);
        \draw[FArrow2, transform canvas={yshift=-0.1cm}] (S.west) to[bend left ] node[below]{$z_1$}  (P.east);
        \draw[FArrow2, transform canvas={yshift= 0.1cm}] (Q.west) to[bend right] node[above]{$y_2$}  (S.east);
        \draw[FArrow2, transform canvas={yshift=-0.1cm}] (S.east) to[bend right] node[below]{$z_2, z_3$}  (Q.west);
        \draw[FArrow2] (T.east) to node[above]{$x_1, x_2$}  (P.west);
    \end{pgfonlayer}
\end{tikzpicture}%
}
\newcommand{\netAfter}{%
\begin{tikzpicture}[xscale=2]
    \node (P) [PNBigPlace, label=above:$p_1$] at (1, 0) {\threeTokensLabeled{4}{cY}{1}{cB}{1}{cB}};
    \node (Q) [PNBigPlace, label=above:$p_2$] at (3, 0) {\threeTokensLabeled{4}{cY}{5}{cC}{6}{cX}};

    \node (T) [PNTransition, label=above:$t_1$] at (0, 0) {};
    \node (S) [PNTransition, label=above:$t_2$] at (2, 0) {};

    \begin{pgfonlayer}{background}
        \coordinate (PEast) at (P.east);
        \draw[FArrow2, transform canvas={yshift= 0.1cm}] (P.east) to[bend left ] node[above]{$y_1$}  (S.west);
        \draw[FArrow2, transform canvas={yshift=-0.1cm}] (S.west) to[bend left ] node[below]{$z_1$}  (P.east);
        \draw[FArrow2, transform canvas={yshift= 0.1cm}] (Q.west) to[bend right] node[above]{$y_2$}  (S.east);
        \draw[FArrow2, transform canvas={yshift=-0.1cm}] (S.east) to[bend right] node[below]{$z_2, z_3$}  (Q.west);
        \draw[FArrow2] (T.east) to node[above]{$x_1, x_2$}  (P.west);
    \end{pgfonlayer}
\end{tikzpicture}%
}
\begin{tikzpicture}
    \node (before) at (0, 0) {\netBefore};
    \node (after) at (3, -2) {\netAfter};
    \draw[FlowArrow2,draw=cGray] (4,-0.505) to[out=0, in=90] node[auto] {execution of $t_2$} (5,-1.5);
\end{tikzpicture}
\caption{A Petri net with equality data, with places $P = \set{p_1, p_2}$ and transitions $T = \set{t_1, t_2}$.  
In the shown configuration, $t_2$ can be fired: consume two tokens carrying $3$, and put, e.g., token carrying $4$ on $p_1$ 
and tokens carrying $4, 6$ on $p_2$.
\label{fig-udpn}}
\end{figure}
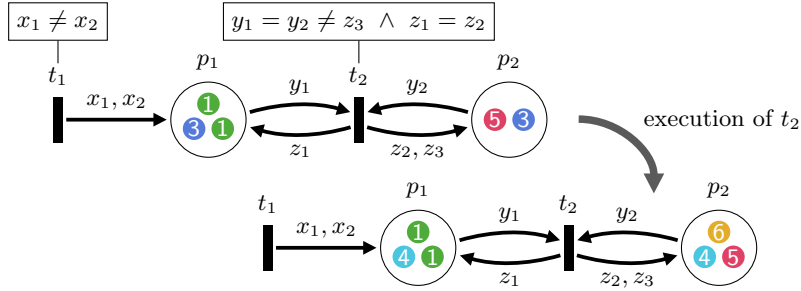
%
Transition $t_1$ outputs two tokens with arbitrary but distinct data values onto place $p_1$. Transition $t_2$ inputs two tokens with the same data value, say $a$, one from $p_1$ and one from $p_2$, and outputs 3 tokens: two tokens with arbitrary but equal data values, say $b$, one onto $p_1$ and the other onto $p_2$; and one token with a data value $c \neq a$ onto $p_2$.
Note that the transition $t_2$ does not specify whether $b=a$, or $b = c$, or $b \neq a,c$,
and therefore all three options are allowed.
Variables $y_1, y_2$ can be considered as input variables of $t_2$, while variables $z_1, z_2, z_3$ can be considered as output ones; analogously, $t_1$ has no input variables, and two output ones $x_1, x_2$.
\end{example}

The formal semantics of Petri nets with data is given by translation to multiset rewriting.
Given a set $X$, finite or infinite, a finite multiset over $X$ is a finite (possibly empty) partial function from $X$ to positive integers.
In the sequel let $\multisets{X}$ stand for the set of all finite multisets over $X$. 
A \emph{multiset rewriting system} $(\places, \transitions)$ consists of a set $\places$ together with a set of rewriting rules:
\[
\transitions \quad \subseteq \quad \multisets{\places} \times \multisets{\places}.
\]
Configurations $C \in \multisets{\places}$ are finite multisets over $\places$, and 
the step relation $\trans$ between configurations is defined as follows: for every $(I, O) \in \transitions$ 
and every $M \in \multisets{\places}$, there is the step ($+$ stands for multiset union)  
\[
M + I \ \trans \ M + O.
\]
For instance, a classical Petri net induces a multiset rewriting system where $\places$ is the set of places, and $\transitions$ is essentially the set of transitions, both $\places$ and $\transitions$ being finite. Configurations correspond to markings.

A Petri net with data $\atoms$ induces a multiset rewriting system $(\places, \transitions)$, 
where $\places = P \times \atoms$ and thus is infinite.
Configurations are finite multisets over $P\times \atoms$ (cf.~a configuration depicted in Fig.~\ref{fig-udpn}).
The rewriting rules $\transitions$ are defined as
\[
\transitions \quad = \quad \bigcup_{t\in T} \transitions_t,
\]
where the relation
$
\transitions_t \subseteq  \multisets{\places} \times \multisets{\places}
$
is defined as follows:
Let $\phi$ denote the formula labeling the transition $t$,  and let $X_i$, $X_o$ be the sets of input and output variables of $t$.
Every valuation $v_i : X_i \to \atoms$ gives rise to a multiset $M_{v_i}$ over $\places$, where $M_{v_i}(p, a)$ is the
(positive) number of variables $x$ labeling the arc $(p, t)$ with $v_i(x) = a$. Likewise for valuations $v_o : X_o \to \atoms$.
Then let 
\[
\transitions_t = \setof{(M_{v_i}, M_{v_o})}{v_i : X_i \to \atoms, \ v_o : X_o \to \atoms, \ v_i, v_o \vDash \phi}.
\]
Like $\places$, the set of rewriting rules $\transitions$ is infinite in general.

As usual, for a net $N$ and its configuration $C$, a run of $(N, C)$ is a maximal, finite or infinite, 
sequence of steps starting in $C$. 

\begin{remark}
As for classical Petri nets, an essentially equivalent definition can be given in terms of vector addition systems
(such a variant has been used in~\cite{HLLLST16} for equality data).
Petri nets with equality data are equivalent to (even if defined differently than) unordered data Petri nets of~\cite{LNORW07},
and Petri nets with total ordered data are equivalent to ordered data Petri nets of~\cite{LNORW07}.
\end{remark}

\para{Effective homogeneous structures}

For two relational $\Sigma$-structures $\str{A}$ and $\str{B}$ we say that $\str{A}$ \emph{embeds} in $\str{B}$, written $\str{A} \embedsin \str{B}$, if $\str{A}$ is isomorphic to an induced substructure of $\str{B}$,
i.e.,~to a structure obtained by restricting $\str{B}$ to a subset of its domain. This is witnessed by an injective function\footnote{We deliberately 
do not distinguish a structure $\str{A}$ from its domain set.} $h : \str{A} \to \str{B}$, 
which we call \emph{embedding}.
We write $\age{\A} = \setof{\str{A} \text{ a finite structure}}{\str{A} \embedsin \A}$ for the class of all finite structures that embed into $\A$,
and call it \emph{the age of $\A$}.

Homogeneous structures are defined through their automorphisms: $\A$ is homogeneous if every isomorphism of 
two of its finite induced substructures
extends to an automorphism of $\A$. In the sequel we will also need an equivalent definition using amalgamation.
An \emph{amalgamation instance} consists of  three structures $\str{A}, \str{B}_1, \str{B}_2 \in \age{\A}$ and two embeddings 
$h_1 : \str{A}\to\str{B}_1$ and $h_2 : \str{A} \to \str{B}_2$.
A solution of such instance is a structure $\str{C} \in \age{\A}$ and two embeddings 
$g_1 : \str{B}_1 \to \str{C}$ and $g_2 : \str{B}_2 \to \str{C}$ such that
$g_1 \circ h_1 = g_2 \circ h_2$ (we refer the reader to~\cite{Fraissebook} for further details). 
Intuitively, $\str{C}$ represents 'gluing' of $\str{B}_1$ and $\str{B}_2$ along the partial bijection
$h_2 \circ ({h_1}^{-1})$.
In this paper we will restrict ourselves to \emph{singleton} amalgamation instances, where only one element of $\str{B}_1$ is outside
of $h_1(\str{A})$, and likewise for $\str{B}_2$.
\begin{wrapfigure}[4]{!h}{3.1cm}
    \vspace{-7mm}
    \centering
    \begin{tikzpicture}[scale=\scaleOfPictures]
        \node (N1) at (0, 0) {
        \begin{tikzpicture}[scale=1.4*\scaleOfPictures]
            \path (-1,-0.2) rectangle (1, 1.2);
            \node[ Node] at (0, 0) (n1) {};
            \node[ Node] at (0, 1) (n2) {};
            \node[ANode] at (-0.866, 0.5) (n3) {};
            \node[ANode] at ( 0.866, 0.5) (n4) {};

            \begin{pgfonlayer}{background}
                \lEdge[cA]{n1}{}{n2}
                \lEdge[cA]{n1}{}{n3}
                \lEdge[cA, swap]{n2}{}{n3}
                \lEdge[cA, swap]{n1}{}{n4}
                \lEdge[cA]{n2}{}{n4}
                \lEdge[AEdge, bend left=20]{n3}{}{n4}
            \end{pgfonlayer}
        \end{tikzpicture}
        };
    \end{tikzpicture}
\end{wrapfigure}

An example singleton amalgamation instance is shown on the right,
where the graph $\str{A}$ consists of the single edge connecting two middle black nodes, 
$\str{B}_1$ is the left triangle, and $\str{B}_2$ the right one. The dashed line represents an edge that may 
(but does not have to) appear in a solution.
$\A$ is homogeneous if, and only if 
every amalgamation instance has a solution; in such case we say that $\age{\A}$ has the \emph{amalgamation property}.
See~\cite{survey} for further details.

A solution $\str{C}$ necessarily satisfies $g_1(h_1(\str{A})) = g_2(h_2(\str{A})) \subseteq g_1(\str{B}_1) \cap g_2(\str{B}_2)$;
a solution is \emph{strong} if $g_1(h_1(\str{A})) = g_1(\str{B}_1) \cap g_2(\str{B}_2)$.
Intuitively, this forbids additional gluing of $\str{B}_1$ and $\str{B}_2$ not specified by
the partial  bijection $h_2 \circ ({h_1}^{-1})$. 
If every amalgamation instance has a strong solution we call $\A$ \emph{strongly homogeneous}.
This is a mild restriction, as homogeneous structures are typically strongly homogeneous.

The equality, nested equality, and total order data domains are strongly homogeneous structures. For instance, in the latter case 
finite induced substructures are just finite total orders, which satisfy the strong amalgamation property.  
Many other natural classes of structures have the amalgamation property: finite graphs, finite directed graphs, finite partial orders, finite tournaments, etc. 
Each of these classes is the age of a strongly homogeneous relational structure, namely
the \emph{universal graph} (called also random graph),  
the universal directed graph, the universal partial order, the universal tournament, respectively.
Examples of homogeneous structures abound~\cite{survey}.

Homogeneous structures admit quantifier elimination: every first-order formula is equivalent to (i.e., defines the same set as) a quantifier-free one~\cite{survey}.
Thus it is safe to assume that formulas labeling transitions are quantifier-free.

\para{Admitting \wqo}

A \emph{well quasi-order} (\wqo) is a well-founded quasi-order with no infinite antichains.
For instance, finite multisets $\multisets{P}$ over a finite set $P$, ordered by multiset inclusion $\submultiseteq$, are a \wqo.
Another example is the embedding quasi-order $\embedsin$ in $\age{\Ato}$ (= all finite total orders) isomorphic to the ordering of natural numbers.
Finally, the embedding quasi-order in $\age{\A}$ can be lifted from finite structures to finite structures \emph{labeled}  
by elements of some ordered set $(X, \leq)$: for two such labeled structures $a : \str{A} \to X$ and $b: \str{B} \to X$ we define
$
a \embedsinpar{X} b
$
if some embedding $h : \str{A} \to \str{B}$ satisfies $a(x) \leq b(h(x))$ for every $x\in\str{A}$.
We say that $\A$ \emph{admits \wqo} when for every \wqo $(X, \leq)$, the lifted embedding order $\embedsinpar{X}$
is a \wqo too.
For instance, $\Ato$ admits \wqo by Higman's lemma.

Note the natural correspondence between configurations of a Petri net with data $\A$, and structures $\str{A} \in \age{\A}$
labeled by finite multisets over the set $P$ of places:
\[
\multisets{P \times \A} \quad \equiv \quad \setof{m : \str{A} \to \multisets{P}}{\str{A}\in\age{\A}}.
\]
Thus the lifted embedding quasi-order $\embedsinpar{\multisets{P}}$ is a quasi-order on configurations. 

\para{Standard decision problems}

A Petri net with data $N$ can be finitely represented by finite sets $P, T, A$ and appropriate labelings with variables and formulas.
Due to the homogeneity of $\A$, 
a configuration $C$ can be represented (up to automorphism of $\A$) by a structure $\str{A} \in \age{A}$ labeled by 
$\multisets{P}$.
We can thus consider the classical decision problems that input Petri nets with data $\A$, like the \emph{termination problem}:
does a given $(N, C)$ have only finite runs? The data domain is considered as a parameter, 
and hence itself does not constitute part of input.
Another classical problem is the \emph{place non-emptiness problem} (markability): given $(N, C)$ and a place $p$ of $N$, 
does $(N, C)$ admit a run that puts at least one token on place $p$?
One can also define the appropriate variants of the coverability problem (equivalent to the place non-emptiness problem),
the boundedness problem, the inevitability problem, etc. (see~\cite{Las16} for details).
All the decision problems mentioned above we jointly call \emph{standard problems}.

A $\Sigma$-structure $\A$ is called \emph{effective} if
the following \emph{age problem} for $\A$ is decidable: given a finite $\Sigma$-structure $\str{A}$, decide whether $\str{A} \embedsin \atoms$.
If $\A$ admits \wqo then application of the framework of well-structured transition systems~\cite{FS01}
to the lifted embedding order $\embedsinpar{\multisets{P}}$ yields:
\begin{theorem}[\cite{Las16}]
If an effective homogeneous structure $\A$ admits \wqo then
all the standard problems are decidable for Petri nets with data $\A$.
\end{theorem}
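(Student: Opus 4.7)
The plan is to cast Petri nets with data $\A$ as a well-structured transition system (WSTS) in the sense of~\cite{FS01} with respect to the lifted embedding order on configurations, and then invoke the standard decidability results for WSTS.

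First, I would identify configurations with the labeled-structure presentation: under the bijection
\[
\multisets{P \times \A} \ \equiv \ \setof{m : \str{A} \to \multisets{P}}{\str{A}\in\age{\A}}
\]
noted earlier, the step relation $\trans$ of the induced multiset rewriting system can be viewed as a relation on labeled structures, and the order $\embedsinpar{\multisets{P}}$ becomes the candidate wqo on the state space. Since $P$ is finite, $(\multisets{P}, \submultiseteq)$ is a \wqo by Dickson's lemma, and because $\A$ admits \wqo by assumption, the lifted quasi-order $\embedsinpar{\multisets{P}}$ is itself a \wqo on configurations.

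The second, and most delicate, step is to establish compatibility of $\trans$ with $\embedsinpar{\multisets{P}}$, i.e., that whenever $C_1 \trans C_1'$ via some transition $t$ with input/output valuations $v_i, v_o$, and $C_1 \embedsinpar{\multisets{P}} C_2$ via an embedding $h$, then $C_2 \trans C_2'$ for some $C_2' \sqsupseteq C_1'$ (in the $\embedsinpar{\multisets{P}}$ order). The key point here is homogeneity of $\A$: the embedding $h$ identifies the substructure of $\A$ underlying $C_1$ with a substructure of the one underlying $C_2$, and by homogeneity this isomorphism extends to an automorphism $\sigma$ of $\A$. Composing $v_i$ and $v_o$ with $\sigma$ produces valuations that still satisfy the (quantifier-free, by homogeneity and quantifier elimination) formula $\phi$ labeling $t$; the resulting step in $C_2$ then produces the required $C_2'$. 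This is exactly strong compatibility in the WSTS sense.

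Third, I would verify the effectiveness ingredients required by the WSTS algorithms: decidability of $\embedsinpar{\multisets{P}}$ on finitely presented configurations, and computability of a finite basis of $\embedsinpar{\multisets{P}}$-predecessors for each transition. Both reduce to two effective tasks on $\A$: deciding membership in $\age{\A}$ (the age problem, decidable by assumption), and deciding satisfiability of quantifier-free $\Sigma$-formulas in $\A$ (reducible to age membership by inspecting realized types on finitely many variables, using quantifier elimination). With these primitives, enumerating the finitely many minimal pre-images of a given upward-closed target constraint is routine, since each transition only manipulates a bounded number of data values. Finally, with a \wqo, strong compatibility, and effective pred-basis in place, the generic WSTS algorithms of~\cite{FS01} directly solve termination, boundedness, coverability, place non-emptiness and evitability, yielding decidability of all the standard problems.

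The main obstacle I expect is the effectiveness argument for pred-basis computation: one must argue that finitely many ``witness'' structures from $\age{\A}$ suffice to represent every minimal predecessor configuration, and that these can be enumerated using only the age oracle. Homogeneity is essential here, since it guarantees that local consistency (over the finite set of variables occurring in $t$ together with the support of the target configuration) lifts to a global realization in $\A$.
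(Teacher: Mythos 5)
Your proposal is correct and follows exactly the approach the paper indicates: the paper does not prove this theorem in detail but cites~\cite{Las16} and, in the sentence immediately preceding the statement, notes that it follows by applying the well-structured transition systems framework of~\cite{FS01} to the lifted embedding order $\embedsinpar{\multisets{P}}$. Your three ingredients---that $\embedsinpar{\multisets{P}}$ is a \wqo on configurations, that the step relation is compatible with it via the homogeneity/automorphism-extension argument, and that the age oracle plus quantifier elimination give effective pred-basis computation---are precisely the content of the argument in~\cite{Las16}.
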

For homogeneous undirected (and also directed) graphs, the \Wqo Dichotomy Conjecture is easily shown 
by inspection of the classifications thereof~\cite{Lachlan80,Cherlinbook}.
We state in Theorem~\ref{thm:graphs} a core fact underlying the dichotomy, for future use.
A \emph{path} is a finite graph with  nodes $\set{v_1, \ldots, v_n}$ whose only edges are pairs $\set{v_i, v_{i+1}}$.
The nodes $v_1, v_n$ are \emph{ends} of the path, and $n$ is its length. 
\begin{theorem}[follows by~\cite{Lachlan80,Cherlinbook}] \label{thm:graphs}
A homogeneous graph $\A$ either admits \wqo, or $\A$ contains arbitrarily long paths as induced subgraphs, 
or the complement of $\A$ contains arbitrarily longh paths as induced subgraphs.
\end{theorem}
\noindent
Theorem~\ref{thm:graphs} implies the conjecture for graphs (the proof is in Section~\ref{sec:core-implies-main}):
\begin{corollary}
A homogeneous graph $\A$ either admits \wqo, or 
all standard problems are undecidable for Petri nets with data $\A$.
\end{corollary}


\section{Results} \label{sec:results}

A 3-graph $\G = (V, C_1, C_2, C_3)$ consists of a set $V$ 
and three irreflexive symmetric binary relations $C_1, C_2, C_3 \subseteq V^2$
such that every pair of distinct elements of $V$ belongs to exactly one of the three relations.
Any graph, including $\Aeq$ and $\Aneq$, can be seen as a 3-graph. 
In the sequel we treat a 3-graph as a clique with 3-colored edges.  

\begin{example} \label{ex:strongly} \rm
We provide an example of a strongly homogeneous 3-graph.
A 3-vertex 3-graph we call a \emph{triangle}; here are three triangles, where colors red, green and blue correspond to
relations $C_1, C_2$ and $C_3$, respectively:
    \begin{center}
        \begin{tikzpicture}[scale=\scaleOfPictures]
            \node[anchor=east, inner sep=0] at ( 0, 0) {\labeledTriangle{cA}{}{cA}{}{cC}{}};
            \node[anchor=east, inner sep=0] at ( 3, 0) {\labeledTriangle{cA}{}{cB}{}{cB}{}};
            \node[anchor=east, inner sep=0] at (6, 0) {\labeledTriangle{cA}{}{cA}{}{cA}{}};
        \end{tikzpicture}
    \end{center}
The three triangles, treated as forbidden patterns, define an infinite homogeneous 3-graph $\G$ as follows.
Consider the class $\cal C$ of all finite 3-graphs that do not embed any of the three triangles shown above.
The class has the amalgamation property (cf.~Appendix in \cite{survey}) -- it is not difficult to see that 
every singleton amalgamation instance $\str{A}, \str{B}_1, \str{B}_2$ can be solved using a green or red edge.
If the common part $\str A$ contains at most one element this follows by inspection of forbidden triangles. 
Otherwise,
supposing towards contradiction that the instance disallows either red or green edge as a solution, which means that
the instance contains necessarily the following pattern:
%
\begin{center}
    \begin{tikzpicture}[scale=\scaleOfPictures]
        \node (N1) at (0, 0) {
        \begin{tikzpicture}[scale=1.4*\scaleOfPictures]
            \path (-1,-0.2) rectangle (1, 1.2);
            \node[ Node] at (0, 0) (n1) {};
            \node[ Node] at (0, 1) (n2) {};
            \node[ANode] at (-0.866, 0.5) (n3) {};
            \node[ANode] at ( 0.866, 0.5) (n4) {};

            \begin{pgfonlayer}{background}
                \lEdge[cGray]{n1}{}{n2}
                \lEdge[cA]{n1}{}{n3}
                \lEdge[cA, swap]{n2}{}{n3}
                \lEdge[cB, swap]{n1}{}{n4}
                \lEdge[cA]{n2}{}{n4}
                \lEdge[AEdge, bend left=20]{n3}{}{n4}
            \end{pgfonlayer}
        \end{tikzpicture}
        };
    \end{tikzpicture}
\end{center}
%
with black representing some unknown color, we observe that any choice of color for the black edge leads to a forbidden triangle,
a contradiction. 
In consequence, there is a homogenoeus 3-graph $\G$ with $\age{\G} = \cal C$ (cf.~\cite{Fraissebook}).
One easily verifies that $\G$ is \emph{strongly} homogeneous; indeed, every singleton amalgamation instance
that admits a glueing solution admits also a solution where the green color is used instead of glueing, simply because all the following
triangles are not forbidden:
    \begin{center}
        \begin{tikzpicture}[scale=\scaleOfPictures]
            \node[anchor=east, inner sep=0] at ( 0, 0) {\labeledTriangle{cB}{}{cB}{}{cB}{}};
            \node[anchor=east, inner sep=0] at ( 3, 0) {\labeledTriangle{cB}{}{cA}{}{cA}{}};
            \node[anchor=east, inner sep=0] at (6, 0) {\labeledTriangle{cB}{}{cC}{}{cC}{}};
        \end{tikzpicture}
    \end{center}
\end{example}
Our main result confirms the \Wqo Dichotomy Conjecture for strongly homogeneous 3-graphs:
\begin{theorem}  \label{thm:main}
A strongly homogeneous 3-graph $\G$ either admits \wqo, or 
all standard problems are undecidable for Petri nets with data $\G$.
\end{theorem}
The core technical result of the paper is Theorem~\ref{thm:core} below.
\begin{theorem}  \label{thm:core}
A strongly homogeneous 3-graph $\G$ either admits \wqo, or for some $i, j \in \set{1, 2, 3}$ (not necessarily distinct)
the graph $(V, C_i \cup C_j)$ contains arbitrarily long paths as induced subgraphs. 
\end{theorem}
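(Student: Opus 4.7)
The plan is to prove the contrapositive: assume that for every $i, j \in \set{1, 2, 3}$ (including $i = j$) all induced paths in $(V, C_i \cup C_j)$ have length bounded by some constant, and conclude that $\G$ admits \wqo. Note that the diagonal case $i = j$ already forces each monochromatic reduct $(V, C_i)$ to have bounded induced paths, while the off-diagonal cases control how two colours can alternate.

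My first step would be to exploit strong homogeneity via the (strong) amalgamation property to narrow down the age $\age{\G}$. The key local move is as follows: given a finite induced substructure $\str{A} \embedsin \G$ and a length bound $L$ on induced paths in some $(V, C_i \cup C_j)$, I would attempt to extend $\str{A}$ by fresh vertices using amalgamation so as to prolong a $(C_i \cup C_j)$-path along a prescribed edge-colouring pattern. Whenever the extension is compatible with the strong amalgamation property of $\age{\G}$, the bound $L$ is exceeded, giving a contradiction; the remaining cases pinpoint very rigid \emph{forbidden} singleton amalgamation instances. Collecting these forbidden instances across all choices of $i,j$ should force each colour $C_i$ to realise a highly restricted relation --- typically an equivalence-like relation (disjoint cliques), or a bipartite-like split, or a degenerate pattern --- on blocks defined by the other colours.

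The second step would be to assemble these colourwise rigidities into a global canonical form for $\G$. The expected outcome is that $\G$ is either finite, or built hierarchically from finitely many levels of equivalence-like relations (analogous to $\Aneq$), with the remaining colour(s) completely determined across blocks. For each such canonical form I would verify that the lifted embedding order $\embedsinpar{X}$ on $X$-labelled finite induced substructures is a \wqo, by reducing to Higman's lemma --- on multisets of blocks at the bottom level --- and iterating the reduction up the hierarchy, in the style of a Kruskal-type argument for labelled trees.

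The main obstacle is the structural case analysis in the first step. Strongly homogeneous 3-graphs are not yet classified, and the configuration space of small induced substructures is considerably richer than for graphs or tournaments, so a brute enumeration is infeasible; one must find the right intermediate invariants (e.g.\ what colour patterns the neighbourhood of a single vertex can realise, and how those interact under a one-point amalgam) that simultaneously block path prolongation and witness a useful global regularity. I would expect the bulk of the proof to be a delicate, combinatorial case split on these neighbourhood-types, with each ``non-rigid'' case closed by an explicit iterative amalgamation that produces arbitrarily long induced paths in a specific $(V, C_i \cup C_j)$.
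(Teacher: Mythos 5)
Your proposal is a roadmap, not a proof: it correctly identifies the overall shape of the argument — use strong amalgamation to either prolong induced paths in some $(V, C_i \cup C_j)$ or force a rigid combinatorial structure, then verify \wqo for the rigid cases by a Higman-style reduction — but every load-bearing step is left as ``would attempt'', ``should force'', or ``I would expect''. The passage where you acknowledge that ``the main obstacle is the structural case analysis'' and that ``the bulk of the proof'' would be a ``delicate combinatorial case split'' is exactly where the paper's entire technical content lies (Lemma~\ref{lemma:withBranches} plus Sections~\ref{sec:WQOtwo}--\ref{sec:coreA}). You have not supplied any of that case split, nor a substitute for it, so this cannot yet be judged as a correct alternative proof.

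Beyond the lack of execution, two concrete ideas are missing that matter for making the plan viable. First, the paper's entry point is Ramsey's theorem: every infinite 3-graph contains arbitrarily large monochromatic cliques, and this is the seed for the entire dichotomy (it is what produces the $\c$-cliques in condition~$\spadesuit$ and drives Lemma~\ref{lemma:withBranches}). Your sketch has no analogous handle; attempting ``prolong a $(C_i \cup C_j)$-path along a prescribed pattern'' starting from an arbitrary $\str{A}$ does not get off the ground without a supply of large homogeneous subconfigurations to amalgamate against (this is precisely why the $\kb$ resource of isolated vertices, and later isolated edges and $2$-paths, appears throughout Section~\ref{sec:coreA}). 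Second, your predicted canonical form — ``hierarchically built from finitely many levels of equivalence-like relations, analogous to $\Aneq$'' — is too optimistic. The rigid cases the paper actually encounters are a sum of disjoint $\a\b$-cliques (Case~B, which then re-invokes Theorem~\ref{thm:graphs} on the pieces), and a finite-$k$-partite structure (Case~C), where the bipartite cross-graphs are controlled via the classification of homogeneous bipartite graphs (Theorem~\ref{theorem:bipartiteClassification}) and the compatibility Lemma~\ref{lemma:compatibleBijections}; the \wqo verification then goes through the automorphism-swapping Lemma~\ref{lemma:FamilyOfIdenticalSlices} on finite ``slices'', not a Kruskal-type labelled-tree argument. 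These are not hierarchical equivalence relations, and the machinery needed to tame them (especially the non-trivial homogeneity of the restricted bipartite pieces, Lemma~\ref{lemma:kPartiteRestrictedIsHomoheneous}) is absent from your plan. Finally, the hard direction — that failure of rigidity forces arbitrarily long induced paths — is not a single amalgamation move but an inductive cascade (Corollary~\ref{corollary:addingVertices} $\to$ Corollary~\ref{corollary:canAddP2xA} $\to$ Lemma~\ref{lemma:addingPPP} $\to$ Lemma~\ref{lemma:longPaths}), each step building longer and longer fragments; nothing in your sketch anticipates that this has to be staged, or why strong (as opposed to ordinary) amalgamation is indispensable at specific points (e.g.\ Instance~\ref{lemma:addOneA}.2).
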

We prove that Theorem~\ref{thm:core} implies Theorem~\ref{thm:main} in the next section.  
Then, in the rest of the paper we concentrate solely on the proof of Theorem~\ref{thm:core}.
%

\begin{wrapfigure}[11]{H}{5cm}
    \vspace{-0.7cm}
    \centering
    \tikzstyle{Vert} = [Edge, draw=cC]
    \tikzstyle{Horiz} = [Edge, draw=cB]
    \begin{tikzpicture}[scale=1.2*\scaleOfPictures]
        \foreach \x in {1,2,3,4} {
            \foreach \y in {1,2,3,4} {
            \node[Node, anchor=center] at (\x, \y) {};
            }
        }
        \foreach \n in {1,2,3,4} {
        \node[anchor=center] at (\n, 0.6) {$\vdots$};
        \node[anchor=center] at (4.5, \n) {$\dots$ };
        }

        \begin{pgfonlayer}{background}
			\newcounter{cnt}
			\newcounter{cntb}
			\foreach \x in {1,2,3,4} {
				\foreach \y in {4,3,2,1} {
					\foreach \n in {1,2,3,4} {
						\setcounter{cnt}{\y-\x}
						
						\ifthenelse{\value{cnt} > 1}{
							\draw[Vert ] (\n,\x) to[bend left=30] (\n,\y);
							\draw[Horiz] (\x,\n) to[bend left=30] (\y,\n);
						}{}
						\ifthenelse{\value{cnt} = 1}{
							\draw[Vert ] (\n,\x) to[bend left =0] (\n,\y);
							\draw[Horiz] (\x,\n) to[bend right=0] (\y,\n);
						}{}
					}
				}
			}
        \end{pgfonlayer}

        \begin{pgfonlayer}{background}
            \draw[Selection, double distance=0.35cm, opacity=0.5] (1,1)
            \foreach \n in {1,2,3} {
            to[bend right=0] ++(1,0)  to[bend left=0]  ++(0, 1)
            }
            ;
            \foreach \n in {1,2,3} {
				\draw[Vert] (\n+1,\n) to[bend left =0] (\n+1,\n+1);
				\draw[Horiz] (\n,\n) to[bend right=0] (\n+1,\n);
			}
        \end{pgfonlayer}
    \end{tikzpicture}
    \label{fig:longpaths}
\end{wrapfigure}

\begin{example} \label{ex:main} \rm
For a quasi-order $(X, \leq)$, the multiset inclusion is defined as follows for $m, m' \in \multisets{X}$:
$m'$ is included in $m$ if $m'$ is obtained from $m$ by a sequence of operations, where each operation 
either removes some element, or replaces some element by a smaller one wrt.~$\leq$.
The structure $\Aeq = (\nat, =)$ admits \wqo.
Indeed, $\age{\Aeq}$ contains just finite pure sets, thus $\embedsinpar{X}$ is quasi-order-isomorphic to the multiset inclusion  
on $\multisets{X}$, and is therefore a \wqo whenever the underlying quasi-order $(X, \leq)$ is.
Similarly, $\Aneq = (\nat^2, =_1, =)$ also admits \wqo, as
$\embedsinpar{X}$ is quasi-order-isomorphic to the multiset inclusion on 
$\multisets{\multisets{X}}$.
\end{example}
%
\vspace{-1.2mm}
On the other hand, consider a 3-graph $(\nat^2, =_1, =_2, \neq_{12})$ where
$=_2$ is symmetric to $=_1$ and $(n, m) \neq_{12} (n', m')$ if $n \neq n'$ and $m \neq m'$.  
It refines $\Aneq$ and does not admit \wqo and hence satisfies the second case of Theorem~\ref{thm:core}.
Indeed, the graph $(\nat^2, =_1 \cup =_2)$ contains arbitrarily long paths 
of the shape presented on the right,
where the two colors depict $=_1$ and $=_2$, respectively, and lack of color corresponds to $\neq_{12}$.
Note that $(\nat^2, =_1, =_2, \neq_{12})$ is homogeneous but not strongly so.

Finally, the strongly homogenous 3-graph exhibited in Example~\ref{ex:strongly} also belongs to the second case of
Theorem~\ref{thm:core}. Indeed, no red-green triangle is forbidden and hence
the 3-graph contains a infinite red-green clique which contains,
as induced subgraphs, both arbitrarily long red paths and arbitrarily long green paths.
%

%


\section{Theorem~\ref{thm:core} implies Theorem~\ref{thm:main}}  \label{sec:core-implies-main}

Assume Theorem~\ref{thm:core} holds.
Towards proving Theorem~\ref{thm:main} consider an effective strongly homogeneous 3-graph 
$\A = (V, C_1, C_2, C_3)$ 
that does not admit \wqo and let $E = C_i \cup C_j \subseteq V^2$ given by Theorem~\ref{thm:core}.
%
Thus we know that the graph $(V, E)$ contains arbitrarily long paths.
We will demonstrate that Petri nets with data domain $\A$ can faithfully simulate computations of 2-counter machines.
To this aim we fix an arbitrary 
deterministic counter machine $\masz$ with two counters $c_1, c_2$, and states $Q$;
and construct a Petri net $N_\masz$ with data $\A$ that simulates the computation of 
$\masz$ starting in the initial configuration: initial state $q_\text{init}$ and the counter values $c_1 = c_2 = 0$. 
Places of the net will include
\[ \set{b_1, m_1, e_1, b_2, m_2, e_2, q, r} \cup Q \ \subseteq \ P \]
plus some further auxiliary ones.
In particular, every state of $\masz$ will have a corresponding place in $N$.
The idea is to represent a value $c_j = n$ by storing $n+2$ tokens carrying, as its values, nodes of a path of length $n+2$ 
in the graph $(V, E)$.
The two tokens carrying the ends of the path will be stored on places $b_j$ and $e_j$, respectively, while 
the remaining $n$ tokens will be stored on place $m_j$.
Simulation of a zero test amounts then to checking if the ends are related by an edge.
Simulation of a decrement amounts to replacing one end (say from place $e_j$) by its only neighbor from place $m_j$.
And simulation of an increment amounts to moving the token from $e_j$ to $m_j$, accompanied by 
production of a new token on place
$e_j$ carrying an arbitrary (guessed nondeterministically) value $v\in V$ 
not related by $E$ to any of the other tokens on places $b_j$ and $m_j$.

\paraemph{Zero test and decrement}
If $\masz$ does \emph{zero test} for $c_j$ in state $q$ and goes to $q'$,
the net $N_\masz$ has a transition $z_{j, q, q'}$ that inputs one token from $b_j$
and one token from $e_j$, checks that data values they carry are related by $E$, and puts back the same tokens to
the two places  (cf.~Fig.~\ref{fig-z}).
In addition, the transition $z_{j,q,q'}$ moves one token from place $q$ to $q'$, irrespectively of the data values it carries.
Similarly, \emph{decrement} of $c_j$ is performed by a transition $d_{j, q, q'}$ that inputs one token from $m_j$ and one token from
$e_j$, checks that data values they carry are related by $E$, and then puts back the former token to $e_j$ 
while discarding the latter one.
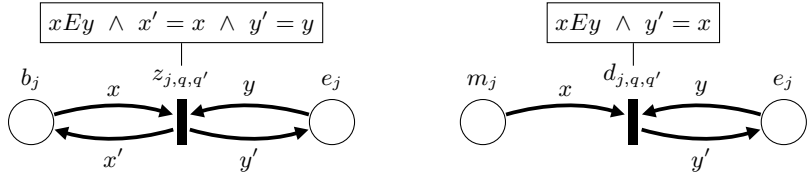
\begin{figure}[tbp]
\centering
\newcommand{\netOne}{%
\begin{tikzpicture}[xscale=2]
  \node (P) [PNPlace, label=above:$b_j$] at (1, 0) {};
  \node (Q) [PNPlace, label=above:$e_j$] at (3, 0) {};

  \node (S) [PNTransition, label={above:$z_{j,q,q'}$}] at (2, 0) {};
  \node[rectangle,draw, above=0.7cm of S] (phi2) {$x E y \ \land \ x' = x \ \land \ y' = y$};
  \draw[-] ($(S.north)+(0,0.4)$) -- (phi2);

  \begin{pgfonlayer}{background}
    \coordinate (PEast) at (P.east);
    \draw[FArrow2, transform canvas={yshift= 0.1cm}] (P.east) to[bend left ] node[above]{$x$}  (S.west);
    \draw[FArrow2, transform canvas={yshift=-0.1cm}] (S.west) to[bend left ] node[below]{$x'$}  (P.east);
    \draw[FArrow2, transform canvas={yshift= 0.1cm}] (Q.west) to[bend right] node[above]{$y$}  (S.east);
    \draw[FArrow2, transform canvas={yshift=-0.1cm}] (S.east) to[bend right] node[below]{$y'$}  (Q.west);
  \end{pgfonlayer}
\end{tikzpicture}%
}
\newcommand{\netTwo}{%
\begin{tikzpicture}[xscale=2]
  \node (P) [PNPlace, label=above:$m_j$] at (1, 0) {};
  \node (Q) [PNPlace, label=above:$e_j$] at (3, 0) {};

  \node (S) [PNTransition, label={above:$d_{j,q,q'}$}] at (2, 0) {};
  \node[rectangle,draw, above=0.7cm of S] (phi2) {$x E y \ \land \ y' = x$};
  \draw[-] ($(S.north)+(0,0.4)$) -- (phi2);

  \begin{pgfonlayer}{background}
    \coordinate (PEast) at (P.east);
    \draw[FArrow2, transform canvas={yshift= 0.1cm}] (P.east) to[bend left ] node[above]{$x$}  (S.west);
    \draw[FArrow2, transform canvas={yshift= 0.1cm}] (Q.west) to[bend right] node[above]{$y$}  (S.east);
    \draw[FArrow2, transform canvas={yshift=-0.1cm}] (S.east) to[bend right] node[below]{$y'$}  (Q.west);
  \end{pgfonlayer}
\end{tikzpicture}%
}
\begin{tikzpicture}
  \node (before) at (0, 0) {\netOne};
  \node (after) at (6, 0) {\netTwo};
\end{tikzpicture}
\caption{Transition $z_{j,q,q'}$ and $d_{j,q,q'}$ simulating zero test and decrement of counter $c_j$, respectively.
Places corresponding to control states of $\masz$ are omitted for simplicity.
\label{fig-z}}
\end{figure}

\paraemph{Increment}
Slightly more complicated is the simulation of increment of a counter $c_j$, 
as it involves creating a fresh value that must correctly extend, by one vertex, the path currently stored on places
$b_j, m_j, e_j$.
In the first step of the simulation, the net executes a transition $i_j$ that guesses a data value $v\in V$ related
by $E$ to the value $v_e$ carried by the single token on place $e_j$ but not to the value $v_b$ carried 
by the single token on place $b_j$;
the token from $e_j$ is moved to $m_j$ (and its copy is additionally put to an auxiliary place
$p$ for future use), and a new token carrying $v$ is put on $e_j$
(and its copy is additionally put to an auxiliary place $r$ for future use).
What remains to be checked in that $v$ has been guessed correctly by $i_j$, namely that 
$v$ is related by $E$ to none of the data values carried by tokens on $m_j$ except for $v_e$.
To this end the net performs a traversal through the path, in the direction from $v_e$ to $v_b$, 
in order to check the correctness of $v$.
The traversal is done by iterative execution of the transition $t_j$, depicted on Fig.~\ref{fig-c}, which uses
the places $p, r$ to store the current edge of the path in the course of traversal.
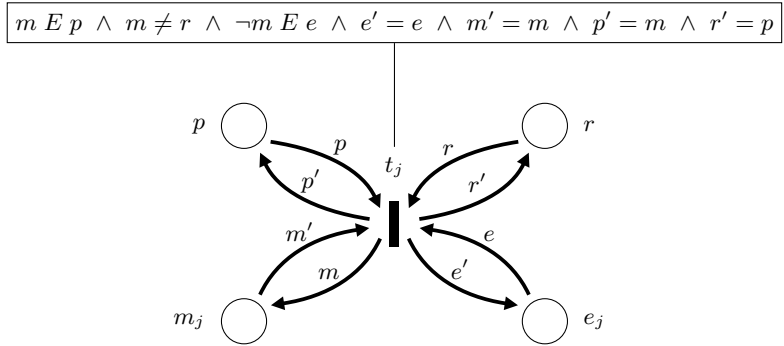
\begin{figure}[tbp]
\centering
\newcommand{\netThree}{%
\begin{tikzpicture}[xscale=2, yscale=1.3]
  \node (P1) [PNPlace, outer sep=0.1cm, label=left:$p$] at (0, 2) {};
  \node (P2) [PNPlace, outer sep=0.1cm, label=right:$r$] at (2, 2) {};
  \node (P3) [PNPlace, outer sep=0.1cm, label=left:$m_j$] at (0, 0) {};
  \node (P4) [PNPlace, outer sep=0.1cm, label=right:$e_j$] at (2, 0) {};

  \node (T) [PNTransition, label={above:$t_j$}, outer sep=0.2cm] at (1, 1) {};
  \node[rectangle,draw, above=1.9cm of T] (phi2) {$m \mathrel{E} p \ \land \ m \neq r  \ \land \  \neg m \mathrel{E} e  \ \land \  e' = e  \ \land \  m' = m  \ \land \  p' = m  \ \land \ r' = p$};
  \draw[-] ($(T.north)+(0,0.4)$) -- (phi2);

  \begin{pgfonlayer}{background}
    \coordinate (T1) at ($(T.west) + (0,0.1)$);
    \coordinate (T3) at ($(T.west) - (0,0.1)$);
    \coordinate (T2) at ($(T.east) + (0,0.1)$);
    \coordinate (T4) at ($(T.east) - (0,0.1)$);
    \draw[FArrow2, transform canvas={shift=(45:0.1cm)}] (P1.south east) to[bend left ] node[above]{$p$}  (T1);
    \draw[FArrow2, transform canvas={shift=(45:-0.1cm)}] (T1) to[bend left] node[above,pos=0.4]{$p'$}  (P1.south east);
    \draw[FArrow2, transform canvas={shift=(135:0.1cm)}] (P2.south west) to[bend right ] node[above]{$r$}  (T2);
    \draw[FArrow2, transform canvas={shift=(135:-0.1cm)}] (T2) to[bend right] node[above,pos=0.4]{$r'$} (P2.south west);
    \draw[FArrow2, transform canvas={shift=(135:0.1cm)}] (P3.north east) to[bend left ] node[above]{$m$}  (T3);
    \draw[FArrow2, transform canvas={shift=(135:-0.1cm)}] (T3) to[bend left] node[above, pos=0.6]{$m'$}  (P3.north east);
    \draw[FArrow2, transform canvas={shift=(45:0.1cm)}] (P4.north west) to[bend right ] node[above]{$e$}  (T4);
    \draw[FArrow2, transform canvas={shift=(45:-0.1cm)}] (T4) to[bend right] node[above, pos=0.6]{$e'$} (P4.north west);
  \end{pgfonlayer}
\end{tikzpicture}%
}
\begin{tikzpicture}
  \node (before) at (0, 0) {\netThree};
\end{tikzpicture}
\caption{Transition $t_j$ used in the simulation of increment on counter $c_j$.
\label{fig-c}}
\end{figure}
The condition $m \mathrel{E} p \ \land \ m \neq r$ checks that the value of variable $m$ is the other neighbour of $p$ along the path;
the condition $\neg m \mathrel{E} e$ checks that the guessed value $v$, stored on place $e_j$, is indeed not related by $E$ to
the value of $m$;
the condition $e' {=} e  \ \land \  m' {=} m$ ensures that the same value returns to places $m_j$ and $e_j$;
and finally the condition $p' {=} m  \ \land \ r' {=} p$ ensured that the current edge is moved along the path.

Finally, the simulation of increment of $c_j$ finishes with a transition $i'_j$ that is enabled when the value on place $p$ is 
related by $E$ to the value on place $b_j$; transition $i'_j$ removes the tokens from places $p$ and $r$.

Initial configuration $C_\masz$ of $N_\masz$ puts one token on each of the places $b_1, b_2, e_1, e_2$, using two 
arbitrary data values related by $E$, to encounter for
$c_1 = c_2 = 0$; and one token on the place corresponding to the initial state $q_\text{init}$.

We have thus sketched a construction of a net $N_\masz$ and the initial configuration $C_\masz$.
Observe that consecutive steps of $N_\masz$ faithfully simulate consecutive steps of $\masz$, using a path of 
sufficient length.
$N$ can however get stuck at some point of simulation, if the currently used path can not be
extended to a longer one; a priori, this could happen if
the fresh data values $v$ used in the simulation of increments are not guessed appropriately.
Nevertheless, 
since the net $N$ stops when a token is put on $p_\text{halt}$ (i.e., when no token is stored on places
in $Q\setminus \set{p_\text{halt}}$), we have:
\begin{claim}
The place $p_\text{halt}$ corresponding to the halting state of $\masz$ is nonempty in some run of $(N_\masz, C_\masz)$ if, and only if the machine $\masz$ halts. 
\end{claim}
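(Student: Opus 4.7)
The plan is to prove both directions of the claim by establishing a step-by-step simulation correspondence between $\masz$ and $(N_\masz, C_\masz)$, anchored on the invariant that at every point of the run, for each $j \in \set{1,2}$, the tokens currently on $b_j, m_j, e_j$ carry distinct values $v_0^j, v_1^j, \ldots, v_{n_j+1}^j$ (the first on $b_j$, the last on $e_j$, the remaining $n_j$ on $m_j$) forming an \emph{induced} path in $(V, E)$, where $n_j$ equals the current value of $\masz$'s counter $c_j$. During an in-progress increment the invariant is augmented to require that $p$ and $r$ carry two consecutive vertices of that very path, and no simulation transition for another counter or state can fire before $i'_j$ completes (the state-place encoding forces this).

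For the ``if'' direction, since the deterministic machine $\masz$ halts, let $N$ bound the counter values reached along its run; by Theorem~\ref{thm:core} there exist induced paths $\pi_1, \pi_2$ in $(V, E)$ of length at least $N+2$, one per counter. The net simulates $\masz$ step by step: the initial tokens on $b_j, e_j$ are the two endpoints of the length-$2$ prefix of $\pi_j$, and whenever $i_j$ is fired the new vertex is guessed to be the next vertex of $\pi_j$. All side-conditions of $i_j$, $t_j$ and $i'_j$ then follow from $\pi_j$ being induced, so the traversal always terminates and the run eventually reaches $p_\text{halt}$.

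For the converse I would argue by induction on the number of net steps, verifying that each firing of a simulation transition preserves the invariant and matches exactly one step of $\masz$. Zero test $z_{j,q,q'}$ is enabled precisely when $v_0^j \mathrel{E} v_{n_j+1}^j$, and for an induced path this is equivalent to $n_j = 0$; decrement $d_{j,q,q'}$ consumes the unique $m_j$-token $E$-adjacent to $v_{n_j+1}^j$, which by inducedness must be $v_{n_j}^j$, so the remaining tokens still form an induced path. The main obstacle will be the increment. After $i_j$ fires with guessed value $v$ satisfying $v \mathrel{E} v_{n_j+1}^j$ and $\neg\, v \mathrel{E} v_0^j$, I claim each firing of $t_j$ is forced to walk exactly one step along the stored path: the conjunction $m \mathrel{E} p \land m \neq r$ combined with inducedness forces $m$ to be the unique other path-neighbour of $p$, so after $k$ firings one has $(p, r) = (v_{n_j+1-k}^j, v_{n_j+2-k}^j)$; meanwhile $\neg\, m \mathrel{E} e$ verifies that $v$ is non-adjacent to every interior vertex visited. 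The enabling condition $p \mathrel{E} v_0^j$ of $i'_j$ is, again by inducedness, met only when $p = v_1^j$, i.e.,~after exactly $n_j$ firings of $t_j$; at that moment $v$ has been checked against every $v_1^j, \ldots, v_{n_j}^j$, and together with the two conditions enforced by $i_j$ this shows that the extended sequence $v_0^j, \ldots, v_{n_j+1}^j, v$ is again an induced path of length $n_j + 3$, restoring the invariant for the incremented counter.
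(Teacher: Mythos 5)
Your proof follows the same route as the paper's own (very terse) argument: you make explicit the induced-path invariant that the paper leaves implicit, and the forward direction ("pick a long enough induced path and guess along it") and the converse direction (step-by-step simulation correspondence) both match the paper's sketch. There is, however, a small but genuine flaw in the converse direction: the invariant you anchor on, namely that the tokens on $b_j, m_j, e_j$ always carry \emph{distinct} values forming an induced path, is not actually preserved. When $n_j = 0$ the constraint imposed by $i_j$ is only $v \mathrel{E} v_1^j \ \wedge\ \neg\, v \mathrel{E} v_0^j$; since $E$ is irreflexive, the choice $v = v_0^j$ satisfies both conjuncts. With that guess $t_j$ never fires (its first step would require $v_1^j \mathrel{E} v_1^j$, which is false), $i'_j$ fires immediately (because $p = v_1^j \mathrel{E} v_0^j$), and the net reaches a configuration with $b_j$ and $e_j$ both carrying $v_0^j$ and $m_j = \{v_1^j\}$ — a repeated vertex, not an induced path of three distinct vertices. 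Thus the step in your induction asserting that "the extended sequence $v_0^j, \ldots, v_{n_j+1}^j, v$ is again an induced path of length $n_j+3$" is false in this case.

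The claim itself survives this: from the degenerate configuration the zero test remains disabled (again by irreflexivity of $E$, as $v_0^j \mathrel{E} v_0^j$ is false), decrement restores a proper length-$2$ path, and a further increment is unenabled, so every continuation still faithfully tracks $\masz$'s counter or gets stuck before reaching $p_\text{halt}$. But as written your proof hinges on the distinctness, so you should either add the condition $v \neq v_b$ to the formula of $i_j$ (the paper's description does not include it), or weaken the stated invariant to tolerate the degenerate configuration and re-verify the zero test, decrement and increment cases there. As it stands, the step "each firing of a simulation transition preserves the invariant" is not true of the invariant you wrote down.
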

In one direction, a run of $(N_\masz, C_\masz)$ putting a token on $p_\text{halt}$ simulates the halting run of $\masz$
from the initial configuration.
In the other direction, if $\masz$ halts then the net $N_\masz$ can use a sufficiently long path in $(V, E)$ for values $v$ guessed
in the simulation of increments to be able to simulate the whole computation of $\masz$ and finally put a token on place $p_\text{halt}$.
Thus the claim directly entails undecidability of the place non-emptiness problem, and hece also of the coverability problem.
To treat other decision problems, we notice that $(V, E)$ contains, in addition to arbitrarily long finite paths,
also an infinite $\omega$-path:
\begin{claim}
The graph $(V, E)$ contains an $\omega$-path.
\end{claim}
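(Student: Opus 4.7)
The plan is to deduce this directly from Theorem~\ref{thm:core}, which provides arbitrarily long finite induced paths in $(V, E)$, by a König-style compactness argument followed by a homogeneity-based realization of the limit structure in $\G$.

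First I would set up an auxiliary tree $T$ whose nodes at level $n$ are 3-graph structures $\mathcal{G}_n$ on the vertex set $\{1, \dots, n\}$ satisfying: consecutive pairs $\{k, k{+}1\}$ lie in $E$; non-consecutive pairs lie outside $E$; and $\mathcal{G}_n \in \age{\G}$. The parent of $\mathcal{G}_{n+1}$ is its induced substructure on $\{1, \dots, n\}$, which automatically lies in $\age{\G}$ since the age is closed under induced substructures. Two properties of $T$ are crucial. It is \emph{finitely branching}: extending $\mathcal{G}_n$ to $\mathcal{G}_{n+1}$ amounts to choosing a color for each of the $n$ new edges $\{k, n{+}1\}$ from a finite set (two options for the edge to $n$ forcing membership in $E$, and at most two options for each other edge forcing non-membership), hence there are at most $2^n$ candidate children. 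It also has a node at every level: by Theorem~\ref{thm:core} an induced $n$-path exists in $(V, E)$ for every $n$, and enumerating its vertices along the path turns the induced 3-graph structure into a node of $T$ at level $n$.

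Hence $T$ is an infinite, finitely branching, rooted tree, and König's Lemma yields an infinite branch $\mathcal{G}_1 \subset \mathcal{G}_2 \subset \cdots$; its union is a 3-graph $\mathcal{G}_\omega$ on $\mathbb{N}$ whose every finite initial segment is an induced path in $(V, E)$ belonging to $\age{\G}$.

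Finally, I would realize $\mathcal{G}_\omega$ inside $\G$ inductively: suppose $v_1, \dots, v_n \in V$ already satisfy that $k \mapsto v_k$ embeds $\mathcal{G}_n$ into $\G$. Since $\mathcal{G}_{n+1} \in \age{\G}$, fix any embedding $\psi : \mathcal{G}_{n+1} \to \G$; its restriction to $\{1, \dots, n\}$ and the map $k \mapsto v_k$ are two embeddings of the same finite induced substructure $\mathcal{G}_n$ into $\G$, so by homogeneity of $\G$ some automorphism $\alpha$ of $\G$ sends the former to the latter, and one takes $v_{n+1} := \alpha(\psi(n{+}1))$. The resulting sequence $v_1, v_2, \dots$ is then an $\omega$-path in $(V, E)$. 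The only delicate point here is the combinatorial bookkeeping—verifying the branching bound and that restriction preserves age-membership—and no genuine technical obstacle arises beyond Theorem~\ref{thm:core} itself.
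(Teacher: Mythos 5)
Your proposal is correct and follows essentially the same route as the paper: build a tree of finite induced paths, apply König's Lemma to obtain an infinite branch, and then use homogeneity of $\G$ to realize the limit structure as an $\omega$-path. The paper's version is much terser (``treat finite paths as finite words over a 2-letter alphabet, arrange them into a tree, ...''), whereas you explicitly record the full 3-graph structure of each finite path as the tree node, which is a slightly more careful bookkeeping (in particular it handles the case $i = j$ of Theorem~\ref{thm:core}, where the non-edge colors are not uniquely forced), and you spell out the inductive realization step via automorphisms; but there is no genuine difference of method.
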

Indeed, treat finite paths as finite words over a 2-letter alphabet, and arrange all finite paths into a tree.
The tree contains arbitrarily long branches, thus it necessarily contains an infinite branch.
Using homogeneity of $\A$ one argues (see e.g.~Lemma 6.1.3 in \cite{Hodges}) 
that every infinite branch realizes as an $\omega$-path in $(V, E)$.
With the last claim we obtain:
\begin{claim}
$(N_\masz, C_\masz)$ terminates if and only if the machine $\masz$ halts. 
\end{claim}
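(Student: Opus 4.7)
The plan is to establish the two implications separately, exploiting the invariant that the tokens on $b_j, m_j, e_j$ always encode an induced path in $(V, E)$ during any run, provided every preceding increment of counter $c_j$ was executed with a correct guess $v$.

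For the \emph{forward} direction, I would assume $\masz$ halts in $k$ steps and argue that every firing sequence of $(N_\masz, C_\masz)$ has bounded length. The only source of genuine nondeterminism in the net is the value $v$ chosen by $i_j$ at the start of an increment. If the guess is correct, the traversal transition $t_j$ walks backwards along the extended induced path, at each step verifying $\neg m\mathrel{E} e$, and the finalising transition $i'_j$ fires exactly when $p$ reaches a neighbour of the token on $b_j$. If the guess is incorrect, then at some step the unique next vertex $m$ along the original path satisfies $m\mathrel{E} v$, blocking $t_j$; moreover $i'_j$ cannot fire either, since endpoints of an induced path are non-adjacent. Hence a bad guess yields a finite stuck run, while a correct one uses $O(k)$ net transitions per simulated step, for a total run length of $O(k^2)$.

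For the \emph{converse} direction, I would assume $\masz$ does not halt and work from the $\omega$-path in $(V, E)$ provided by the preceding claim. Using strong homogeneity of $\A$ once more, this single $\omega$-path can be amalgamated with a fresh copy to produce two vertex-disjoint $\omega$-paths $\pi_1, \pi_2$, one per counter. Designating $\pi_j$ as the pool of vertices used to represent $c_j$, and taking as the guess for each increment the next unused vertex of $\pi_j$, every guess is automatically valid, because every finite initial segment of an $\omega$-path is an induced path in $(V, E)$. The resulting firing sequence of $N_\masz$ is infinite and faithfully mirrors the infinite computation of $\masz$.

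The main difficulty will be in the forward direction: ruling out that partial firings --- for instance, mixing unrelated simulation steps with a dangling, stuck increment --- combine to yield an infinite computation. I would address this by refining the construction with intermediate control places that render each increment atomic: once $i_j$ fires, only $t_j$ and eventually $i'_j$ are enabled while auxiliary tokens remain on $p$ and $r$, and no other transition of $N_\masz$ can execute in parallel. With this linearisation the case analysis above covers every possible evolution of the net and delivers the equivalence stated in the claim.
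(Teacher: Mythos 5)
Your two implications line up with what the paper proves, and the paper's own proof is essentially your sketch compressed into two sentences: if $\masz$ halts, every run of $N_\masz$ is finite because each run traces a (possibly abortive) simulation of a bounded computation; if $\masz$ diverges, use the $\omega$-path from the preceding claim to supply all the guesses.

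Two comments on the details you add. First, you do not need \emph{two vertex-disjoint} $\omega$-paths for the two counters. All the formulas in $N_\masz$ compare data values only among tokens stored on $b_j, m_j, e_j$ for a single fixed $j$, never across counters, so a single $\omega$-path whose prefixes are used for both counters already validates every guess. Moreover, the device you propose for obtaining the second path — ``amalgamating the $\omega$-path with a fresh copy'' — does not go through as stated: amalgamation produces finite structures only, so you would again need a compactness/K\"onig argument, and you would additionally have to argue that two $E$-paths co-embed disjointly as an induced substructure of the $3$-graph, which is not given for free (the cross-edges between the two paths carry colours that must be chosen consistently). The single-path route sidesteps all of this.

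Second, and more importantly as a matter of method: you write that you would ``address'' the forward-direction worry about interleaving ``by refining the construction with intermediate control places''. You cannot alter $N_\masz$ while proving a claim about $N_\masz$; the atomicity of an increment (once $i_j$ has fired, only $t_j$ and eventually $i'_j$ are enabled until the auxiliary tokens on $p$ and $r$ are consumed) is already a property of the construction — it is exactly the purpose of the control places listed as ``some further auxiliary ones'' — and the proof should simply invoke it, not re-engineer it. Apart from this, your account of why bad guesses lead to finite stuck runs (the traversal $t_j$ walks a finite path, and $i'_j$ cannot fire at a stuck configuration because the endpoints of an induced path are non-adjacent) is correct and is the substance behind the paper's terse ``then $N_\masz$ necessarily terminates''.
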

Indeed, when the computation of $\masz$ from the initial configuration halts then $N_\masz$ necessarily terminates.
On the other hand, if the computation of $\masz$ from the initial configuration is infinite, an infinite $\omega$-path
in $(V, E)$ can be used for the simulation thus constituting an infinite run of $(N_\masz, C_\masz)$.
This entails undecidability of the termination problem, and hence also of the boundedness problem and the inevitability problem.




\section{Proof of Theorem~\ref{thm:core}}  \label{sec:proof-core}

%

%
From now on we consider a fixed 3-graph $\G = (V, C_1, C_2, C_3)$ as data domain,
assuming $\G$ to be countably infinite and strongly homogeneous. We treat $\G$ as a clique with
3-colored edges: we call $C_1, C_2$ and $C_3$
\emph{colors} and put $\Colors = \{ C_1, C_2, C_3\} \subset \mathcal{P}(V \times V)$. To denote individual
colors from this set, we will use variables $\a, \b, \c$ and $\x, \y, \z$.
%
    A path in the graph $(V, \a \cup \b)$ we call \emph{\a\b-path} ($\a\b \in \Colors$);
    for simplicity, we will write \emph{\a-path} instead of \a\a-path.
    Likewise we speak of $\a\b$-cliques, $\a$-cliques, $\a\b$-cycles, etc.
A \emph{triangle}
    $\triangle\a\b\c$ is a 3-clique with edges colored by $\a, \b, \c$.
    (Note that $\triangle\a\b\c = \triangle\b\c\a = \triangle\c\b\a$).

\vspace{-1mm}
\paragraph{Sketch of the proof}
Lemma~\ref{lemma:withBranches} below states that any 3-graph $\G$ has to meet one of the four listed cases.
It splits the proof into four separate paths:
\begin{center}
	\newcommand{\cminus}{\begin{tikzpicture}[scale=0.02]
			\draw[fill=white] (-5,-0.6) rectangle (5, 0.6);
		\end{tikzpicture}}
	\newcommand{\cplus}{\begin{tikzpicture}[scale=0.02]
			\draw[fill=white] (-5,-0.6) rectangle (5, 0.6);
			\draw[fill=white] (-0.6,-5) rectangle (0.6, 5);
		\end{tikzpicture}}
	\tikzstyle{nminus} = [Node,fill=cC,minimum size=0.4cm,align=center]
	\tikzstyle{nplus} = [Node,fill=cB,minimum size=0.4cm,align=center]
    \begin{tikzpicture}[xscale=2.2, yscale=2]
        \node[Node]  (s) at (0,  0  ) {};

        \node[Node] (a0) at (1  ,  0.5) {};
        \node[inner sep=0.05cm, anchor=south, xshift=-0.2cm] (a0label) at (a0.north) {\footnotesize A)};
        \node[Node] (a1) at (2  ,  0.5) {};
        \node[Node] (a2) at (3  ,  0.5) {};
        \node[Node] (a3) at (4  ,  0.5) {};
        \node[nminus] (a4) at (5  ,  0.5) {\cminus};

        \node[Node] (b0) at (1  ,  0  ) {};
        \node[inner sep=0.05cm, anchor=south, xshift=-0.2cm] (b0label) at (b0.north) {\footnotesize B)};
        \node[nplus] (b0p) at (1.3  , 0.2) {\cplus};
        \node[nminus] (b0m) at (1.5,0.2) {\cminus};
        \node[Node] (b1) at (2  ,  0  ) {};
        \node[nplus] (b2) at (3  ,  0  ) {\cplus};

        \node[Node] (c0) at (1  , -0.5) {};
        \node[inner sep=0.05cm, anchor=south, xshift=-0.2cm] (c0label) at (c0.north) {\footnotesize C)};
        \node[nminus] (c0m) at (1.3,-0.7) {\cminus};
        \node[Node] (c1) at (2  , -0.5) {};
        \node[Node] (c2) at (3  , -0.5) {};
        \node[nplus] (c3) at (4  ,  -0.5) {\cplus};

        \node[Node] (d0) at (0.3, -0.5) {};
        \node[inner sep=0.05cm, anchor=east, xshift=0cm] (d0label) at (d0.west) {\footnotesize D)};
        \node[nminus] (d1) at (0.6,-0.7) {\cminus};


        \begin{pgfonlayer}{background}

            \draw[FLine2] (s) edge[out=0,in=180] (a0);
            \draw[FArrow2]
                (a0) -- node[midway,above](label1){\footnotesize Cor.~\ref{corollary:addingVertices}}
                (a1) -- node[midway,above]{\footnotesize Cor.~\ref{corollary:canAddP2xA}}
                (a2) -- node[midway,above]{\footnotesize Lemma~\ref{lemma:addingPPP}}
                (a3) -- node[midway,above]{\footnotesize Lemma~\ref{lemma:longPaths}}
                (a4);

            \draw[FLine2] (s) -- (b0) -- (b1);
            \draw[FArrow2]
                (b1) -- node[midway,above]{\footnotesize Lemma~\ref{lemma:FamilyOfIdenticalSlices}}
                (b2);
            \draw[FArrow2] (b0) edge[out=0,in=225] (b0p);
            \draw[FArrow2] (b0) edge[out=0,in=225] (b0m);

            \draw[FLine2] (s) edge[out=0,in=180] (c0);
            \draw[FArrow2]
                (c0) -- node[midway,above,anchor=base,yshift=1.5mm]{\footnotesize Lem.~\ref{lemma:kPartiteRestrictedIsHomoheneous}, Thm~\ref{theorem:bipartiteClassification}}
                (c1) -- node[midway,above,anchor=base,yshift=1.5mm]{\footnotesize Lemma~\ref{lemma:compatibleBijections}}
                (c2) -- node[midway,above,anchor=base,yshift=1.5mm]{\footnotesize Lemma~\ref{lemma:FamilyOfIdenticalSlices}}
                (c3);
            \draw[FArrow2] (c0) edge[out=0,in=135] (c0m);
            \draw [FLine2, semithick,
                    decoration={
                    brace,
                    mirror,
                    raise=0.2cm
                    },
                    decorate
                ] (c1) -- (c3)
            node [pos=0.5,anchor=north,yshift=-0.25cm] {\footnotesize Lemma~\ref{lemma:nonGenericMultipartiteGraphsFormWQO}};

            \draw[FLine2] (s) edge[out=0,in=90] (d0);
            \draw[FArrow2] (d0) edge[out=270,in=180] (d1);
        \end{pgfonlayer}
        \node[nminus, anchor=base] (l1) at (0,-1.15) {\cminus};
        \node[anchor=west] at (l1.east) {-- $\G$ embeds arbitrarily long paths};
        \node[nplus,anchor=base] (l2) at (3,-1.15) {\cplus};
        \node[anchor=west] at (l2.east) {-- $\G$ admits WQO};
        \node[anchor=west,inner sep=0] at (s.west |- label1.west) {\footnotesize Lemma~\ref{lemma:withBranches}};
    \end{tikzpicture}
\end{center}
\vspace{-2mm}
\noindent
After stating and proving Lemma~\ref{lemma:withBranches} we proceed with the proofs of Cases A), B) and C).
%
Case A) constitutes the most difficult part of the proof and involves a complex and delicate analysis of consequences of the amalgamation property.
It consists of four steps that deduce extension of the assumed induced substructures by individual vertices (cf.~Cor.~\ref{corollary:addingVertices}), 
individual edges (cf.~Cor.~\ref{corollary:canAddP2xA}), paths of length 2 (cf.~Lemma~\ref{lemma:addingPPP}), resp., 
culminating in derivation of arbitrarily long paths (cf.~Lemma~\ref{lemma:longPaths}).
Thus in case A) only the second condition of Theorem~\ref{thm:core} is possible, while in the other two cases
both conditions of Theorem~\ref{thm:core} may hold true.



\newcommand{\lemmaWithBranchesI}{
\begin{tikzpicture}[scale=1*\scaleOfPictures]
    \begin{scope}[every node/.append style=Node]
        \foreach \angle in {0, 60, ..., 299} {
        \path (0, 1) node (N\angle) at +(\angle:1cm) {}; }
    \end{scope}
    \path (0, 1) node (dots) at +(-50:1) {\dots};
    \begin{pgfonlayer}{background}
        \foreach \angleI in {0, 60, ..., 299} {
        \foreach \angleII in {0, 60, ..., 299} {
        \lEdge[Edge, cC]{N\angleI}{}{N\angleII} } }
        \lEdge[Edge, draw=cC, swap]{N120}{\c}{N180}
    \end{pgfonlayer}
\end{tikzpicture}
}
\newcommand{\lemmaWithBranchesII}{
\begin{tikzpicture}[scale=1*\scaleOfPictures]
    \begin{scope}[every node/.append style=Node]
        \node at (0, 0) (M1) {};
        \node at (1, 0) (M2) {};
        \node at (0.5, .866) (M3) {};
    \end{scope}
    \begin{pgfonlayer}{background}
        \lEdge[Edge, draw=cA, swap]{M1}{\a}{M2}
        \lEdge[Edge, draw=cX, swap]{M2}{\x}{M3}
        \lEdge[Edge, draw=cC]{M1}{\c}{M3}
    \end{pgfonlayer}
\end{tikzpicture}
}
\newcommand{\lemmaWithBranchesIII}{
\begin{tikzpicture}[scale=1*\scaleOfPictures]
    \begin{scope}[every node/.append style=Node]
        \node at (0, 0) (K1) {};
        \node at (1, 0) (K2) {};
        \node at (0.5, .866) (K3) {};
    \end{scope}
    \begin{pgfonlayer}{background}
        \lEdge[Edge, draw=cA, swap]{K1}{\a}{K2}
        \lEdge[Edge, draw=cC, swap]{K2}{\c}{K3}
        \lEdge[Edge, draw=cC]{K1}{\c}{K3}
    \end{pgfonlayer}
\end{tikzpicture}
}

\begin{lemma}\label{lemma:withBranches}
    Every homogeneous 3-graph $\G = (V, C_1, C_2, C_3)$ satisfies one of the following conditions:
    \nopagebreak[4]
    \begin{enumerate}
        \item[A)] for some color $\c\in\Colors$, $\G$ contains the following induced substructures:
\vspace{1mm}

        \begin{tikzpicture}[scale=\scaleOfPictures]
            \node[anchor=south, inner sep=0] at (0, 0) {\lemmaWithBranchesI};
            \node[anchor=south, inner sep=0] at (5, 0) {\lemmaWithBranchesII};
            \node[anchor=south, ] at (6, 0) {,};
            \node[anchor=south, inner sep=0] at (7, 0) {\lemmaWithBranchesIII};
            \node[anchor=north, inner xsep=0, align=left] at (0, 0)
                {a) arbitrarily large \\ \phantom{a) }$\c$-cliques};
            \node[anchor=north, inner xsep=0, align=left] at (7, 0)
                {b) two triangles: $\triangle\a\x\c$ and $\triangle\a\c\c$ \\  \phantom{b) }for some colors $\a, \x$ different than $\c$};
        \end{tikzpicture}

        \item[B)] for some colors $\x \neq \y$, $(V, \x \cup \y)$ is a union of disjoint cliques,
        \item[C)] for some color $\x$, $(V, \x)$ is a union of finitely many disjoint infinite cliques,
        \item[D)] for some colors $\x \neq \y$, $(V, \x \cup \y)$ contains arbitrarily long paths.
    \end{enumerate}
\end{lemma}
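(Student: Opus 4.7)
The plan is to argue by contrapositive: assume that B), C), and D) all fail, and deduce that A) must hold. By the infinite Ramsey theorem applied to the 3-edge-coloured complete graph on $V$, some colour $\c \in \Colors$ admits an infinite monochromatic clique; fix such a $\c$. This immediately witnesses A)a), so it remains to realise in $\age{\G}$ the two triangles required by A)b): one of type $\triangle\a\x\c$ with $\a, \x \ne \c$ (one $\c$-edge), and one of type $\triangle\a\c\c$ with $\a \ne \c$ (two $\c$-edges).

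The first triangle comes for free from the failure of B). Writing $\{\x, \y\} = \Colors \setminus \{\c\}$, since $(V, \x \cup \y)$ is not a disjoint union of cliques there exist $u, v, w$ with $uv, vw \in \x \cup \y$ but $uw \notin \x \cup \y$; then $uw$ is $\c$-coloured and $uvw$ is of type $\triangle\a\x\c$ with $\a, \x \ne \c$.

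For the second triangle, I would argue by contradiction: suppose no $\triangle\a\c\c$ occurs in $\age{\G}$. Then $\c$ is transitive on $V$, since otherwise some triple $u, v, w$ with $uv, vw \in \c$ and $uw \notin \c$ realises the forbidden triangle. Hence $(V, \c)$ decomposes as a disjoint union of $\c$-cliques $\{K_i\}_{i \in I}$ with at least one $K_i$ infinite. Because C) fails for $\c$, either $I$ is infinite or some $K_i$ is finite. The goal is now to contradict the failure of D) by producing arbitrarily long induced paths in $(V, \c \cup \x)$ or $(V, \c \cup \y)$; note that $(V, \x \cup \y)$ is complete multipartite with parts $\{K_i\}$, and a short argument shows such graphs admit no induced $P_4$, so they cannot witness D).

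The main obstacle is this last step. I would first apply Ramsey to the $\{\x, \y\}$-colouring of cross-edges between the $K_i$ to extract an infinite monochromatic bipartite pattern, either connecting two fixed infinite parts or linking an infinite family of parts through a single colour, say $\x$. Then I would use the amalgamation property of $\age{\G}$ to build step by step an induced zig-zag path $v_0 - v_1 - v_2 - \cdots - v_n$ in $(V, \c \cup \x)$, alternating between $\c$-edges within one $K_i$ and $\x$-edges to another $K_j$, while forcing every nonconsecutive pair into the third colour $\y$. The delicate point is to verify that each one-vertex extension of the path corresponds to an amalgamation instance whose solution avoids introducing unwanted $\c$- or $\x$-edges on diagonal pairs; this relies on a careful analysis of the small triangle types already known to be realised in $\age{\G}$ (in particular $\triangle\a\x\c$ together with $\y$-containing triangles) to guarantee that the required $\y$-coloured diagonals are consistent, yielding the desired contradiction with the failure of D).
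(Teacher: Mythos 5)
Your proposal chooses a contrapositive strategy (assume B), C), D) all fail and deduce A)), which is structurally different from the paper's approach (case-split on whether, for the Ramsey colour $\c$ admitting large cliques, there is also a triangle $\triangle\a\c\c$). Both strategies are admissible in principle, and your Ramsey step plus extraction of a $\triangle\,?\,?\,\c$ triangle from the failure of B) are correct. But two genuine gaps remain.

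First, condition A)b) requires the \emph{same} colour $\a\neq\c$ to appear in both triangles $\triangle\a\x\c$ and $\triangle\a\c\c$. Your argument finds the two triangles by entirely separate mechanisms: the first from the failure of B), the second by contradiction from the failure of C) and D). Nothing in your proof links the non-$\c$ colour of the $\c\c$-triangle to a non-$\c$ colour actually appearing in the first triangle. For instance, the failure of B) might only yield $\triangle\y\y\c$, while the $\c\c$-triangle you obtain is $\triangle\x\c\c$; these two together do \emph{not} witness A)b). The paper avoids this by bundling the clique condition and the $\c\c$-triangle into one hypothesis $\spadesuit$ with a shared $\a$, and then performing a careful case analysis (including an explicit amalgamation step when the triangles still do not match) to force agreement. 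You would need an analogous argument.

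Second, and more seriously, the final step is a sketch rather than a proof. You claim to build an induced $\c\x$-path of arbitrary length ``step by step using amalgamation'' while ``forcing every nonconsecutive pair into the third colour $\y$.'' Amalgamation does not give you that control: amalgamating a path $v_0\cdots v_n$ with a new edge $\{v_n, v_{n+1}\}$ over $\{v_n\}$ leaves the colours of $v_iv_{n+1}$ ($i<n$) completely undetermined, so the resulting path need not be induced. To force the diagonals to be $\y$, you would need to know in advance that the target configuration belongs to $\age{\G}$, which is exactly what you are trying to prove. The paper resolves this by a structural classification: a second Ramsey application and the non-existence of $\triangle\y\b\b$ show that a \emph{second} colour $\b$ also partitions $V$ into infinitely many infinite cliques, forcing $\G$ to be the explicit grid structure of Example~\ref{ex:main}, which manifestly contains arbitrarily long $\b\c$-paths. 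Your sketch omits this second Ramsey step and the derivation of the grid, and the amalgamation heuristic you substitute for it cannot fill the gap. (The heavy amalgamation machinery that does produce long paths in this paper lives in Section~\ref{sec:coreA}, \emph{downstream} of A) having been established, and involves four layers of lemmas with extensive case analysis precisely because controlling the diagonal colours is nontrivial.)
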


\begin{proof} 
    By Ramsey theorem, $\G$ contains an arbitrarily large monochromatic cliques.
    Let us state a bit stronger requirement:

    \smallskip

    \noindent
    \textbf{Condition $\spadesuit$} \ 
    For some $\a, \c \in \Colors$, $\G$ contains arbitrarily large $\c$-cliques and
    a triangle $\triangle\a\c\c$ with exactly two $\c$-edges ($\a \neq \c$).

    \noindent
    Consider two cases, depending on whether the condition $\spadesuit$ is satisfied or not.

    \medskip

    \noindent
    \textbf{Case $1^\circ$}
    Assume that $\G$ contains both arbitrarily large $\c$-cliques and a triangle $\triangle\a\c\c$
    for some $\a, \c \in \Colors$. Let $\b$ be the third, remaining color.
    Our goal will be to show that either A) or B) holds.

    If the graph $(V, \a \cup \b)$ is a disjoint sum of cliques, we immediately obtain B).
    Suppose the contrary.
    We get that $\G$ has to contain one of the three possible counterexamples for transitivity
    of relation $\a \cup \b$:
    \vspace{-1mm}
    \begin{center}
        \begin{tikzpicture}[scale=\scaleOfPictures]
            \node[anchor=east, inner sep=0] at ( 0, 0) {\labeledTriangle{cA}{\a}{cA}{\a}{cC}{\c}};
            \node[anchor=east, inner sep=0] at ( 5, 0) {\labeledTriangle{cA}{\a}{cB}{\b}{cC}{\c}};
            \node[anchor=east, inner sep=0] at (10, 0) {\labeledTriangle{cB}{\b}{cB}{\b}{cC}{\c}};
            \node[anchor=west, inner sep=0] at ( 0, 0) {-- $\triangle\a\a\c$ \CaseOk};
            \node[anchor=west, inner sep=0] at ( 5, 0) {-- $\triangle\a\b\c$ \CaseOk};
            \node[anchor=west, inner sep=0] at (10, 0) {-- $\triangle\b\b\c$ \phantom{\CaseOk}};
        \end{tikzpicture}
    \end{center}
    \vspace{-3mm}
    If it contains the triangle $\triangle\a\a\c$ or $\triangle\a\b\c$, case A) holds.

    Suppose we got $\triangle\b\b\c$. Let us check this time
    whether colors $\a$ and $\c$ form a union of disjoint cliques.
    Again, if it is so, we easily get B), so we assume the contrary. Similarly, we necessarily obtain one
    of the following triangles:
    \vspace{-1mm}
    \begin{center}
        \begin{tikzpicture}[scale=\scaleOfPictures]
            \node[anchor=east, inner sep=0] at ( 0, 0) {\labeledTriangle{cA}{\a}{cA}{\a}{cB}{\b}};
            \node[anchor=east, inner sep=0] at ( 5, 0) {\labeledTriangle{cA}{\a}{cC}{\c}{cB}{\b}};
            \node[anchor=east, inner sep=0] at (10, 0) {\labeledTriangle{cC}{\c}{cC}{\c}{cB}{\b}};
            \node[anchor=west, inner sep=0] at ( 0, 0) {-- $\triangle\a\a\b$};
            \node[anchor=west, inner sep=0] at ( 5, 0) {-- $\triangle\a\c\b$ \CaseOk};
            \node[anchor=west, inner sep=0] at (10, 0) {-- $\triangle\c\c\b$ \CaseOk};
        \end{tikzpicture}
    \end{center}
    \vspace{-3mm}
    This time case A) also holds for two out of the three triangles above:
    \begin{itemize}
        \item for $\triangle\a\c\b$, because together with subgraphs resulting from
        assumption $\spadesuit$ (i.e. with triangle $\triangle\a\c\c$ and the $\c$-cliques)
        we get all graphs required by A).
        \item for $\triangle\c\c\b$ paired with the triangle $\triangle\b\b\c$ we just obtained,
        using color $\b$ appearing in those triangles in place of $\a$ in condition A).  
    \end{itemize}
    It only remains to consider the situation when we got $\triangle\a\a\b$. We use it together with
    previously obtained triangle $\triangle\b\b\c$ to build the following instance of singleton amalgamation:

\vspace{-4mm}
    \begin{center}
        \begin{tikzpicture}[scale=\scaleOfPictures]
            \node (N1) at (0, 0) {
            \begin{tikzpicture}[scale=1.4*\scaleOfPictures]
                \path (-1,-0.2) rectangle (1, 1.2);
                \node[ Node] at (0, 0) (n1) {};
                \node[ Node] at (0, 1) (n2) {};
                \node[ANode] at (-0.866, 0.5) (n3) {};
                \node[ANode] at ( 0.866, 0.5) (n4) {};

                \begin{pgfonlayer}{background}
                    \lEdge[cB]{n1}{\b}{n2}
                    \lEdge[cA]{n1}{\a}{n3}
                    \lEdge[cA, swap]{n2}{\a}{n3}
                    \lEdge[cB, swap]{n1}{\b}{n4}
                    \lEdge[cC]{n2}{\c}{n4}
                    \lEdge[AEdge, bend left=20]{n3}{}{n4}
                \end{pgfonlayer}
            \end{tikzpicture}
            };
        \end{tikzpicture}
    \end{center}
\vspace{-3mm}
\noindent
    Depending on the color of the dashed edge, in the solution we get
    one of the following triangles:
    \vspace{-2mm}
    \begin{center}
        \begin{tikzpicture}[scale=\scaleOfPictures]
            \node[anchor=east] (N2) at (0,0) {
            \begin{tikzpicture}[scale=1*\scaleOfPictures]
                \path (-1,-0.2) rectangle (1, 1.2);
                \node[ Node] at (0, 1) (n1) {};
                \node[ANode] at (-0.866, 0.5) (n3) {};
                \node[ANode] at ( 0.866, 0.5) (n4) {};

                \begin{pgfonlayer}{background}
                    \lEdge[cA, swap]{n1}{\a}{n3}
                    \lEdge[cC]{n1}{\c}{n4}
                    \lEdge[cA, swap]{n3}{\a}{n4}
                \end{pgfonlayer}
            \end{tikzpicture}
            };
            \node[anchor=east] (N3) at (5,0) {
            \begin{tikzpicture}[scale=1*\scaleOfPictures]
                \path (-1,-0.2) rectangle (1, 1.2);
                \node[ Node] at (0, 1) (n2) {};
                \node[ANode] at (-0.866, 0.5) (n3) {};
                \node[ANode] at ( 0.866, 0.5) (n4) {};

                \begin{pgfonlayer}{background}
                    \lEdge[cA, swap]{n2}{\a}{n3}
                    \lEdge[cC]{n2}{\c}{n4}
                    \lEdge[cB, swap]{n3}{\b}{n4}
                \end{pgfonlayer}
            \end{tikzpicture}
            };
            \node[anchor=east] (N4) at (10,0) {
            \begin{tikzpicture}[scale=1*\scaleOfPictures]
                \path (-1,-0.2) rectangle (1, 1.2);
                \node[ Node] at (0, 0) (n1) {};
                \node[ANode] at (-0.866, 0.5) (n3) {};
                \node[ANode] at ( 0.866, 0.5) (n4) {};

                \begin{pgfonlayer}{background}
                    \lEdge[cA]{n1}{\a}{n3}
                    \lEdge[cB, swap]{n1}{\b}{n4}
                    \lEdge[cC]{n3}{\c}{n4}
                \end{pgfonlayer}
            \end{tikzpicture}
            };
            \node[anchor=west] at ( 0, 0) {-- $\triangle\a\a\c$};
            \node[anchor=west] at ( 5, 0) {-- $\triangle\a\b\c$};
            \node[anchor=west] at (10, 0) {-- $\triangle\a\b\c$ \phantom{\CaseOk}};
        \end{tikzpicture}
    \end{center}
    \vspace{-3mm}
    and each one alone completes the requirements of A). This closes case $1^\circ$.

    \smallskip
    \noindent
    \textbf{Case $2^\circ$}
    Suppose condition $\spadesuit$ is false. Remind that $\G$ contains arbitrarily large $\c$-cliques
    for some $\c \in \G$. Since $\spadesuit$ does not hold, the graph does not contain a triangle
    $\triangle\c\c\a$ for any $\a\neq\c$; in other words, the color $\c$ appears only within cliques. We conclude
    that $(V, \c)$ is a union of disjoint cliques.
    Clearly at least one of such cliques has to be infinite. By homogeneity we get that all the
    cliques in $(V, \c)$ have to be infinite.
    Now our target is to show that either C) or D) holds.

    The case C) is fulfilled when there are only finitely many $\c$-cliques. Let us assume the contrary.
    In each of the $\c$-cliques we chose one vertex.
    Edges between the chosen vertices form an infinite $\a\b$-clique $K$.
    Using Ramsey theorem again, we conclude that in $K$ one of the colors $\a, \b$ forms
    arbitrarily large monochromatic cliques. W.l.o.g. suppose that this is color $\b$.

    If the graph $\G$ contained $\triangle\y\b\b$ for some $\y \neq \b$, then the assumptions
    of $\spadesuit$ would be met, leading to a contradiction.
    Therefore we conclude that $(V, \b)$
    is a union of disjoint infinite $\b$-cliques.

    When there are only finitely many $\b$-cliques, condition C) is fulfilled. Otherwise
    we know that $\G$ is a union of infinitely many $\x$-cliques for both $\x=\c$ and $\x=\b$.
    Using homogenity, it is easy to show that then every pair of differently colored cliques has \emph{exactly one}
    common vertex, so the graph $\G$ takes the form as depicted in Example~\ref{ex:main}.  
%
%
%
A graph of such form contains arbitrarily long $\b\c$-path, so the requirements of D) are met. 
\end{proof} 
	
\section{Case C) in the proof of Theorem~\ref{thm:core}} \label{sec:WQOtwo}



Let $\c$ be the color that satisfies condition C), and $\a$, $\b$ --- the
remaining two colors. 
In this section we often treat $\G$ as the
$k$-partite graph $(V, \a \cup \b)$ (for some $k\in\mathbb{N}$): $k$ cliques of color $\c$ allow to distinguish
$k$ groups of vertices
$V_1 \cup V_2 \cup \dots \cup V_k = V$ (from now on we will refer to them as layers). The
remaining two colors can be interpreted as existence ($\a$) and nonexistence ($\b$) of edges between
these groups.

\paragraph{Remark $\bigstar$} 
We observe that the special color $\c$ between vertices
within each layer $V_i$ ensures that the automorphisms of $\G$ will not 'mix' those layers:
when two vertices $u, v$ belong to a common layer $V_i$, then their images $f(u), f(v)$ will
also belong to some common layer $V_j$, no matter what automorphism $f \in \aut{\G}$ we choose.
Obviously, the automorphisms can switch positions of whole layers, e.g. move all vertices from $V_i$
to some $V_j$ and vice versa --- in this respect the layers are undistinguishable.

\begin{changemargin}{0cm}{5.7cm}
\begin{lemma}\label{lemma:kPartiteRestrictedIsHomoheneous}
\noindent\tikz[remember picture] \node[inner sep=0, minimum size=0] (marker) {};
For every $i, j \in \{1, 2, \dots, k\}$, 
the bipartite graph $\G_{i,j} = ( V_i \cup
V_j, \a \cap (V_i\cup V_j)^2$, $V_i, V_j)$ (with two distinguishable sides $V_i, V_j$) is homogeneous.
\end{lemma}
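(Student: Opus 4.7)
The plan is to show that every isomorphism $f : \str{A} \to \str{B}$ of finite induced $\G_{i,j}$-substructures extends to an automorphism of $\G_{i,j}$. First, I observe that $f$ is already an isomorphism of the underlying finite $\G$-substructures: it preserves sides, hence the within-side $\c$-edges (the only intra-side edges); it preserves $\a$-edges between sides by assumption; and consequently it preserves the between-side $\b$-edges as well (as the complementary between-side color). By homogeneity of $\G$, $f$ extends to some $\tilde f \in \aut{\G}$, which by Remark~$\bigstar$ permutes the layers $V_1, \ldots, V_k$. Whenever $\str{A} \cap V_i$ is nonempty, its image under $f$ is in $V_i$, which forces $\tilde f(V_i) = V_i$; the symmetric statement holds for $V_j$. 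So if $\str{A}$ meets both sides, the restriction $\tilde f\!\upharpoonright\!_{V_i \cup V_j}$ is already an automorphism of $\G_{i,j}$ extending $f$, and we are done.

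It remains to handle the case $\str{A} \subseteq V_i$ (with $\str{A} = \emptyset$ trivial and $\str{A} \subseteq V_j$ symmetric). Here $\tilde f(V_i) = V_i$, but $\tilde f(V_j)$ may be some other layer $V_{j'}$. I plan to reduce this to the two-sided case by enlarging $f$ with a single vertex pair on the $V_j$-side: pick any $v \in V_j \setminus \str{A}$ and seek $v' \in V_j \setminus \str{B}$ such that $f \cup \{v \mapsto v'\}$ is still a $\G_{i,j}$-isomorphism, equivalently such that $v'$ realizes over $\str{B}$ the $f$-transport of the $\G$-type of $v$ over $\str{A}$.

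The main obstacle is producing such a $v'$ \emph{inside} $V_j$. The candidate $\tilde f(v)$ realizes the correct type over $\str{B}$ but may only lie in $V_{j'}$; it therefore suffices to find $\sigma \in \aut{\G}$ fixing $\str{B}$ pointwise and sending $V_{j'}$ into $V_j$, and set $v' := \sigma(\tilde f(v))$. By homogeneity of $\G$, such $\sigma$ is obtained by extending any partial isomorphism $\str{B} \cup \{u\} \to \str{B} \cup \{w\}$ (identity on $\str{B}$, $u \mapsto w$) with $u \in V_{j'}$ and $w \in V_j$ of equal $\G$-type over $\str{B}$. The delicate step will be exhibiting this matching pair; my plan is to exploit the layer-symmetry implicit in the existence of $\tilde f$, which already maps the realization of the type of $v$ by a $V_j$-vertex over $\str{A}$ to a realization in $V_{j'}$ over $\str{B}$, and then apply homogeneity of $\G$ once more to the $\c$-clique bijection $f: \str{A}\to\str{B}$ in a way that preserves $V_j$ setwise (using that $V_j$ already contains the witness $v$ of the requisite type over $\str{A}$), thereby transporting $v$ to a $V_j$-vertex realizing the required type over $\str{B}$. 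With $v'$ in hand, the augmented isomorphism meets both sides and the case already treated in the first paragraph completes the proof.
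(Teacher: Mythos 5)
Your reduction of the one-sided case to the two-sided case is the right kind of move, and your observations that $f$ is automatically a $\G$-isomorphism and that an automorphism of $\G$ extending a two-sided partial isomorphism must fix $V_i$ and $V_j$ are correct and match the paper. The gap is in the core of the remaining case: you need a vertex $v' \in V_j$ realizing over $\str{B}$ the $f$-image of the $\G$-type of $v$ over $\str{A}$. Equivalently, you need to know that a type that is realized in $V_j$ over $\str{A}$ is still realized in $V_j$ (and not merely in some $V_{j'}$) after transport along $f$. Your final paragraph offers ``apply homogeneity of $\G$ once more to the bijection $f \colon \str{A} \to \str{B}$ in a way that preserves $V_j$ setwise,'' but that is exactly what is in question: the existence of an automorphism extending $f$ and fixing $V_j$ is the statement being proved, so invoking it here is circular. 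The justification in parentheses — that $V_j$ contains the witness $v$ over $\str{A}$ — does not give you a witness over $\str{B}$, since $\str{A}$ and $\str{B}$ need not be related by any layer-fixing automorphism a priori. Put differently, you correctly locate the main obstacle but do not overcome it.

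The paper escapes this trap with a genuinely different device. It first reduces to $V(\str{A}) \cap V(\str{B}) = \emptyset$, then builds an infinite disjoint family $(S_n)$ of ``scaffold'' substructures, each with $|\str{A}|$ vertices in $V_i$ and one vertex in every other layer, and records each $S_n$'s connection type. Since there are only finitely many types, two members $S_a, S_b$ share the same type; mapping the $V_i$-parts of $S_a, S_b$ onto $\str{A}, \str{B}$ respectively (a partial isomorphism, since everything inside $V_i$ is uniformly $\c$-colored) and extending by homogeneity of $\G$ produces, in one stroke, a vertex in every layer $V_\beta$ over $\str{A}$ and a type-matched partner in the \emph{same} layer $V_\beta$ over $\str{B}$. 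Extending $f$ by these pairs yields a two-sided partial isomorphism, landing in the easy case. It is this simultaneous, pigeonhole-driven construction over all layers at once that your single-vertex approach lacks, and without something of this sort the needed $v'$ cannot simply be asserted to exist.
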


\begin{tikzpicture}[overlay, remember picture]
    \node[anchor=north east] at (marker.south west -| current page text area.east)
    {\begin{tikzpicture}[scale=0.75*\scaleOfPictures]
         \foreach \x/\y in {3/0,0/3,3/3} {
         \foreach \d in {1,2,3} {
         \node (x\x-y\y-d\d) [Node, anchor=center,thin,minimum size=0.2cm] at (\x+\d*0.5*1.6-0.5*0.5, \y+\d*0.25*1.6-0.25*0.5) {};
         }
         \node[anchor=center] [rotate=-11, inner sep=0] at (\x+5.5*0.5+0.15, \y+5.5*0.25+0.27) {\reflectbox{$\ddots$}};
         }
         \node[align=center,anchor=center] at (0-0.2, 0+0.2) {\dots\\ \scriptsize remaining \\\scriptsize\raisebox{0.15cm}{$(k-3)$ layers}};
         \node[anchor=center] at (0-0.2, 3+0.2) {$V_1$};
         \node[anchor=center] at (3+2.4, 3+0.2) {$V_2$};
         \node[anchor=center] at (3+2.4, 0+0.2) {$V_3$};
         \begin{pgfonlayer}{background}
             \foreach \d in {1,2,3} {
             \foreach \dd in {1,2,3} {
             \ifthenelse{\d=\dd}{}{
             \draw[Edge, cA, opacity=0.5] (x0-y3-d\d) -- (x3-y0-d\dd);
             \draw[Edge, cA, opacity=0.5] (x0-y3-d\d) -- (x3-y3-d\dd);
             }
             }
             }
         \end{pgfonlayer}
         \begin{pgfonlayer}{background}
             \foreach \d in {1,2,3} {
             \foreach \dd in {1,2,3} {
             \ifthenelse{\d=\dd}{}{
             \draw[Edge, cA] (x3-y3-d\d) -- (x3-y0-d\dd);
             }
             }
             }
         \end{pgfonlayer}
         \draw
         (3+1*0.5-0.5, 0+1*0.25-0.8) --
         (3+8*0.5-0.5, 0+8*0.25-0.8) --
         (3+8*0.5-0.5, 3+8*0.25+0.4) --
         (3+1*0.5-0.5, 3+1*0.25+0.4) -- cycle;
         \node[anchor=center] at (4.5, 5) {$\G_{2,3}$};
    \end{tikzpicture}};
\end{tikzpicture}

\medskip
The vertex sets $V_i$ and $V_j$ 
are used here as unary relations that allow to tell the two layers of $\G_{i,j}$ (sides of $\G_{i,j}$) apart.
An example is shown on the right, with three layers $V_1, V_2$ and $V_3$, and three bipartite graphs
$\G_{1,2}$, $\G_{2,3}$ and $\G_{1,3}$.  
%
\end{changemargin}


\begin{proof}  
Fix $\G_{i,j}$ a bipartite graph.
Note that $\G_{i,j}$  has \emph{distinguishable sides}.
To prove its homogeneity we have to show that each isomorphism of two of its finite induced subgraphs
may be extended to some automorphism of $\G_{i,j}$.
Let us then take some given isomorphism $f: G_1 \rightarrow G_2$ for some finite induced subgraphs $G_1, G_2$
of $\G_{i,j}$. It is easy to extend it to a full automorphism when it 'touches' both
layers of $G_{i,j}$, i.e.:
$$
V(G_1) \cap V_i \neq \emptyset \ \ \wedge\ \  V(G_1) \cap V_j \neq \emptyset
$$
where $V(G_1)$ is the set of vertices of $G_1$. In this case, by homogeneity of $\G$,
we construct a full automorphism $f' : \G \rightarrow \G$, which extends $f$.
It is easy to see that in this case $f'$ has to fix the layers $V_i$ and $V_j$
($f'(V_i) = V_i$ and $f'(V_j) = V_j$), and hence 
$f'$ restricted to the graph $\G_{i,j}$ is a correct automorphism of this graph.

Things get more complicated when $f$ operates only on some single layer of $\G_{i,j}$. W.l.o.g.
suppose that it 'touches' only $V_i$, so $V(G_1) \cap V_j = \emptyset$. Now the above construction
will not work out of the box --- if we were unlucky, the automorphism of $\G$ we get by homogeneity
moves the whole layer $V_j$ to some $V_n$
located 'outside' the graph $\G_{i,j}$ ($n \notin \{i, j\}$).

It will be handy to make the following observation: when $f$ 'touches' only $V_i$, which is an infinite $\c$-clique, we may
assume that $V(G_1) \cap V(G_2) = \emptyset$. Indeed, every function $g: G_1 \rightarrow G_2$
that violates this condition may be decomposed as $g = f_2 \circ f_1$ for some
$f_1, f_2$:
$$
G_1 \xrightarrow{f_1} H \xrightarrow{f_2} G_2
$$
such that $H$ is disjoint both with $G_1$ and with $G_2$.

Now, let $N = |V(G_1)| = |V(G_2)|$ be the size of the domain
of isomorphism $f$. Let us take an arbitrary infinite family $(S_n)_{n \in \mathbb{N}}$ of subgraphs of $\G$ with
disjoint vertex sets, such that the following conditions are met:
\begin{itemize}
    \item $|V(S_n) \cap V_m| = 1$ for $m \neq i$ (and this single vertex will be denoted as $v_m^{(n)}$),
    \item $|V(S_n) \cap V_i| = N$ (denote these vertices as $s^{(n)}_1, s^{(n)}_2, s^{(n)}_3, \dots, s^{(n)}_N$).
\end{itemize}
We define a \emph{connection type} of a layer $V_i$ with $V_m$ in the graph $S_n$ as the $N$-element
sequence of colors of edges from the list bellow:
$$
(\{s^{(n)}_1, v_m^{(n)}\}, \{s^{(n)}_2, v_m^{(n)}\}, \dots, \{s^{(n)}_N, v_m^{(n)}\})
$$
E.g. in the graph bellow, the connection type of layer $V_i = V_3$ with $V_1$ is $\a\b\b\a$,
and with $V_2$ --- $\a\a\b\a$ (remembering that $\b$ is treated as lack of an edge):
\begin{center}
    \begin{tikzpicture}[scale=0.8*\scaleOfPictures]
        \node (up) [Node] at (1.5, 1) {};
        \node (up2) [Node] at (1.5, 2) {};
        \node (n1) [Node] at (0, 0) {};
        \node (n2) [Node] at (1, 0) {};
        \node (n3) [Node] at (2, 0) {};
        \node (n4) [Node] at (3, 0) {};
        \node [] at (-1.9, 0) {$V_i$};
        \node [] at (-1.9, 1) {$V_2$};
        \node [] at (-1.9, 2) {$V_1$};
        \node [] at (-0.8, 0) {\dots};
        \node [] at (-0.8, 1) {\dots};
        \node [] at (-0.8, 2) {\dots};
        \node [] at (3.8, 0) {\dots};
        \node [] at (3.8, 1) {\dots};
        \node [] at (3.8, 2) {\dots};
        \foreach \x in {1,2,3,4} {
        \node at (\x-1, -0.8) {\scriptsize$s^{(n)}_{\x}$};
        }
        \begin{pgfonlayer}{background}
            \lEdge[cA]{n1}{}{up}
            \lEdge[cA]{n2}{}{up}
            \lEdge[cA]{n4}{}{up}
            \lEdge[cA]{n1}{}{up2}
            \lEdge[cA]{n4}{}{up2}
            \lEdge[cA]{up}{}{up2}
        \end{pgfonlayer}
        \begin{pgfonlayer}{prebackground}
            \draw[Selection] (-2, 0) -- (4, 0);
            \draw[Selection] (-2, 1) -- (4, 1);
            \draw[Selection] (-2, 2) -- (4, 2);
        \end{pgfonlayer}
    \end{tikzpicture}
\end{center}
\vspace{-3mm}
Furthermore, we define the type of graph $S_n$ to be the sequence of types arising between $V_i$ and other layers
plus the list of edge-colors between all pairs of vertices $v_\bullet^{(n)}$ (enumerated in some consistent way).
As there are only finitely many such types, by pigeonhole principle there
exists a pair of graphs $S_a$ and $S_b$ with the same type.

Let us fix some order on vertices of $G_1$: $V(G_1) = \{g_1, g_2, \dots, g_N\}$.
Let $h$ be the partial isomorphism that moves the vertices as follows:
\begin{align*}
    s^{(a)}_1 &\rightarrow g_1 & s^{(b)}_1 &\rightarrow f(g_1)\\
    &\dots &  &\dots\\
    s^{(a)}_N &\rightarrow g_N & s^{(b)}_N &\rightarrow f(g_N)
\end{align*}
By homogeneity, it has to extend to a full automorphism $h'\in \aut{\G}$. In particular, in the
neighbourhood of $G_1$ and $G_2$ there will be images of all vertices $v_\bullet^{(\alpha)}$ of graphs $S_a$ and
$S_b$:
$$
h'\!\left(v_1^{(\alpha)}\right),
h'\!\left(v_2^{(\alpha)}\right), \ \dots\ ,
h'\!\left(v_{i-1}^{(\alpha)}\right),
h'\!\left(v_{i+1}^{(\alpha)}\right), \ \dots\ ,
h'\!\left(v_k^{(\alpha)}\right)
$$
(for $\alpha$ in $\{a, b\}$). What follows is that $G_1$ with added vertices $h'(v_\bullet^{(a)})$ has
the same type as $G_2$ with $h'(v_\bullet^{(b)})$ respectively (that type
may differ from the type of $S_a$ and $S_b$ though!).
It is best illustrated on a picture:

\begin{center}
    \begin{tikzpicture}[xscale=0.53, yscale=0.8]
        \begin{pgfonlayer}{background}
            \foreach \copy in {0, 4.5} {
            \foreach \color/\y in {cB/4,cC/3,cX/2,cA/1} {
            \shade[top color=\color,bottom color=\color,opacity=1]
            (\copy, 0) -- (\copy+3, 0) -- (\copy+1.5, \y) -- cycle;
            \draw[Edge, draw=\color] (\copy, 0) -- (\copy+3, 0) -- (\copy+1.5, \y) -- cycle;
            }
            }
            \foreach \copy in {13.5, 18} {
            \foreach \color/\y in {cX/4,cC/3,cA/2,cB/1} {
            \shade[top color=\color,bottom color=\color,opacity=1] (\copy, 0) -- (\copy+3, 0) -- (\copy+1.5, \y) -- cycle;
            \draw[Edge, draw=\color] (\copy, 0) -- (\copy+3, 0) -- (\copy+1.5, \y) -- cycle;
            }
            }
            \draw[FArrow2,cA,out=0, in = 180] (9.2,1) to (11.8,2);
            \draw[FArrow2,cX,out=0, in = 180] (9.2,2) to (11.8,4);
            \draw[FArrow2,cC,out=0, in = 180] (9.2,3) to (11.8,3);
            \draw[FArrow2,cB,out=0, in = 180] (9.2,4) to (11.8,1);
        \end{pgfonlayer}
        \foreach \copy/\oldname in {0/0, 4.5/5, 13.5/13, 18/18} {
        \foreach \x in {0,1,2,3} {
        \node (c\oldname-n\x) [Node] at (\copy+\x, 0) {};
        }
        \foreach \y in {1,2,...,4} {
        \node (c\oldname-up\y) [Node] at (\copy+1.5, \y) {};
        }
        }
        \foreach \copy/\id in {0/a, 4.5/b} {
        \foreach \y in {1,2,...,4} {
        \node (current) [Basic, scale=1,anchor=center] at (\copy+2+\y*0.375, 5-\y) {\footnotesize $v_{\y}^{(\id)}$};
        \begin{pgfonlayer}{background}
        \draw ($(current.west) + (-0.1,0)$) -- (\copy+1.5, 5-\y);
        \end{pgfonlayer}
        }
        }
        \foreach \copy/\id in {13.5/a, 18/b} {
        \foreach \y/\idx in {1/3,2/2,3/4,4/1} {
        \node (current) [Basic, scale=1,anchor=center] at (\copy+0.7-\y*0.375, 5-\y) {\footnotesize $h'\left(v_{\idx}^{(\id)}\right)$};
        \begin{pgfonlayer}{background}
        \draw ($(current.east) + (0.1,0)$) -- (\copy+1.5, 5-\y);
        \end{pgfonlayer}
        }
        }
        \foreach \x/\d in {1/1,2/0,3/1,4/0} {
        	\node[scale=1] (current1) at (\x   -1, -0.6-\d*0.4) {\footnotesize $s^{(a)}_{\x}$};
        	\node[scale=1] (current2) at (\x+4.5 -1, -0.6-\d*0.4) {\footnotesize $s^{(b)}_{\x}$};
        	\node[scale=1] (current3) at (\x+13.5-1, -0.6-\d*0.4) {\footnotesize $g_{\x}$};
        	\node[scale=1] (current4) at (\x+18-1, -0.6-\d*0.4) {\footnotesize $f(g_{\x})$};
        	\begin{pgfonlayer}{background}
        		\draw (current1.north) -- (\x   -1, 0);
        		\draw (current2.north) -- (\x+4.5   -1, 0);
        		\draw (current3.north) -- (\x+13.5   -1, 0);
        		\draw (current4.north) -- (\x+18   -1, 0);
        	\end{pgfonlayer}
        }
        \node [scale=1] at (-0.9, 4) {$V_1$};
        \node [scale=1] at (-0.9, 3) {$V_2$};
        \node [scale=1] at (-0.9, 2) {$V_3$};
        \node [scale=1] at (-0.9, 1) {$V_4$};
        \node [scale=1] at (-0.9, 0) {$V_i$};

        \node[] at (0+1.5, 4.7) {$S_a$};
        \node[] at (4.5+1.5, 4.7) {$S_b$};
        \node[] at (10.5, -1.8) {$h$};

        \node[] at (1.5+13, -1.9) {$G_1$};
        \node[] at (1.5+18, -1.9) {$G_2$};

        \draw[FArrow2, transform canvas={yshift=-1.0cm}] ($(c0-n0)!0.5!(c0-n3)$) to[bend right=18] ($(c13-n0)!0.5!(c13-n3)$);
        \draw[FArrow2, transform canvas={yshift=-1.0cm}] ($(c5-n0)!0.5!(c5-n3)$) to[bend right=18] ($(c18-n0)!0.5!(c18-n3)$);

        \begin{pgfonlayer}{prebackground}
            \draw[Selection,draw=cLightGray, double distance=0.5cm] (-0, 0) -- (21, 0);
            \draw[Selection,draw=cLightGray, double distance=0.5cm] (-0, 1) -- (21, 1);
            \draw[Selection,draw=cLightGray, double distance=0.5cm] (-0, 2) -- (21, 2);
            \draw[Selection,draw=cLightGray, double distance=0.5cm] (-0, 3) -- (21, 3);
            \draw[Selection,draw=cLightGray, double distance=0.5cm] (-0, 4) -- (21, 4);
        	\fill[cGray, fill=cLightGray, opacity=0.3, rounded corners=0.1cm] (0-0.5, -2.2) rectangle (5+3.5, 0.5);
        	\fill[cGray, fill=cLightGray, opacity=0.3, rounded corners=0.1cm] (13-0.5, -2.2) rectangle (18+3.5, 0.5);
        \end{pgfonlayer}
    \end{tikzpicture}
\end{center}
\vspace{2mm}
Above, the colored triangles represent the types of connections. The order of those types
may get permuted when applying $h'$, but still --- in line with the remark $\bigstar$ --- for each
$\beta \in \{1,2,\dots, k\} \setminus \{i\}$
the vertex $h'\!\left(v_\beta^{(a)}\right)$ must stay in the same layer
as $h'\!\left(v_\beta^{(b)}\right)$, furthermore their type of connection with layer $V_i$ is preserved.

Extending the isomorphism $f$ in a natural way (thanks to the compatibility of types) on those newly
obtained vertices:
$$
h'\left(v_\bullet^{(a)}\right) \xrightarrow{\hspace{0.5cm}f\hspace{0.5cm}} h'\left(v_\bullet^{(b)}\right)
$$
we get an isomorphism that this time 'operates' on all layers $V_\bullet$. If we now extend it to
an automorphism of the whole $\G$, we will get a function that fixes all layers $V_\bullet$. This function
may be safely restricted to $V_i \cup V_j$, staying a correct automorphism of our initial bipartite
graph $\G_{i,j}$, which completes the proof.
\end{proof}
We are going to apply to graphs $\G_{i,j}$ the following classification result:
%
\begin{theorem}[\cite{multipartite12}]\label{theorem:bipartiteClassification}
A countably infinite homogeneous bipartite graph (with distinguishable sides) is either empty, or full, or a perfect matching, or 
the complement of a perfect matching, or a \emph{universal} graph.
\end{theorem}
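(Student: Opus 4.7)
My plan is to adopt a Fra\"iss\'e-theoretic approach based on the age $\age{\G}$, classifying $\G$ by analyzing which finite bipartite graphs lie in it. Homogeneity with distinguishable sides forces $\aut{\G}$ to act transitively on each of the two sides $L,R$, as well as on pairs inside $L$ and pairs inside $R$ (since the induced structure on a within-side pair is featureless), so the realizability of any isomorphism type of a small induced substructure is an invariant of $\G$. The only nontrivial $2$-vertex substructure is a single $L$-$R$ pair, which is either an edge or a non-edge. If only one of these types is realized, transitivity immediately yields the full or the empty graph; otherwise both are realized and we pass to a three-vertex analysis.

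I would next look at $3$-vertex substructures of type $LLR$ (two $L$-vertices and one $R$-vertex) and $LRR$ (symmetric), each classified by its number of edges $(0,1,2)$. A short argument shows that in the nondegenerate case the $1$-edge type must be realized on each side: otherwise every $v\in R$ would be uniformly adjacent to all or none of $L$, contradicting the coexistence of edges and non-edges by transitivity on $R$. The remaining dichotomy is whether the $0$-edge and $2$-edge types are realized.

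If the $2$-edge $LLR$ type is missing, neighborhoods in $R$ are pairwise disjoint; by transitivity on $R$-pairs each $v\in R$ then has at most one $L$-neighbor, and the possibility of two $v\neq v'$ sharing the same unique $L$-neighbor would contradict transitivity on $R$-pairs, so in fact every vertex has degree exactly one, yielding a perfect matching. Dually, if the $0$-edge $LLR$ type is missing, each $v\in R$ has at most one $L$-non-neighbor and the symmetric argument yields the complement of a perfect matching. If both $0$- and $2$-edge $LLR$ types are missing, then neighborhoods in $R$ are pairwise disjoint and pairwise cover $R$; taking three distinct $u_1,u_2,u_3\in L$ forces $N(u_3)=\emptyset$, contradicting the nondegenerate case. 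A parallel analysis on $LRR$ leads to the same conclusions by duality, and any combination not yielding one of the four special cases forces all three edge-counts to be realized on both $LLR$ and $LRR$.

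In this remaining ``rich'' case I would prove by induction on $n$ that for every finite $A\subseteq R$ and every subset $T\subseteq A$ there is $u\in L$ with $N(u)\cap A=T$, and dually. The inductive step amalgamates the $(n-1)$-vertex extension $(A\setminus\{v_0\})\cup\{u\}$ realizing $T\setminus\{v_0\}$ with the trivial extension $(A\setminus\{v_0\})\cup\{v_0\}$, and uses the realized $3$-vertex $LRR$ configurations on $\{u,v_0,v_i\}$ to verify that both possible adjacencies $u\sim v_0$ and $u\not\sim v_0$ extend to amalgamated structures in $\age{\G}$. Iterating shows $\age{\G}$ contains every finite bipartite graph, and Fra\"iss\'e uniqueness identifies $\G$ as the universal bipartite graph. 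The main obstacle is precisely this last step: one must promote the local hypothesis ``all $3$-vertex types realized'' to a genuinely free amalgamation property, so that no larger finite bipartite graph is secretly forbidden through some hidden constraint propagated from the lower-arity ones.
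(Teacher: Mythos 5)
The paper \emph{cites} this theorem from the reference \texttt{multipartite12} (a classification of countably infinite homogeneous multipartite graphs) and does not prove it, so there is no internal proof to compare against.

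Your sketch follows the natural Fra\"iss\'e-theoretic route: determine which isomorphism types of $\leq 3$-vertex substructures occur in the age, dispatch the degenerate cases, and identify the remaining case as the generic bipartite graph. The intermediate case analysis is essentially sound, with minor slips. For instance, in the sub-case where both the $0$-edge and the $2$-edge $LLR$ types are missing, the contradiction is not that $N(u_3)=\emptyset$; rather, for any three $u_1,u_2,u_3\in L$ one gets $N(u_2)=N(u_3)=R\setminus N(u_1)\neq\emptyset$, so any $v\in N(u_2)$ together with $u_2,u_3$ realizes the $2$-edge $LLR$ type after all. Also, the degree argument for the perfect-matching case should make explicit that it uses vertex-transitivity on both sides (not only $R$-pairs) to conclude that $f:R\to L$, $v\mapsto$ its unique neighbor, is a bijection.

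The gap you flag at the end is genuine and is precisely where the real work of the classification lies. From ``all $3$-vertex types are realized'' plus the amalgamation property of the age one cannot directly conclude that the age is the class of \emph{all} finite bipartite graphs. Concretely, one must rule out, e.g., a homogeneous bipartite graph of constant finite degree $d\geq 2$ on both sides: such a structure would realize all $3$-vertex types, yet its age would omit, say, $K_{d+1,1}$. Excluding this requires an additional argument — typically one first shows that in the rich case every vertex has both infinite degree and infinite co-degree, then runs a back-and-forth to establish the one-point extension (Alice's-Restaurant) property. Your proposed inductive amalgamation step does not itself control the new $u$--$v_0$ adjacency, and absent strong amalgamation (which the theorem statement does not assume) you cannot simply invoke ``free'' amalgamation to keep vertices distinct and adjacencies unconstrained. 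So as written the proposal stops short of a proof exactly at the step you identified.
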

From our point of view, all we need to know about the universal graph is that it
contains arbitrarily long paths which -- translated to our notation -- would mean that $\G_{i,j}$ contains
arbitrarily long $\a$-paths. Therefore in our further considerations we assume that
$\G_{i,j}$ is not universal which, in our notation, leaves two types of $\G_{i,j}$:
\begin{enumerate}
    \item all edges of $\G_{i, j}$ have the same color $\x \in \{ \a, \b\}$, i.e. $\G_{i,j}$
    is a full or empty bipartite graph,
    \item one of the colors $\x \in \{\a, \b\}$ forms a perfect matching in $\G_{i,j}$, the second
    one ($\y \neq \x$) is then the complement of this matching.
\end{enumerate}
%

%
%
Graphs of type 2. 
may be seen as bijections between their sets of vertices (layers).
Lemma~\ref{lemma:compatibleBijections} states that those bijections
have to preserve other graphs, up to complement. 
%

\begin{lemma}\label{lemma:compatibleBijections}
%
Let $V_i, V_j, V_k$ be some arbitrary pairwise
different layers, such that $\G_{i, j}$ is of type 2 and $\psi : V_i \rightarrow V_j$ is
the bijection it determines. 
Then $\psi$ takes $\a \cap(V_i \cup V_k)^2$ to $\a\cap (V_j \cup V_k)^2$, or to its complement. Formally:
\vspace{-1mm}
\begin{align*}
    & \left(\forall_{u \in V_i} \forall_{v \in V_k}
    \underbrace{u \mathrel{\a} v}_{\clubsuit}
    \Leftrightarrow
    \underbrace{\psi(u) \mathrel{\a} v}_{\spadesuit} \right)
    \vee \left(\forall_{u \in V_i} \forall_{v \in V_k}
    \underbrace{\neg u \mathrel{\a} v}_{\heartsuit}
    \Leftrightarrow
    \underbrace{\psi(u) \mathrel{\a} v}_{\diamondsuit} \right)
\end{align*}
\end{lemma}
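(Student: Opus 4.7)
My plan is to argue by contradiction. Assume both disjuncts fail and pick witnesses $(u_1, v_1), (u_2, v_2) \in V_i \times V_k$ such that $(u_1,v_1)$ ``agrees'' (the color pair $(u_1 v_1, \psi(u_1) v_1)$ is in $\{(\a,\a),(\b,\b)\}$) while $(u_2,v_2)$ ``disagrees'' (the pair is in $\{(\a,\b),(\b,\a)\}$). The first step is to collapse the two witnesses onto a common $V_i$-vertex. Since both $u_1\psi(u_1)$ and $u_2\psi(u_2)$ are edges of the matching color $M$ of $\G_{i,j}$, the partial isomorphism $u_1,\psi(u_1) \mapsto u_2,\psi(u_2)$ is valid; augmenting it with some $c \in V_k$ and target $c' \in V_k$ of matching 2-type over the respective $M$-edges -- such $c, c'$ exist via Lemma~\ref{lemma:kPartiteRestrictedIsHomoheneous} combined with the classification of Theorem~\ref{theorem:bipartiteClassification} applied to $\G_{i,k}$ and $\G_{j,k}$ -- I extend to an automorphism $\phi \in \aut{\G}$ that setwise fixes $V_i$, $V_j$, and $V_k$. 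Pulling $(u_2, v_2)$ back via $\phi^{-1}$ then yields a single $u \in V_i$ together with $v_1, v_2 \in V_k$ exhibiting both phenomena simultaneously. After possibly swapping the roles of $\a$ and $\b$, I assume $u v_1 = u v_2 = \a$ with $\psi(u) v_1 = \a$ (agreement) and $\psi(u) v_2 = \b$ (disagreement).

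For the contradiction, I invoke one more homogeneity step. Because $u v_1 = u v_2 = \a$, the partial isomorphism $u, v_1 \mapsto u, v_2$ is valid and extends to some automorphism $\phi' \in \aut{\G}$. This $\phi'$ fixes $V_i$ (as $\phi'(u) = u$) and $V_k$ setwise (as $v_1, v_2 \in V_k$). Set $w := \phi'(\psi(u))$; then $u w = M$. If $w = \psi(u)$, then $\phi'$ preserves $\psi(u)$ and must send the $\a$-edge $\psi(u) v_1$ to the $\b$-edge $\psi(u) v_2$, contradicting color preservation. Hence $w \neq \psi(u)$; since $\psi(u)$ is the unique $M$-neighbor of $u$ in $V_j$, $w$ cannot live in $V_j$, and it lies neither in $V_i$ nor in $V_k$ because the within-layer color $\c$ differs from $M$. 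So $w$ must inhabit some further layer $V_l$ with $l \notin \{i,j,k\}$.

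If $\G$ has only three layers the proof is already complete; the main obstacle is disposing of the case with extra layers. The plan there is to observe that $\phi'$ restricts to an isomorphism of bipartite graphs from $(V_i, V_j)$ onto $(V_i, V_l)$, which by Theorem~\ref{theorem:bipartiteClassification} (applied via Lemma~\ref{lemma:kPartiteRestrictedIsHomoheneous}) forces $\G_{i,l}$ also to be a matching of color $M$, with an induced bijection $\psi_l : V_i \to V_l$ satisfying $w = \psi_l(u)$; analogously $\phi'$ carries $\G_{j,k}$ onto $\G_{l,k}$, so these two bipartite graphs must share the same type. Iterating the construction of $\phi'$ produces a sequence of ``parallel'' matching partners of $u$ across ever new layers; finiteness of the number of layers (Case~C of Lemma~\ref{lemma:withBranches}) forces this sequence eventually to revisit a layer, and tracking how the original $\a$-versus-$\b$ discrepancy between $v_1$ and $v_2$ propagates around the resulting cycle delivers the sought contradiction. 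I expect this cycle-tracing propagation to be the principal technical hurdle of the argument.
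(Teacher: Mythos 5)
There is a genuine gap. Your argument runs into the many-layers case and your own closing sentence concedes that the ``cycle-tracing propagation'' is an unresolved technical hurdle; as written, this is not a proof. The hurdle is created by your choice of partial isomorphism: $u, v_1 \mapsto u, v_2$ touches only $V_i$ and $V_k$, so the extension $\phi'$ is free to move $V_j$ to an arbitrary layer $V_l$, and you are then stuck having to control an unbounded chain of layers. The paper avoids this entirely with a small but decisive preparatory trick: before extending the partial isomorphism $\{x \mapsto x',\, y \mapsto y'\}$ to an automorphism, it first augments it with a fixed point $v \mapsto v$ for a vertex $v \in V_j$ chosen generically --- specifically, $v$ avoids $\psi(\{x, x'\})$ and, when $\G_{j,k}$ is of type~2 with bijection $\phi$, also avoids $\phi(\{y, y'\})$; since only finitely many vertices are excluded while $V_j$ is infinite, such $v$ exists, and because $\G_{i,j}$ (resp.\ $\G_{j,k}$) has all but one cross-edge of a single color, adding $v \mapsto v$ preserves the partial isomorphism. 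Now the extension fixes $V_j$ setwise, the matching structure of $\G_{i,j}$ forces the image of $\psi(x)$ to be $\psi(x')$, and color-preservation immediately yields $\psi(x') \mathrel{\a} y'$, contradicting the assumption. No iteration, no extra layers. You should also note that your ``after possibly swapping $\a$ and $\b$'' step does not justify $u v_1 = u v_2$: if $u v_1 \neq u v_2$, then agreement at $v_1$ and disagreement at $v_2$ force $\psi(u) v_1 = \psi(u) v_2$, so you would need to re-run your second step with $\psi(u)$ in place of $u$ as the fixed pivot --- a repairable omission, but one you did not address.
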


\begin{proof}  
We head towards a contradiction. Negating the claim we get:  
\begin{align*}
    \Big( \exists_{u \in V_i} \exists_{v \in V_k}
    \neg \clubsuit \wedge \spadesuit
    \ \ \vee\ \
                \clubsuit \wedge \neg \spadesuit
    \Big)
    \ \wedge \ 
    \Big( \exists_{u \in V_i} \exists_{v \in V_k}
    \neg \heartsuit \wedge \diamondsuit
    \ \ \vee\ \
                \heartsuit \wedge \neg \diamondsuit
    \Big)
\end{align*}
which leads to four cases with similar proofs. We will consider one of them
(corresponding to $\neg\heartsuit \wedge \diamondsuit$ and $\clubsuit \wedge \neg \spadesuit$) and omit the other.
Let us then assume that there exist $x, x' \in V_i$ and $y, y' \in V_k$ such that:
$$
x\mathrel{\a}y
\ \ \wedge\ \
        x'\mathrel{\a}y'
\ \ \wedge\ \
        \psi(x)\mathrel{\a}y
\ \ \wedge\ \
        \neg \psi(x')\mathrel{\a}y'.
$$
Let $g$ be a partial isomorphism of the form $g = \{ x \rightarrow x', y \rightarrow y'\}$. By
homogeneity of $\G$, there is some full automorphism $g' \in \aut{\G}$ extending $g$.
If additionally we were able to force $g$ to fix the layer $V_j$, we would be almost done.
Let us try to achieve that property.

For that purpose, in $V_j$ we choose a vertex $v$ such that:
\begin{enumerate}
    \item[I.] $v \notin \psi(\{x, x'\})$,
    \item[II.] if $\G_{j,k}$ is a graph of type 2. defining a bijection
    $\phi:V_k \rightarrow V_j$, then also $v \notin \phi(\{y, y'\})$.
\end{enumerate}
Clearly such vertex must exist -- two above conditions  exclude at most 4 different vertices
from the infinite set of candidates.
The function $g$ extended with $v \xrightarrow{g} v$ stays a correct isomorphism, because:
\begin{itemize}

    \item in $\G_{i,j}$ by definition of isomorphism we need the edges $\{x,
    v\}$ and $\{g(x), g(v)\}$ to be equally colored, and, in fact, they are.
    We get this thanks to the condition I.: $x$ is connected with all vertices from
    $V_j \setminus \{\psi(x)\}$ by $\x$-edges, $\x \in
    \{\a, \b\}$. We similarly handle $x'$.

    \item in turn in $\G_{j,k}$ --- if it is a graph of type 1., the needed equality of colors of
    edges $\{y, v\}$ and $\{g(y), g(v)\}$ trivially holds.
    If it is a graph of type 2., the equality of colors is derived similarly as in $\G_{i,j}$,
    using the condition II.

\end{itemize}
Presence of the vertex $v$ ensures that layer $V_j$ is preserved by
the full automorphism $g' \in \aut{\G}$ we get by
homogeneity.

Since $\G_{i,j}$ is of type 2., the vertex $\psi(x')$ is the only possible choice
for the image of $\psi(x)$ under $g'$ --- this is the only vertex $x'$ is connected to
by an appropriately colored edge. Because $g'$ is an automorphism, we get that $\psi(x') \mathrel{\a} y'$,
which leads us to the contradiction.
%
\end{proof}

\noindent
From the lemma we have just proved one easily derives the following corollary:

\begin{corollary}\label{corollary:bipartiteAreCompatible}
The following relation $\equiv$ on layers is transitive: 
$$
V_i \equiv V_j \Leftrightarrow \text{ the graph $\G_{i,j}$ is of type 2.}
$$
Furthermore, if $V_i \equiv V_j$ and $V_j \equiv V_k$ then
$f_{j,k}\circ f_{i,j} = f_{i,k}$, where $f_{i,j}, f_{i,k}, f_{j,k}$ are the bijections
determined by graphs $\G_{i,j}, \G_{i,k}$ and $\G_{j,k}$.

\end{corollary}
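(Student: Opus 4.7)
The plan is to derive both claims from a single application of Lemma~\ref{lemma:compatibleBijections}, followed by a short case analysis on the matching color of $\G_{j,k}$.

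I would fix pairwise distinct indices $i, j, k$ with $V_i \equiv V_j$ and $V_j \equiv V_k$, and write $\psi := f_{i,j}$ and $\phi := f_{j,k}$. Invoking Lemma~\ref{lemma:compatibleBijections} on the triple $(V_i, V_j, V_k)$ (using the fixed color $\a$) tells us that for every $u \in V_i$ the $\a$-neighborhood of $u$ inside $V_k$ is either equal to, or the complement of, the $\a$-neighborhood of $\psi(u)$ inside $V_k$. On the other hand, since $\G_{j,k}$ is of type 2, the $\a$-neighborhood of $\psi(u)\in V_j$ in $V_k$ is either the singleton $\{\phi(\psi(u))\}$ (when $\a$ is the matching color of $\G_{j,k}$) or its complement $V_k \setminus \{\phi(\psi(u))\}$ (when $\b$ is).

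Combining the two dichotomies produces four subcases, and in each of them the $\a$-neighborhood of $u$ in $V_k$ is either a singleton or the complement of a singleton, with the distinguished vertex being uniformly $\phi(\psi(u))$. This already shows that $\G_{i,k}$ is of type 2 --- its matching color being $\a$ or $\b$ depending on the subcase --- so $V_i \equiv V_k$, which establishes transitivity. Moreover, the unique matched partner of $u$ inside $V_k$, namely $f_{i,k}(u)$, equals $\phi(\psi(u))$ in every subcase, so $f_{i,k} = f_{j,k} \circ f_{i,j}$, settling the second claim.

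No serious obstacle is expected here; the one piece of bookkeeping is that the matching color may legitimately differ among the three bipartite graphs $\G_{i,j}$, $\G_{j,k}$, $\G_{i,k}$, but since the bijection determined by a type-2 graph is intrinsically characterized as sending each vertex to its unique neighbor of the matching color, this variation has no effect on the identity of the functions themselves.
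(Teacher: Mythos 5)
Your argument is correct and is precisely the computation the paper leaves implicit (the paper states only that the corollary ``easily derives'' from Lemma~\ref{lemma:compatibleBijections} and gives no proof of its own). A few points worth keeping straight when writing it out in full. First, the disjunction supplied by Lemma~\ref{lemma:compatibleBijections} is \emph{global}: one of the two disjuncts holds for \emph{all} $u \in V_i$ and $v \in V_k$ simultaneously, not per-vertex. Your phrase ``for every $u\in V_i$ the $\a$-neighborhood of $u$ inside $V_k$ is either equal to, or the complement of, \dots'' reads like a per-vertex alternative, which would be too weak to conclude that $\G_{i,k}$ is type~2 (since type~2 requires the singleton/co-singleton structure to be uniform across all of $V_i$); but your later talk of ``two dichotomies'' and ``four subcases,'' each holding uniformly, shows you mean the global version, which is what the lemma actually gives. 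Second, to close the argument one must observe that $u \mapsto \phi(\psi(u))$ is a bijection $V_i \to V_k$ (clear, since $\psi$ and $\phi$ are), so that the distinguished-vertex relation is genuinely a perfect matching and $\G_{i,k}$ falls in case~2 rather than any other case of Theorem~\ref{theorem:bipartiteClassification}; you leave this implicit but it is immediate. Your final remark about the matching color possibly differing across $\G_{i,j}$, $\G_{j,k}$, $\G_{i,k}$ is exactly the right bookkeeping point: the bijection is canonically determined by the graph, not by a choice of color.
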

In Lemma~\ref{lemma:nonGenericMultipartiteGraphsFormWQO} below, which is the last step of the proof of case C), we will apply the following fact (a special case thereof is shown in Theorem 2 in~\cite{KL11}):
\begin{lemma}\label{lemma:FamilyOfIdenticalSlices}
Consider a homogeneous 3-graph $\G$ and a partition of its vertex set 
$
V \ = \ \bigcup_{n \in \nat} U_n
$
into sets $U_\bullet$ of equal finite cardinality.
Suppose further that for every $n\in\nat$, there is an automorphism $\pi_n$ of $\G$ that swaps 
    $U_0$ with $U_n$ and is identity elsewhere.
Then $\G$ admits \wqo.
\end{lemma}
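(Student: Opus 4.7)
The plan is to use the swap automorphisms to pin down an explicit, uniform description of $\G$, and then deduce \wqo by a multiset argument. The crux is that, under the hypotheses, $\G$ is isomorphic as a 3-graph to the ``product'' $\nat \times U_0$ in which the intra-slice structure is inherited from $U_0$, and the inter-slice coloring between any two distinct slices is given, via canonical bijections, by one and the same symmetric function $R: U_0 \times U_0 \to \Colors$.

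To establish this, I fix $\beta_n = \pi_n|_{U_0}: U_0 \to U_n$ as a canonical bijection for each $n \geq 1$. A short computation shows that for distinct $n, m \geq 1$ the composition $\pi_m \circ \pi_n \circ \pi_m$ is an automorphism fixing $U_0$ pointwise and exchanging $\beta_n(u) \leftrightarrow \beta_m(u)$ for every $u \in U_0$. Combined with the fact that each $\pi_n$ itself is an automorphism, this forces every slice to carry the same induced structure as $U_0$ and the inter-slice bipartite coloring between any two distinct slices to be the same function $R$ (independent of the pair), which is moreover symmetric.

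Next, I encode a labeled finite induced substructure $\ell: \str{A} \to X$ of $\G$ as a finite multiset of \emph{slice-profiles}: for each slice $U_n$ meeting $\str{A}$, record the pair $(T_n, \ell_n)$ with $T_n = \beta_n^{-1}(\str{A} \cap U_n) \subseteq U_0$ and $\ell_n = \ell \circ \beta_n|_{T_n}$. The set $P$ of all nonempty slice-profiles embeds into $(X \cup \{*\})^{U_0}$ with $*$ treated as a fresh minimum; since $U_0$ is finite, Dickson's lemma yields that $P$ is a \wqo, and Higman's lemma then gives the same for $(\multisets{P}, \submultiseteq)$.

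Finally, I define $F: \multisets{P} \to \age{\G}$ (with $X$-labelings) by realizing a multiset $\{p_1, \ldots, p_r\}$ as the labeled substructure that places each $p_i = (T_i, \ell_i)$ in a distinct slice of $\G$. Using free interchangeability of slices, $F$ is well-defined up to isomorphism and surjective (one checks that $F$ recovers every $(\str{A}, \ell)$ from $M(\str{A}, \ell)$). Moreover, whenever $\{p_i\} \submultiseteq \{p'_j\}$ via an injection $\sigma$ with $p_i \leq_P p'_{\sigma(i)}$, the map $(i, u) \mapsto (\sigma(i), u)$ is a labeled embedding of $F(\{p_i\})$ into $F(\{p'_j\})$: intra-slice colors come from $U_0$, inter-slice colors equal $R(u, v)$ regardless of slices, and labels grow under $\leq$ by construction. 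A monotone surjection from a \wqo forces $\embedsinpar{X}$ to be a \wqo. The main obstacle is the rigidity step: one has to verify carefully that the various $\beta_n$ fit together so that $R$ really is the same function for every pair of slices; once this is in hand, the remainder is standard Higman-style reasoning.
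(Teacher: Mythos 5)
Your proposal follows essentially the same route as the paper: it uses the swap automorphisms to establish that all slices are isomorphic and interact uniformly, encodes finite labeled substructures as multisets of slice-profiles, and concludes via the standard Dickson/Higman multiset argument. The paper packages the uniformity step by enriching $\G$ with unary layer predicates $V_u = \{\pi_n(u)\}_{n\in\nat}$ rather than spelling out the inter-slice function $R$, but the underlying idea and the final reduction to $\multisets{\age{G_0}}$ are the same.
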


\begin{proof}   
Let $\G = (V, \a, \b, \c)$ be a 3-graph.
Define for $u \in U_0$ the sets $V_u \subseteq V$, which we call \emph{layers}: 
$$
V_u = \setof{\pi_n(u)}{n\in\nat}.
$$
We will prove that the structure 
$\G' = (V, \a, \b, \c, (V_u)_{u\in U_0})$ admits \wqo. 
This will imply that $\G$ admits \wqo as well; indeed,
compared to $\G$, structure $\G'$ is equipped with
additional unary relations $V_\bullet$, which only makes the order $\embedsin$ in $\age{\G'}$ finer than the analogous order in $\age{\G}$.

Let $G_n$ denote the induced substructure of $\G'$ on vertex set $U_n$. 
By the assumptions, for every $n,m \in\nat$ there is a swap of $U_n$ and $U_m$ that, extended with identity elsewhere,
is an automorphism of $\G'$. 
In consequence, all structures $G_\bullet$ are isomorphic, and the embedding order $\embedsin$
of induced substructures of $\G'$ is isomorphic to finite multisets over $\age{G_0}$, ordered by multiset inclusion.
Thus $(\age{\G'}, \embedsin)$ is isomorphic to the multiset inclusion in $\multisets{\age{G_0}}$, which is a \wqo as
$U_0$ is finite.
For any \wqo $(X, \leq)$, analogous isomorphism holds between
the lifted embedding order $(\age{\G'}, \embedsinpar{X})$ and  the multiset inclusion in multisets over 
induced substructures of $G_0$ labeled by elements of $X$, and again the latter order is a \wqo.
Thus $\G'$ admits \wqo.
\end{proof}
\begin{lemma}\label{lemma:nonGenericMultipartiteGraphsFormWQO}
The 3-graph $\G$ 
admits \wqo.
\end{lemma}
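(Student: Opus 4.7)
The plan is to apply Lemma~\ref{lemma:FamilyOfIdenticalSlices}: it suffices to exhibit a partition $V = \bigcup_{n \in \nat} U_n$ into finite sets of equal cardinality, together with, for each $n$, an automorphism of $\G$ that swaps $U_0$ with $U_n$ and is the identity elsewhere.

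First I would leverage the classification of the bipartite pieces $\G_{i,j}$. By Lemma~\ref{lemma:kPartiteRestrictedIsHomoheneous} each $\G_{i,j}$ is homogeneous, so by Theorem~\ref{theorem:bipartiteClassification}, together with the fact already noted that a universal $\G_{i,j}$ would produce arbitrarily long $\a$-paths (placing us in case D) rather than C)), every $\G_{i,j}$ is either of type~1 (monochromatic, either fully $\a$ or fully $\b$) or of type~2 (one color forms a perfect matching, the other its complement). By Corollary~\ref{corollary:bipartiteAreCompatible}, the relation $\equiv$ (``is of type~2'') is an equivalence relation on $\{V_1,\dots,V_k\}$, and within each class the matchings compose consistently.

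Second, for each equivalence class $E_s$ I would group vertices into \emph{blocks}: consistency of the matchings means that $\bigcup_{V_i \in E_s} V_i$ decomposes into countably many disjoint blocks, each containing exactly one vertex from each $V_i \in E_s$. Fixing any enumeration $B^s_0, B^s_1, B^s_2, \dots$ of the blocks of $E_s$ and setting
\[
U_n \;=\; \bigsqcup_{s=1}^{r} B^s_n,
\]
yields a partition of $V$ into sets of size $k$ (one vertex per layer).

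The crux is to verify that, for each $n$, the transposition $\pi_n$ that swaps $B^s_0$ with $B^s_n$ within every class $E_s$ and is the identity elsewhere is an automorphism of $\G$. This reduces to a case analysis on an edge $\{u,v\}$ with $u \in V_i$, $v \in V_j$: if $i = j$, the color is $\c$ and is preserved because $\pi_n$ respects each layer; if $V_i \not\equiv V_j$, the color between $V_i$ and $V_j$ is constant (type~1) and is therefore trivially preserved; and if $V_i \equiv V_j$ lie in the same class $E_s$, the color records whether $u$ and $v$ inhabit the same block of $E_s$, which $\pi_n$ preserves since it either fixes a block or swaps $B^s_0$ with $B^s_n$ entirely. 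The delicate subcase, which I expect to be the main (though still routine) obstacle, is when exactly one endpoint lies in $U_0 \cup U_n$; there one must check that ``being in the same block'' survives the half-swap, and this uses precisely that $\pi_n$ sends complete blocks to complete blocks. Once $\pi_n$ is confirmed to be an automorphism for every $n$, Lemma~\ref{lemma:FamilyOfIdenticalSlices} applies directly and gives that $\G$ admits \wqo, completing case C).
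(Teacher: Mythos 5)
Your proposal follows essentially the same route as the paper: construct the partition $V = \bigcup_n U_n$ with one vertex per layer, arranged so the type-2 matchings preserve each $U_n$, check that the layer-respecting swap of $U_0$ with $U_n$ is an automorphism by the same three-way case analysis (same layer, type-1 pair, type-2 pair), and conclude via Lemma~\ref{lemma:FamilyOfIdenticalSlices}. The only difference is cosmetic: you make the block structure within each $\equiv$-class explicit and then glue arbitrarily across classes, whereas the paper states the resulting partition directly from Corollary~\ref{corollary:bipartiteAreCompatible}.
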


\begin{proof}  
We are going to prepare the ground for the use of Lemma~\ref{lemma:FamilyOfIdenticalSlices}. 
By Corollary~\ref{corollary:bipartiteAreCompatible}. the vertex set $V$ partitions into
$
V = \bigcup_{n \in \mathbb{N}} U_n
$ so that
\begin{enumerate}
    \item[a)] every layer $V_i$ shares with every set $U_n$ exactly one vertex: 
    $U_n \cap V_i = \{ v^{(n)}_i \}$,
    \item[b)] if $f_{i,j}$ is the bijection determined by $\G_{i,j}$ (a graph of type 2.), then
    $f_{i,j}(v^{(n)}_i) \in U_n$, so all the bijections preserve every set $U_\bullet$.
\end{enumerate}
Intuitively, $\G$ can by cut into thin 'slices' perpendicular to the layers
$V_\bullet$. By thin we mean that the slices have exactly one vertex in each layer.  
The cut is made along the bijections dictated by the graphs of type 2. 
as in the picture bellow:

\begin{center}
    \begin{tikzpicture}[scale=0.4*\scaleOfPictures]
        \foreach \x/\y in {0/0,1/1,0/4,1/5} {
        \foreach \d in {1,2,...,8} {
        \node (y\y-d\d) [Node, anchor=center] at (\x+\d*3, \y) {};
        }
        \node [inner sep=0] at (\x+9*3, \y) {$\dots$};
        \begin{pgfonlayer}{prebackground}
            \draw[Selection, double distance=0.3cm] (\x, \y) -- (\x+9*3+0.5, \y);
        \end{pgfonlayer}
        }
        \node at (0+0.2, 0) {\scriptsize$V_1$};
        \node at (1+0.2, 1) {\scriptsize$V_2$};
        \node at (0+0.2, 4) {\scriptsize$V_3$};
        \node at (1+0.2, 5) {\scriptsize$V_4$};
        \begin{pgfonlayer}{background}
            \foreach \d in {1,2,3,4,5,6,7,8} {
            \node at (3*\d+0.5, 6.5) {$U_{\d}$};
            \foreach \p in {0,1,4,5} {
            \foreach \q in {0,1,4,5} {
            \draw[ThinEdge, cA!40!white] (y\p-d\d) -- (y\q-d\d);
            }
            }

            \draw[dashed, cBlack!40!white]
            (3*\d-0.5, 0-0.5-1) --
            (3*\d+1.5, 1+0.5-1) --
            (3*\d+1.5, 5+0.5+1) --
            (3*\d-0.5, 4-0.5+1) -- cycle
            ;
            }
            \foreach \p in {0,1,4,5} {
            \foreach \q in {0,1,4,5} {
            \draw[ThinEdge, cA] (y\p-d5) -- (y\q-d5);
            }
            }
            \draw[dashed, cBlack]
            (3*5-0.5, 0-0.5-1) --
            (3*5+1.5, 1+0.5-1) --
            (3*5+1.5, 5+0.5+1) --
            (3*5-0.5, 4-0.5+1) -- cycle
            ;
            \node at (3*9+0.5, 6.5) {$\dots$};
        \end{pgfonlayer}

    \end{tikzpicture}
\end{center}
We observe that for every $n$, the bijection $h_n : V \to V$ that swaps $U_1$ and $U_n$ 
along the only bijection $U_1 \rightarrow U_n$ that preserves layers, and is identity elsewhere, is an automorphism of $\G$.
Indeed, for any three slices $U_a, U_b, U_c$ we have that:
$$
v^{(a)}_i \mathrel{\a} v^{(c)}_j \Leftrightarrow v^{(b)}_i \mathrel{\a} v^{(c)}_j
$$
so the edges $\left\{v^{(a)}_i, v^{(c)}_j\right\}$ and $\left\{v^{(b)}_i, v^{(c)}_j\right\}$
are colored the same way.
The above equivalence is obvious in case when $\G_{i,j}$ is a graph of type 1. In the case of graph of type
2., the vertex $v^{(c)}_i$ is connected with all vertices from $V_j$ but one by $\x$-edges
for some $\x \in \{\a, \b\}$. However, the special vertex $f_{i,j}(v_i^{(c)})$ that is not connected by a $\x$-edge,
by the condition  b), also belongs to $U_c$, so it does not interfere with above equivalence.

By Lemma~\ref{lemma:FamilyOfIdenticalSlices} we deduce that $\G$ admits \wqo,
which completes the proof.
\end{proof}

\section{Case B) in the proof of Theorem~\ref{thm:core}}  \label{sec:coreB}


Let $\a$, $\b$ be the two colors such that the graph $H = (V, \a \cup \b)$ is a sum of disjoint cliques.
The color appearing between the cliques we mark as $\c$.
Since the set of vertices $V$ is infinite, the graph 
$H$ cannot be a finite sum of finite cliques.
Furthermore, by homogeneity we have that all $\a\b$-cliques in $\G$ are isomorphic, so
their sizes are equal. We then have three cases to investigate:
\begin{enumerate}
    \item $H$ is a sum of \emph{infinite} number of \emph{infinite} $\a\b$-cliques,
    \item $H$ is a sum of \emph{\phantom{in}finite} number of \emph{infinite} $\a\b$-cliques,
    \item $H$ is a sum of \emph{infinite} number of \emph{\phantom{in}finite} $\a\b$-cliques.
\end{enumerate}

Let us concentrate on the first case.
Because each $\a\b$-clique $K\trianglelefteq \G$ maximal in terms of relation '$\trianglelefteq$'
is homogeneous, we can apply Theorem~\ref{thm:graphs} 
to deduce that either
$K$ admits \wqo, or it contains arbitrarily long $\x$-paths for some $\x \in \{ \a, \b\}$.
We only need to consider the former case.

The crucial observation is that 
the embedding order on induced substructures of $\G$ is isomorphic to the multiset inclusion in $\multisets{\age{K}}$.
Indeed, any induced substructure $\str{X} \embedsin \G$ splits into the $\a\b$-cliques, and as there are only $\c$-edges between the cliques,
this split of $\str{X}$ determines $\str{X}$ uniquely. Finally, the choice of particular $\a\b$-cliques is irrelevant, as they are all isomorphic.

As the multiset inclusion in $\multisets{\age{K}}$ admits \wqo by assumption, being itself a \wqo in particular, 
we deduce that $(\age{\G}, \embedsin)$ is a \wqo too.
Similarly one observes that the lifted order $\embedsinpar{X}$ is a \wqo, for any underlying \wqo $(X, \leq)$.

The second case, when $H$ is a sum of $k$ infinite $\a\b$-cliques, is dealt analogously with the only difference that
multisets  over $\age{K}$ of size at most $k$ are considered instead of multisets of unbounded size.

Finally the third case, when $H$ is a sum of finite $\a\b$-cliques, follows immediately by Lemma~\ref{lemma:FamilyOfIdenticalSlices}.

\section{Case A) in the proof of Theorem~\ref{thm:core}}  \label{sec:coreA}



This is the most extensive part of the proof. 
Now we assume that case A) 
of Lemma~\ref{lemma:withBranches} holds and analyze the consequences. We are going to present
a chain of lemmas that eventually gives us the existence of arbitrarily long paths in $\G$.

From now on we fix the color $\c$ appearing in case A) of lemma~\ref{lemma:withBranches} and
consider it as the no-edge relation. Consequently, we will treat $\G$ as a 2-edge-colored graph.
For that reason we define $\Colors' = \Colors \setminus \{\c\}$.
In all pictures in this section, the lack of an edge between some two vertices of graph
will mean that they are connected by a $\c$-edge.

Let us introduce a few new notations:
\begin{itemize}
    \item  $\underline{\x\y\z\dots}$ will denote an $\a\b$-path with consecutive edges colored by
    $\x$, $\y$, $\z$, etc. ($\a,\b \in \ColorsTwo$). E.g., $\underline{\a\b\a}$ corresponds to the
    following path:
    \tikz{
    \node[Node, minimum size=0.125cm] at (0,0) (a) {};
    \node[Node, minimum size=0.125cm] at (0.5,0) (b) {};
    \node[Node, minimum size=0.125cm] at (1.0,0) (c) {};
    \node[Node, minimum size=0.125cm] at (1.5,0) (d) {};
    \draw[ThinEdge, draw=cA ] (a) edge node[midway, auto] {\scriptsize$\a$} (b);
    \draw[ThinEdge, draw=cB] (b) edge node[midway, auto] {\scriptsize$\b$} (c);
    \draw[ThinEdge, draw=cA ] (c) edge node[midway, auto] {\scriptsize$\a$} (d);
    }
    The single-vertex path will be written as $\bullet$.

    \item For cycles we will use similar notation: $\circ\x\y\dots\z$ stands for a $\a\b$-cycle with
    consecutive edges painted $\x, \y, \dots, \z$.

    \item For two given graphs $G_1$ and $G_2$, a graph $G_1 + G_2$ is built as follows:
    We take disjoint copies of $G_1$ and $G_2$ and connect the two parts with $\c$-edges.
    E.g., $\underline{\a\a} + \bullet$ denotes the graph:
    ''
    \tikz{
    \node[Node, minimum size=0.125cm] at (0,0) (a) {};
    \node[Node, minimum size=0.125cm] at (0.5,0) (b) {};
    \node[Node, minimum size=0.125cm] at (1.0,0) (c) {};
    \node[Node, minimum size=0.125cm] at (1.5,0) (d) {};
    \draw[ThinEdge, draw=cA ] (a) edge node[midway, auto] {\scriptsize$\a$} (b);
    \draw[ThinEdge, draw=cA] (b) edge node[midway, auto] {\scriptsize$\a$} (c);
    }
    ''.

    \item For a given graph $G_1$, a sum of its $k$ copies (in the above sense) is written as
    $k \cdot G_1$, e.g. $3\cdot\bullet = \bullet+\bullet+\bullet$.

    \item Discrete graph $D_k$ is a graph $k\cdot\bullet$. 

\end{itemize}

\noindent
Now we can reformulate the case A) of Lemma~\ref{lemma:withBranches} using the new convention:
\paragraph{Lemma \ref{lemma:withBranches} (new formulation of case A)}

\medskip
\noindent
$\G$ contains the following induced subgraphs

\newcommand{\lemmaWithBranchesIb}{
\begin{tikzpicture}[scale=0.8*\scaleOfPictures]
    \begin{scope}[every node/.append style=Node]
        \foreach \angle in {0, 60, ..., 299} {
        \path (0, 1) node (N\angle) at +(\angle:1cm) {}; }
    \end{scope}
    \path (0, 1) node (dots) at +(-50:1) {\dots};
\end{tikzpicture}
}
\newcommand{\lemmaWithBranchesIIb}{
\begin{tikzpicture}[scale=1*\scaleOfPictures]
    \begin{scope}[every node/.append style=Node]
        \node at (0, 0) (M1) {};
        \node at (1, 0) (M2) {};
        \node at (0.5, .866) (M3) {};
    \end{scope}
    \begin{pgfonlayer}{background}
        \lEdge[Edge, draw=cA, swap]{M1}{\a}{M2}
        \lEdge[Edge, draw=cX, swap]{M2}{\x}{M3}
    \end{pgfonlayer}
\end{tikzpicture}
}
\newcommand{\lemmaWithBranchesIIIb}{
\begin{tikzpicture}[scale=1*\scaleOfPictures]
    \begin{scope}[every node/.append style=Node]
        \node at (0, 0) (K1) {};
        \node at (1, 0) (K2) {};
        \node at (0.5, .866) (K3) {};
    \end{scope}
    \begin{pgfonlayer}{background}
        \lEdge[Edge, draw=cA, swap]{K1}{\a}{K2}
    \end{pgfonlayer}
\end{tikzpicture}
}
\begin{center}
    \begin{tikzpicture}[scale=\scaleOfPictures]
        \node[anchor=south, inner sep=0] at (-1, 0) {\lemmaWithBranchesIb};
        \node[anchor=south, inner sep=0] at (5.5, 0) {\lemmaWithBranchesIIb};
        \node[anchor=south, ] at (6.5, 0) {,};
        \node[anchor=south, inner sep=0] at (7.5, 0) {\lemmaWithBranchesIIIb};
        \node[anchor=north, inner xsep=0, align=left] at (-1, 0)
        {a) arbitrarily large discrete graphs \\ \phantom{a) }$D_k$ for $k \in \mathbb{N}$};
        \node[anchor=north, inner xsep=0, align=left] at (6.5, 0)
        {b) above graphs: $\underline{\a\x}$ and $\underline{\a} + \bullet$ \\  \phantom{b) }for some colors $\a, \x \in \ColorsTwo$};
    \end{tikzpicture}
\end{center}

\newcommand{\circled}[1]{
\tikz[baseline=(char.base)]{
\node[shape=rounded rectangle,draw,inner ysep=2pt,inner xsep=1pt] (char) {\ensuremath{#1}};
}
}
\newcommand{\circledb}[1]{
\tikz[baseline=(char.base)]{
\node[shape=rounded rectangle,draw,inner ysep=2pt,inner xsep=1pt] (char) {#1};
}
}

\newcommand{\kb}{\circledb{\scriptsize\ensuremath{k\cdot\bullet}}}


\subsection{Adding isolated vertices}\label{subsec:addingVertices}

Our first goal is to show that $\G$ embeds a graph $\underline{\a\x} + k\cdot\bullet$
for each $k \in \mathbb{N}$. The proof will be inductive. The induction base follows easily
by the assumed condition A). 
Two coming lemmas, when combined, will form the inductive step. From now on, the expression
$k\cdot\bullet$ will appear many times, so for readability we will emphasize it as \kb.

\begin{lemma}\label{lemma:addOneA}
Let $\G$ be a strongly homogeneous, 2-edge colored graph that embeds arbitrarily large discrete graphs
and also the subgraphs
$\underline{\a\x} + \kb$ and
$\underline{\a} + \bullet + \kb$ for some $\a, \x \in \ColorsTwo$ and $k \in \mathbb{N}$.
Then $\G$ embeds graphs:
\begin{enumerate}
    \item $\underline{\a_2\y} + \kb$,
    \item $\underline{\a_2} + \bullet + \bullet + \kb$
\end{enumerate}
for some colors $\a_2, \y \in \ColorsTwo$.
\end{lemma}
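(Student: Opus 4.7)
My plan is to apply strong amalgamation to the two hypothesized structures and analyze the resulting cases. Part (1) of the conclusion is immediate by setting $\a_2 = \a$ and $\y = \x$, since $\underline{\a_2\y} + \kb = \underline{\a\x} + \kb$ is in $\age{\G}$ by hypothesis; the technical content of the lemma is part (2), establishing that $\underline{\a_2} + \bullet + \bullet + \kb \in \age{\G}$ for a choice of $\a_2$ compatible with part (1).

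For part (2), I would first consider the singleton amalgamation with $B_1 = B_2 = \underline{\a} + \bullet + \kb$ and common part $A = \underline{\a} + \kb$, identifying the $\a$-edge and $k$ of the isolated vertices in both copies and leaving one extra isolated vertex in each. By strong homogeneity of $\G$, some color $\n \in \{\a,\b,\c\}$ of the new edge between the two extras yields an amalgam in $\age{\G}$. If $\n = \c$, the amalgam is exactly $\underline{\a} + \bullet + \bullet + \kb$, immediately giving part (2) with $\a_2 = \a$ and finishing the proof. Otherwise $\n \in \{\a,\b\}$ and the amalgam is $\underline{\a} + \underline{\n} + \kb$, i.e., two disjoint edges plus $k$ isolated vertices. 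In that situation I would perform a second amalgamation on this structure with itself, this time along a common part that identifies one endpoint of the $\underline{\n}$-edge as an isolated vertex, so that the two extras become $\n$-neighbours of a common vertex of the common part. In the favourable subcase where strong homogeneity permits a $\c$-coloured edge between the two extras, the amalgam has the form $\underline{\a} + \underline{\n\n} + k\bullet$; dropping the central vertex of the $\underline{\n\n}$ 2-path leaves $\underline{\a} + \bullet + \bullet + \kb$, yielding part (2).

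The main obstacle is the case analysis at each amalgamation step: strong homogeneity only guarantees the existence of one valid colour, and in the unfavourable (non-$\c$) cases each iteration adds a new edge rather than a new isolated vertex, so merely iterating the naive amalgamation cannot directly yield enough isolated vertices. To control these cases I would combine the previous amalgamations with amalgamations involving the discrete graphs $D_m$, available for arbitrarily large $m$ by hypothesis, amalgamating $\underline{\a} + \bullet + \kb$ with $D_{k+3}$ along $D_{k+2}$: this forces the ``extra'' vertex from the $D_{k+3}$ side to interact with the edge endpoint in a way whose only three possibilities are precisely $\underline{\a} + (k+2)\bullet$ (directly part (2)) or $\underline{\a\n} + (k+1)\bullet$ for $\n \in \{\a,\b\}$. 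In each of the latter, dropping a suitable central vertex of the 2-path $\underline{\a\n}$ together with the previously derived edge yields $\underline{\a} + \bullet + \bullet + \kb$. The delicate point is to ensure that the chosen $\a_2$ is consistent across parts (1) and (2); the lemma's explicit allowance of $\a_2 \neq \a$ gives the necessary flexibility, and whenever the amalgamation forces a new colour $\n$, the 2-paths derived as substructures of the amalgam (of the form $\underline{\a\n}$, $\underline{\n\a}$ or $\underline{\n\n}$) supply the required $\underline{\a_2\y} + \kb$ witness for part (1) with the new $\a_2$.
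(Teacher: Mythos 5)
Your proposal has a genuine gap, and it is precisely at the point you flag as ``delicate.'' When your amalgamations force a non-$\c$ colour and you propose to switch to a new base colour $\a_2 = \b$, you correctly note that 2-paths such as $\underline{\a\b}$ supply the part (1) witness $\underline{\a_2\y} + k\cdot\bullet$. But the lemma requires \emph{both} witnesses for the \emph{same} $\a_2$, and you never establish the part (2) witness $\underline{\b} + \bullet + \bullet + k\cdot\bullet$. Check what your amalgamations actually produce: $\underline{\a} + \underline{\b} + k\cdot\bullet$ (your Step 1, $\n = \b$ case) and $\underline{\a\b} + \bullet + k\cdot\bullet$ (your Step 3, $\n = \b$ case) each have only $k+4$ resp.\ $k+4$ vertices, and in each, the largest set of vertices that is disjoint from a chosen $\b$-edge and forms an independent set with it has size $k+1$ — so the largest induced substructure of the form $\underline{\b} + m\cdot\bullet$ is $\underline{\b} + (k+1)\cdot\bullet$, one vertex short of what part (2) demands. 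The sentence ``dropping a suitable central vertex of the 2-path $\underline{\a\n}$ together with the previously derived edge yields $\underline{\a} + \bullet + \bullet + k\cdot\bullet$'' is not a valid operation: taking an induced substructure acts on one member of $\age{\G}$ at a time, and gluing a vertex-deleted fragment of one graph onto an edge of another is exactly the kind of thing that requires a new (and here non-singleton) amalgamation argument that you do not supply.

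The missing idea is the paper's second amalgamation instance: amalgamate $\underline{\a\y} + \bullet + k\cdot\bullet$ with itself over the larger common part $\underline{\a\y} + k\cdot\bullet$, with the two isolated points as the extras. This is where strong homogeneity earns its keep. The $\b$-edge outcome is then $\underline{\a\y} + \underline{\b} + k\cdot\bullet$, a graph with $k+5$ vertices, and discarding the \emph{middle} vertex of $\underline{\a\y}$ leaves the two endpoints mutually non-adjacent and non-adjacent to the $\b$-edge — giving $\underline{\b} + \bullet + \bullet + k\cdot\bullet$, the part (2) witness for $\a_2 = \b$. Your self-amalgamations over $\underline{\a} + k\cdot\bullet$ can never produce this, because a single edge has no middle vertex whose deletion frees up two independent points. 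Even once the paper secures this witness, it still takes five further amalgamations (through cycles like $\circ\a\a\a\a$ and a monochromatic star) to extract the second ingredient of the $\a_2 = \b$ branch, namely $\underline{\a\b} + k\cdot\bullet$, in the stubborn case $\y = \a$. Your Steps 1 and 2, which only add triangles on top of existing edges in their non-$\c$ outcomes, have no mechanism for reaching that conclusion.
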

\noindent
(It is important to note that $\a$ does not have to be equal to $\a_2$.)
\begin{lemma}\label{lemma:addOneB}
Let $\G$ be a strongly homogeneous, 2-edge colored graph that embeds graphs
$\underline{\a\y} + \kb$ and
$\underline{\a} + \bullet + \bullet + \kb$ for some $\a, \y \in \ColorsTwo$ and $k \in \mathbb{N}$.
Then $\G$ embeds graph $\underline{\a\z} + \bullet + \kb$ for some color $\z \in \ColorsTwo$.
\end{lemma}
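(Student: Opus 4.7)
The plan is to apply singleton strong amalgamation twice, tracking the colors of the two newly introduced edges, and to read off $\underline{\a\z}+\bullet+\kb$ as an induced subgraph of the resulting configuration. First I will amalgamate $B_1 = \underline{\a}+\bullet+\kb$ (which embeds as a substructure of $\underline{\a}+\bullet+\bullet+\kb$) and $B_2 = \underline{\a\y}+\kb$ over the common $A_1 = \underline{\a}+\kb$: strong amalgamation yields $C_1 \in \age{\G}$ of size $k+4$ with a free edge $\epsilon_1$ between the isolated extra $w$ from $B_1$ and the path-extending extra $v_3$ from $B_2$. If $\epsilon_1 = \c$ then $C_1 \cong \underline{\a\y}+\bullet+\kb$, so take $\z = \y$ and we are done; otherwise $C_1 \cong \underline{\a\y\epsilon_1}+\kb$, a 3-edge path plus $\kb$.

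Next I perform a second amalgamation with $A_2 = \underline{\a}+\bullet+\kb$, $B_1' = C_1$ (mapping $A_2$'s $\bullet$ to $w$), and $B_2' = \underline{\a}+\bullet+\bullet+\kb$, producing $C_2 \in \age{\G}$ of size $k+5$ with a new free edge $\epsilon_2$ joining $v_3$ to a fresh isolated vertex $w_4$. I then case-split on the pair $(\epsilon_1,\epsilon_2)$: in each case, deleting a single well-chosen vertex of $C_2$ exposes an induced copy of $\underline{\a\z}+\bullet+\kb$ for an appropriate $\z$. For instance, deleting $w$ settles $\epsilon_2 = \c$ (with $\z = \y$); deleting $v_2$ settles $\epsilon_1 = \a$ (with $\z = \epsilon_2$, since $v_1$ becomes isolated and the path $w\,v_3\,w_4$ is colored $\a\epsilon_2$), as well as $\epsilon_1 = \b, \epsilon_2 = \a$ (with $\z = \b$, since $w\,v_3\,w_4$ is then colored $\b\a$).

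The main obstacle will be the residual combination $\epsilon_1 = \epsilon_2 = \b$, where $v_3$ has three $\b$-neighbors in $C_2$ and no single deletion produces a 2-edge path containing an $\a$-edge together with an isolated vertex. To treat this subcase I exploit the fact that $C_2$ then contains $\underline{\b\b}+\bullet+\kb$ as an induced substructure (obtained by deleting $v_2$) and perform a third amalgamation with $A_3 = \underline{\b\b}+\kb$, $B_1'' = \underline{\b\b}+\bullet+\kb$, and a suitably chosen $B_2''$ of size $k+4$ that embeds in $\G$ and contains both an $\a$-edge and a $\underline{\b\b}$ path. In the case $\y = \b$ one may take $B_2'' = C_1 = \underline{\a\b\b}+\kb$ directly; for $\y = \a$, one instead uses in Phase~2 the alternative embedding of $A_2$ into $C_1$ that sends the $\a$-edge to $v_2v_3$ (valid because $\y = \a$ creates a second $\a$-edge in $C_1$), which sidesteps the bad configuration altogether. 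In either variant the free edge of the additional amalgamation, whatever its color, leads to $\underline{\a\z}+\bullet+\kb$ by a single deletion; the only real work is the combinatorial bookkeeping in this last subcase.
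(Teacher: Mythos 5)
Your Phases 1 and 2 are exactly the paper's Instances~\ref{lemma:addOneB}.1 and~\ref{lemma:addOneB}.2--\ref{lemma:addOneB}.3, and your single-deletion case analysis of $C_2$ correctly disposes of every combination $(\epsilon_1,\epsilon_2)$ except $\epsilon_1=\epsilon_2=\b$. Your third amalgamation for the residual case with $\y=\b$ --- gluing $\underline{\b\b}+\bullet+\kb$ to $C_1=\underline{\a\b\b}+\kb$ over $\underline{\b\b}+\kb$ --- is sound and plays the role of the paper's Instance~\ref{lemma:addOneB}.4.

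The gap is your treatment of the residual case with $\y=\a$. The ``alternative embedding'' of $A_2=\underline{\a}+\bullet+\kb$ into $C_1=\underline{\a\a\b}+\kb$ that sends the $\a$-edge to $\{v_2,v_3\}$ simply does not exist. Once $\{v_2,v_3\}$ is fixed, the candidates for the images of $\bullet$ and of $\kb$ are $v_1$, $w$, and the $k$ vertices of $C_1$'s $\kb$; but $v_1$ is $\a$-adjacent to $v_2$ and $w$ is $\b$-adjacent to $v_3$, so only the $k$ vertices of $\kb$ are isolated from $\{v_2,v_3\}$ --- one short of the $k+1$ isolated vertices $A_2$ requires. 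Indeed, the unique induced copy of $\underline{\a}+\bullet+\kb$ on $3+k$ of the $4+k$ vertices of $\underline{\a\a\b}+\kb$ is $C_1\setminus\{v_3\}$, which is precisely the embedding you already used. Consequently Phase 2 is forced, the residual configuration is not sidestepped, and your third amalgamation cannot be set up either, since when $\y=\a$ the graph $C_1$ contains no $\underline{\b\b}$-path to serve as $A_3$; and replacing $B_2''$ by, say, $C_2\setminus\{v_1\}$ (where the $\a$-edge attaches to the \emph{middle} of $\underline{\b\b}$) breaks the deletion argument in the sub-case $\epsilon_3=\a$, where one obtains only $\underline{\a\a\b}+\kb$ with no spare isolated vertex.

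The repair is the one the paper uses. When $\y=\a$ and $\epsilon_1=\b$, observe that $C_1=\underline{\a\a\b}+\kb$ already contains $\underline{\a\b}+\kb$ as an induced substructure (take $v_2,v_3,w$ together with $\kb$). Hence $\G$ embeds $\underline{\a\b}+\kb$, and you may re-run the entire argument from Phase~1 with the hypothesis path $\underline{\a\y}$ replaced by $\underline{\a\b}$, i.e., with $\y:=\b$. That re-run either terminates in one of your earlier cases, or reaches the residual case again, but now with $\y=\b$, where your third amalgamation closes the proof.
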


\noindent
Juxtaposition of those lemmas allows us to 'add' arbitrarily many isolated vertices:
\begin{align*}
    \left\{
    \begin{array}{l}
        \a\x + \kb
        \\
        \a + \bullet + \kb
    \end{array}
    \right\}
    \xrightarrow{\text{Lem. \ref{lemma:addOneA}.}}
    \left\{
    \begin{array}{l}
        \a_2\y + \kb
        \\
        \a_2 + \bullet + \bullet + \kb
    \end{array}
    \right\}
    \xrightarrow{\text{Lem. \ref{lemma:addOneB}.}}
    \left\{
    \begin{array}{l}
        \a_2\z + \tikz[baseline=(n1.base)]{
        \node[shape=rounded rectangle, inner ysep=2pt, inner xsep=1pt, draw] (n1) {\ensuremath{\bullet + \kb}};
        }
        \\
        \a_2 + \bullet + \tikz[baseline=(n1.base)]{
        \node[shape=rounded rectangle, inner ysep=2pt, inner xsep=1pt, draw] (n1) {\ensuremath{\bullet + \kb}};
        }
    \end{array}
    \right\}
\end{align*}
Similar scheme will emerge also in subsequent parts of the proof: in analogous way
we will later be adding isolated edges and
two-edge paths.

Now, let us move on to the proof of Lemmas \ref{lemma:addOneA} and \ref{lemma:addOneB}. They
will be the first from a group of lemmas making a heavy use of the amalgamation property.

\newcommand{\addOneAxIxQUESTION}{
\begin{tikzpicture}[scale=\scaleOfPictures]
    \begin{pgfonlayer}{foreground}
        \node[ANode] at (0, 0) (n1) {};
        \node[ Node] at (1, 0) (n2) {};
        \node[ANode] at (0, 1) (n3) {};
        \node[ Node] at (1, 1) (n4) {};
        \node[Basic] at (1.8, 0.5) {+ $\kb$};
    \end{pgfonlayer}
    \begin{pgfonlayer}{background}
        \lEdge[cA]{n1}{\a}{n2}
        \lEdge[AEdge]{n1}{}{n3}
    \end{pgfonlayer}
\end{tikzpicture}
}
\newcommand{\addOneAxIxcaseX}{
\begin{tikzpicture}[scale=\scaleOfPictures]
    \begin{pgfonlayer}{foreground}
        \node[ANode] at (0, 0) (n1) {};
        \node[ Node] at (1, 0) (n2) {};
        \node[ANode] at (0, 1) (n3) {};
        \node[ Node] at (1, 1) (n4) {};
        \node[Basic] at (1.8, 0.5) {+ $\kb$};
    \end{pgfonlayer}
    \begin{pgfonlayer}{background}
        \lEdge[cA]{n1}{\a}{n2}
        \lEdge[cY]{n1}{\y}{n3}
    \end{pgfonlayer}
\end{tikzpicture}
}
\newcommand{\addOneAxIxcaseNO}{
\begin{tikzpicture}[scale=\scaleOfPictures]
    \begin{pgfonlayer}{foreground}
        \node[ANode] at (0, 0) (n1) {};
        \node[ Node] at (1, 0) (n2) {};
        \node[ANode] at (0, 1) (n3) {};
        \node[ Node] at (1, 1) (n4) {};
        \node[Basic] at (1.8, 0.5) {+ $\kb$};
    \end{pgfonlayer}
    \begin{pgfonlayer}{background}
        \lEdge[cA]{n1}{\a}{n2}
    \end{pgfonlayer}
\end{tikzpicture}
}

\begin{proof}[Proof of Lemma~\ref{lemma:addOneA}]
By assumptions we know that $G_1 = \underline{\a\x} + \kb \trianglelefteq \G$ as well as
$G_2 = \underline{\a} + \bullet + \kb \trianglelefteq \G$ for some given colors $\a, \x \in \ColorsTwo$.
The set $\ColorsTwo$ has two elements --- let $\b$ be the second of its elements, different from $\a$.

\paragraph{Current target} To prove the lemma, it suffices to show one of the following statements:
\begin{enumerate}
    \item[\CaseOk 1)] $\G$ embeds a graph $\underline{\a} + \bullet + \bullet + \kb$ \\(paired with
    $G_1$, it will give us the thesis of lemma),
    \item[\CaseOk 2)] $\G$ embeds graphs $\underline{\b} + \bullet + \bullet + \kb$ and $\underline{\b\y} + \kb$
    \\(here $G_1$ would not help, since lemma requires compatibility of edge colors,
    yet $G_1$ may not contain $\b$-edge if $\x = \a$).
\end{enumerate}

\Instance{\ref{lemma:addOneA}.1}
We begin by considering the following amalgamation instance:

\begin{center}
    \begin{tikzpicture}[scale=\scaleOfPictures]
        \node (Q) [AInstance] at (0, 0) {\addOneAxIxQUESTION};
        \node (PREV) [AResult, below=1cm of Q] {\addOneAxIxcaseX };
        \node (D) [AResult, right=2cm of Q] {\addOneAxIxcaseNO};
        \node[anchor=north east, inner xsep=0] at (D.south east) (ok1) {\CaseOk 1)};
        \draw[CaseArrow]
        (Q)
        -- node[midway, auto,     , inner sep=0cm] {\CaseEdgeExists}
        (PREV);
        \draw[CaseArrow]
        (Q)
        -- node[midway, auto, swap, inner sep=0cm] {\CaseEdgeDoesNotExist}
        (D);
    \end{tikzpicture}
\end{center}
If in its solution the edge is not present, we get graph $\underline{\a} +
\bullet + \bullet + \kb$, so \CaseOk 1) is obtained immediately.
Assume the contrary --- that some $\y$-edge appeared $\y \in \ColorsTwo$.

\Instance{\ref{lemma:addOneA}.2} Using the obtained graph, we build
a new instance:

\newcommand{\addOneAxIIxQUESTION}{
\begin{tikzpicture}[scale=\scaleOfPictures]
    \begin{pgfonlayer}{foreground}
        \node[ Node] at (0, 0) (n1) {};
        \node[ Node] at (1, 0) (n2) {};
        \node[ Node] at (2, 0) (n3) {};

        \node[ANode] at (0.5, 1) (n4) {};
        \node[ANode] at (1.5, 1) (n5) {};
        \node[Basic] at (2.3, 0.5) {+ $\kb$};
    \end{pgfonlayer}
    \begin{pgfonlayer}{background}
        \lEdge[cA]{n1}{\a}{n2}
        \lEdge[cY]{n2}{\y}{n3}
        \lEdge[AEdge]{n4}{}{n5}
    \end{pgfonlayer}
\end{tikzpicture}
}
\newcommand{\addOneAxIIxA}{
\begin{tikzpicture}[scale=\scaleOfPictures]
    \begin{pgfonlayer}{foreground}
        \node[ Node] at (0, 0) (n1) {};
        \node[ Node] at (1, 0) (n2) {};
        \node[ Node] at (2, 0) (n3) {};

        \node[ANode] at (0.5, 1) (n4) {};
        \node[ANode] at (1.5, 1) (n5) {};
        \node[Basic] at (2.3, 0.5) {+ $\kb$};
    \end{pgfonlayer}
    \begin{pgfonlayer}{background}
        \lEdge[cA]{n1}{\a}{n2}
        \lEdge[cY]{n2}{\y}{n3}
        \lEdge[cA]{n4}{\a}{n5}
    \end{pgfonlayer}
    \begin{pgfonlayer}{prebackground}
        \draw[Selection] (n1.center) -- (n1.center);
        \draw[Selection] (n3.center) -- (n3.center);
        \draw[Selection] (n4.center) -- (n5.center);
    \end{pgfonlayer}
\end{tikzpicture}
}
\newcommand{\addOneAxIIxNO}{
\begin{tikzpicture}[scale=\scaleOfPictures]
    \begin{pgfonlayer}{foreground}
        \node[ Node] at (0, 0) (n1) {};
        \node[ Node] at (1, 0) (n2) {};
        \node[ Node] at (2, 0) (n3) {};

        \node[ANode] at (0.5, 1) (n4) {};
        \node[ANode] at (1.5, 1) (n5) {};
        \node[Basic] at (2.3, 0.5) {+ $\kb$};
    \end{pgfonlayer}
    \begin{pgfonlayer}{background}
        \lEdge[cA]{n1}{\a}{n2}
        \lEdge[cY]{n2}{\y}{n3}
    \end{pgfonlayer}
    \begin{pgfonlayer}{prebackground}
        \draw[Selection] (n4.center) -- (n4.center);
        \draw[Selection] (n5.center) -- (n5.center);
        \draw[Selection] (n1.center) -- (n2.center);
    \end{pgfonlayer}
\end{tikzpicture}
}
\newcommand{\addOneAxIIxB}{
\begin{tikzpicture}[scale=\scaleOfPictures]
    \begin{pgfonlayer}{foreground}
        \node[ Node] at (0, 0) (n1) {};
        \node[ Node] at (1, 0) (n2) {};
        \node[ Node] at (2, 0) (n3) {};

        \node[ANode] at (0.5, 1) (n4) {};
        \node[ANode] at (1.5, 1) (n5) {};
        \node[Basic] at (2.3, 0.5) {+ $\kb$};
    \end{pgfonlayer}
    \begin{pgfonlayer}{background}
        \lEdge[cA]{n1}{\a}{n2}
        \lEdge[cY]{n2}{\y}{n3}
        \lEdge[cB]{n4}{\b}{n5}
    \end{pgfonlayer}
\end{tikzpicture}
}

\begin{center}
\centering
\makebox[\linewidth]{%
    \begin{tikzpicture}[scale=\scaleOfPictures]
        \node (Q) [AInstance] at (0, 0) {\addOneAxIIxQUESTION};
        \node (B) [AResult, left=1cm of Q] {\addOneAxIIxA};
        \node (D) [AResult, right=1cm of Q] {\addOneAxIIxNO};
        \node (E) [AResult, below=1cm of Q] {\addOneAxIIxB};
        \node[anchor=north east, inner xsep=0] at (B.south east) (ok1) {\CaseOk 1)};
        \node[anchor=north east, inner xsep=0] at (D.south east) (ok2) {\CaseOk 1)};
        \draw[CaseArrow]
        (Q)
        -- node[midway, auto,     , inner sep=0cm] {\CaseAEdge}
        (B);
        \draw[CaseArrow]
        (Q)
        -- node[midway, auto, swap, inner sep=0cm] {\CaseEdgeDoesNotExist}
        (D);
        \draw[CaseArrow]
        (Q)
        -- node[midway, auto, swap, inner sep=0cm] {\CaseBEdge}
        (E);
    \end{tikzpicture}
}
\end{center}

The above instance is one of the few that actually use the strong amalgamation property.
As shown on the picture, in cases when we get an $\a$-edge or we do not get an edge at all,
condition \CaseOk 1). is easily met. Let us assume we obtained a $\b$-edge.

At this point we have to notice that $\underline{\b} + \underline{\a\y} + \kb$ embeds a graph
$\underline{\b} + \bullet + \bullet + \kb$, so from now on to prove \CaseOk 2), it suffices to obtain
$\underline{\a\b} + \kb$.
Hence, if $\y = \b$, we would have the missing graph $\underline{\a\b} + \kb$ as a subgraph
of $\underline{\b} + \underline{\a\y} + \kb$. It then only remains to consider the case $\y = \a$.

For later use, from $\underline{\b} + \underline{\a\a} + \kb$ we take the following subgraph $G_3$:

\newcommand{\addOneAxIIIx}{
\begin{tikzpicture}[scale=\scaleOfPictures]
    \begin{pgfonlayer}{foreground}
        \node at (-2, 0) {$G_3 =$};
        \node[ Node] at (0, 0) (n1) {};
        \node[ Node] at (1, 0) (n2) {};
        \node[ Node] at (2, 0) (n3) {};

        \node[ Node] at (-1, 0) (n4) {};
        \node[Basic] at (2.8, 0) {+ $\kb$};
    \end{pgfonlayer}
    \begin{pgfonlayer}{background}
        \lEdge[cA]{n1}{\a}{n2}
        \lEdge[cA]{n2}{\a}{n3}
    \end{pgfonlayer}
\end{tikzpicture}
}

\begin{center}
    \addOneAxIIIx
\end{center}

\Instance{\ref{lemma:addOneA}.3}
We use it to construct a new instance of amalgamation:

\newcommand{\addOneAxIVxQ}{
\begin{tikzpicture}[scale=\scaleOfPictures]
    \begin{pgfonlayer}{foreground}
        \node[ Node] at (0, 0) (n1) {};
        \node[ Node] at (1, 0) (n2) {};
        \node[ANode] at (0, 1) (n4) {};
        \node[ANode] at (1, 1) (n3) {};

        \node[ Node] at (-0.5, -0.5) (n5) {};
        \node[Basic] at (1.8, 0.5) {+ $\kb$};
    \end{pgfonlayer}
    \begin{pgfonlayer}{background}
        \lEdge[cA,swap]{n1}{\a}{n2}
        \lEdge[cA]{n2}{\a}{n3}
        \lEdge[cA]{n1}{\a}{n4}
        \lEdge[AEdge]{n3}{}{n4}
    \end{pgfonlayer}
\end{tikzpicture}
}

\newcommand{\addOneAxIVxNO}{
\begin{tikzpicture}[scale=\scaleOfPictures]
    \begin{pgfonlayer}{foreground}
        \node[ Node] at (0, 0) (n1) {};
        \node[ Node] at (1, 0) (n2) {};
        \node[ANode] at (0, 1) (n4) {};
        \node[ANode] at (1, 1) (n3) {};

        \node[ Node] at (-0.5, -0.5) (n5) {};
        \node[Basic] at (1.8, 0.5) {+ $\kb$};
    \end{pgfonlayer}
    \begin{pgfonlayer}{background}
        \lEdge[cA,swap]{n1}{\a}{n2}
        \lEdge[cA]{n2}{\a}{n3}
        \lEdge[cA]{n1}{\a}{n4}
    \end{pgfonlayer}
    \begin{pgfonlayer}{prebackground}
        \draw[Selection] (n4.center) -- (n1.center) (n3.center) -- (n3.center) (n5.center) -- (n5.center);
    \end{pgfonlayer}
\end{tikzpicture}
}

\newcommand{\addOneAxIVxB}{
\begin{tikzpicture}[scale=\scaleOfPictures]
    \begin{pgfonlayer}{foreground}
        \node[ Node] at (0, 0) (n1) {};
        \node[ Node] at (1, 0) (n2) {};
        \node[ANode] at (0, 1) (n4) {};
        \node[ANode] at (1, 1) (n3) {};

        \node[ Node] at (-0.5, -0.5) (n5) {};
        \node[Basic] at (1.8, 0.5) {+ $\kb$};
    \end{pgfonlayer}
    \begin{pgfonlayer}{background}
        \lEdge[cA,swap]{n1}{\a}{n2}
        \lEdge[cA]{n2}{\a}{n3}
        \lEdge[cA]{n1}{\a}{n4}
        \lEdge[cB, swap]{n3}{\b}{n4}
    \end{pgfonlayer}
    \begin{pgfonlayer}{prebackground}
        \draw[Selection] (n3.center) -- (n4.center) -- (n1.center);
    \end{pgfonlayer}
\end{tikzpicture}
}

\newcommand{\addOneAxIVxA}{
\begin{tikzpicture}[scale=\scaleOfPictures]
    \begin{pgfonlayer}{foreground}
        \node[ Node] at (0, 0) (n1) {};
        \node[ Node] at (1, 0) (n2) {};
        \node[ANode] at (0, 1) (n4) {};
        \node[ANode] at (1, 1) (n3) {};

        \node[ Node] at (-0.5, -0.5) (n5) {};
        \node[Basic] at (1.8, 0.5) {+ $\kb$};
    \end{pgfonlayer}
    \begin{pgfonlayer}{background}
        \lEdge[cA,swap]{n1}{\a}{n2}
        \lEdge[cA]{n2}{\a}{n3}
        \lEdge[cA]{n1}{\a}{n4}
        \lEdge[cA, swap]{n3}{\a}{n4}
    \end{pgfonlayer}
\end{tikzpicture}
}

\begin{center}
    \begin{tikzpicture}[scale=\scaleOfPictures]
        \node (Q) [AInstance] at (0, 0) {\addOneAxIVxQ};
        \node (B) [AResult, left=1cm of Q] {\addOneAxIVxB};
        \node (D) [AResult, right=1cm of Q] {\addOneAxIVxNO};
        \node (E) [AResult, below=1cm of Q] {\addOneAxIVxA};
        \node[anchor=north east, inner xsep=0] at (B.south east) (ok1) {\CaseOk 2)};
        \node[anchor=north east, inner xsep=0] at (D.south east) (ok2) {\CaseOk 1)};
        \node[anchor=east] at (E.west) (name) {$G_4 =$};
        \draw[CaseArrow]
        (Q)
        -- node[midway, auto,     , inner sep=0cm] {\CaseBEdge}
        (B);
        \draw[CaseArrow]
        (Q)
        -- node[midway, auto, swap, inner sep=0cm] {\CaseEdgeDoesNotExist}
        (D);
        \draw[CaseArrow]
        (Q)
        -- node[midway, auto, swap, inner sep=0cm] {\CaseAEdge}
        (E);
    \end{tikzpicture}
\end{center}
%
Again, two cases immediately lead us to the end of the proof (see the picture), so only one
needs further examination: If an $\a$-edge is present in the solution, we
have the graph $G_4 = \circ\a\a\a\a + \bullet + \kb$. It will come useful in a moment (at the end of the proof), but first we
have to 'construct' yet another one. The construction will take three upcoming amalgamations,
then we will return to $G_4$.\\


\Instance{\ref{lemma:addOneA}.4} To build the instance we again use graph $G_3$, this
time paired with the discrete graph $D_{k+4}$ --- we can afford to do that, since in $\G$ embeds
arbitrarily large discrete graphs.

\newcommand{\addOneAxVxQ}{
\begin{tikzpicture}[scale=\scaleOfPictures]
    \begin{pgfonlayer}{foreground}
        \node[ANode] at (0, 0) (n0) {};
        \node[ Node] at (1, 0) (n1) {};
        \node[ Node] at (0, 1) (n2) {};
        \node[ Node] at (-1, 0) (n3) {};
        \node[ANode] at (0, -1) (n4) {};

        \node[Basic] at (1, 1) {+ $\kb$};
    \end{pgfonlayer}
    \begin{pgfonlayer}{background}
        \lEdge[cA]{n0}{\a}{n1}
        \lEdge[cA,swap]{n0}{\a}{n3}
        \lEdge[AEdge]{n0}{}{n4}
    \end{pgfonlayer}
\end{tikzpicture}
}
\newcommand{\addOneAxVxB}{
\begin{tikzpicture}[scale=\scaleOfPictures]
    \begin{pgfonlayer}{foreground}
        \node[ANode] at (0, 0) (n0) {};
        \node[ Node] at (1, 0) (n1) {};
        \node[ Node] at (0, 1) (n2) {};
        \node[ Node] at (-1, 0) (n3) {};
        \node[ANode] at (0, -1) (n4) {};

        \node[Basic] at (1, 1) {+ $\kb$};
    \end{pgfonlayer}
    \begin{pgfonlayer}{background}
        \lEdge[cA]{n0}{\a}{n1}
        \lEdge[cA,swap]{n0}{\a}{n3}
        \lEdge[cB]{n0}{\b}{n4}
    \end{pgfonlayer}
    \begin{pgfonlayer}{prebackground}
        \draw[Selection] (n3.center) -- (n0.center) -- (n4.center);
    \end{pgfonlayer}
\end{tikzpicture}
}
\newcommand{\addOneAxVxNO}{
\begin{tikzpicture}[scale=\scaleOfPictures]
    \begin{pgfonlayer}{foreground}
        \node[ANode] at (0, 0) (n0) {};
        \node[ Node] at (1, 0) (n1) {};
        \node[ Node] at (0, 1) (n2) {};
        \node[ Node] at (-1, 0) (n3) {};
        \node[ANode] at (0, -1) (n4) {};

        \node[Basic] at (1, 1) {+ $\kb$};
    \end{pgfonlayer}
    \begin{pgfonlayer}{background}
        \lEdge[cA]{n0}{\a}{n1}
        \lEdge[cA,swap]{n0}{\a}{n3}
    \end{pgfonlayer}
    \begin{pgfonlayer}{prebackground}
        \draw[Selection] (n3.center) -- (n0.center) (n4.center) -- (n4.center) (n2.center) -- (n2.center);
    \end{pgfonlayer}
\end{tikzpicture}
}
\newcommand{\addOneAxVxA}{
\begin{tikzpicture}[scale=\scaleOfPictures]
    \begin{pgfonlayer}{foreground}
        \node[ANode] at (0, 0) (n0) {};
        \node[ Node] at (1, 0) (n1) {};
        \node[ Node] at (0, 1) (n2) {};
        \node[ Node] at (-1, 0) (n3) {};
        \node[ANode] at (0, -1) (n4) {};

        \node[Basic] at (1, 1) {+ $\kb$};
    \end{pgfonlayer}
    \begin{pgfonlayer}{background}
        \lEdge[cA]{n0}{\a}{n1}
        \lEdge[cA,swap]{n0}{\a}{n3}
        \lEdge[cA]{n0}{\a}{n4}
    \end{pgfonlayer}
\end{tikzpicture}
}

\begin{center}
    \begin{tikzpicture}[scale=\scaleOfPictures]
        \node (Q) [AInstance] at (0, 0) {\addOneAxVxQ};
        \node (B) [AResult, left=1cm of Q] {\addOneAxVxB};
        \node (D) [AResult, right=1cm of Q] {\addOneAxVxNO};
        \node (E) [AResult, below=1cm of Q] {\addOneAxVxA};
        \node[anchor=north east, inner xsep=0] at (B.south east) (ok1) {\CaseOk 2)};
        \node[anchor=north east, inner xsep=0] at (D.south east) (ok2) {\CaseOk 1)};
        \node[anchor=east] at (E.west) (name) {$G_5 =$};
        \draw[CaseArrow]
        (Q)
        -- node[midway, auto,     , inner sep=0cm] {\CaseBEdge}
        (B);
        \draw[CaseArrow]
        (Q)
        -- node[midway, auto, swap, inner sep=0cm] {\CaseEdgeDoesNotExist}
        (D);
        \draw[CaseArrow]
        (Q)
        -- node[midway, auto, swap, inner sep=0cm] {\CaseAEdge}
        (E);
    \end{tikzpicture}
\end{center}
The acquired graph $G_5$ will be used in Instance \ref{lemma:addOneA}.6. To complete the proof of lemma, we still
need one more graph --- namely $\underline{\a\a\a} + \kb$. We will get it quickly in the
following instance of amalgamation:

\Instance{\ref{lemma:addOneA}.5} This time we put together two copies of $G_3$:

\newcommand{\addOneAxVIxQ}{
\begin{tikzpicture}[scale=\scaleOfPictures]
    \begin{pgfonlayer}{foreground}
        \node[ Node] at (0, 0) (n0) {};
        \node[ANode] at (1, 0) (n1) {};
        \node[ANode] at (2, 0) (n2) {};
        \node[ Node] at (3, 0) (n3) {};
        \node[ Node] at (1.5, -0.866) (nn) {};

        \node[Basic] at (2.6, -0.866) {+ $\kb$};
    \end{pgfonlayer}
    \begin{pgfonlayer}{background}
        \lEdge[cA]{n0}{\a}{n1}
        \lEdge[cA, swap]{n1}{\a}{nn}
        \lEdge[cA, swap]{nn}{\a}{n2}
        \lEdge[cA]{n2}{\a}{n3}
        \lEdge[AEdge]{n1}{}{n2}
    \end{pgfonlayer}
\end{tikzpicture}
}

\newcommand{\addOneAxVIxB}{
\begin{tikzpicture}[scale=\scaleOfPictures]
    \begin{pgfonlayer}{foreground}
        \node[ Node] at (0, 0) (n0) {};
        \node[ANode] at (1, 0) (n1) {};
        \node[ANode] at (2, 0) (n2) {};
        \node[ Node] at (3, 0) (n3) {};
        \node[ Node] at (1.5, -0.866) (nn) {};

        \node[Basic] at (2.6, -0.866) {+ $\kb$};
    \end{pgfonlayer}
    \begin{pgfonlayer}{background}
        \lEdge[cA]{n0}{\a}{n1}
        \lEdge[cA, swap]{n1}{\a}{nn}
        \lEdge[cA, swap]{nn}{\a}{n2}
        \lEdge[cA]{n2}{\a}{n3}
        \lEdge[cB]{n1}{\b}{n2}
    \end{pgfonlayer}
    \begin{pgfonlayer}{prebackground}
        \draw[Selection] (n1.center) -- (n2.center) -- (n3.center);
    \end{pgfonlayer}
\end{tikzpicture}
}

\newcommand{\addOneAxVIxA}{
\begin{tikzpicture}[scale=\scaleOfPictures]
    \begin{pgfonlayer}{foreground}
        \node[ Node] at (0, 0) (n0) {};
        \node[ANode] at (1, 0) (n1) {};
        \node[ANode] at (2, 0) (n2) {};
        \node[ Node] at (3, 0) (n3) {};
        \node[ Node] at (1.5, -0.866) (nn) {};

        \node[Basic] at (2.6, -0.866) {+ $\kb$};
    \end{pgfonlayer}
    \begin{pgfonlayer}{background}
        \lEdge[cA]{n0}{\a}{n1}
        \lEdge[cA, swap]{n1}{\a}{nn}
        \lEdge[cA, swap]{nn}{\a}{n2}
        \lEdge[cA]{n2}{\a}{n3}
        \lEdge[cA]{n1}{\a}{n2}
    \end{pgfonlayer}
    \begin{pgfonlayer}{prebackground}
        \draw[Selection] (n0.center) -- (n3.center);
    \end{pgfonlayer}
\end{tikzpicture}
}

\newcommand{\addOneAxVIxNO}{
\begin{tikzpicture}[scale=\scaleOfPictures]
    \begin{pgfonlayer}{foreground}
        \node[ Node] at (0, 0) (n0) {};
        \node[ANode] at (1, 0) (n1) {};
        \node[ANode] at (2, 0) (n2) {};
        \node[ Node] at (3, 0) (n3) {};
        \node[ Node] at (1.5, -0.866) (nn) {};

        \node[Basic] at (2.6, -0.866) {+ $\kb$};
    \end{pgfonlayer}
    \begin{pgfonlayer}{background}
        \lEdge[cA]{n0}{\a}{n1}
        \lEdge[cA, swap]{n1}{\a}{nn}
        \lEdge[cA, swap]{nn}{\a}{n2}
        \lEdge[cA]{n2}{\a}{n3}
    \end{pgfonlayer}
    \begin{pgfonlayer}{prebackground}
        \draw[Selection] (n0.center) -- (n1.center) -- (nn.center) -- (n2.center);
    \end{pgfonlayer}
\end{tikzpicture}
}

\begin{center}
    \begin{tikzpicture}[scale=\scaleOfPictures]
        \node (Q) [AInstance] at (0, 0) {\addOneAxVIxQ};
        \node (B) [AResult, left=1cm of Q] {\addOneAxVIxB};
        \node (D) [AResult, right=1cm of Q] {\addOneAxVIxNO};
        \node (E) [AResult, below=1cm of Q] {\addOneAxVIxA};
        \node[anchor=north east, inner xsep=0] at (B.south east) (ok1) {\CaseOk 2)};
        \node[anchor=north east, inner xsep=0] at (D.south east) (ok2) {we get $\underline{\a\a\a}$};
        \node[anchor=north east, inner xsep=0] at (E.south east) (ok3) {we get $\underline{\a\a\a}$};
        \draw[CaseArrow]
        (Q)
        -- node[midway, auto,     , inner sep=0cm] {\CaseBEdge}
        (B);
        \draw[CaseArrow]
        (Q)
        -- node[midway, auto, swap, inner sep=0cm] {\CaseEdgeDoesNotExist}
        (D);
        \draw[CaseArrow]
        (Q)
        -- node[midway, auto, swap, inner sep=0cm] {\CaseAEdge}
        (E);
    \end{tikzpicture}
\end{center}
If we obtained a $\b$-edge, we luckily end, having met the condition \CaseOk 2). In both remaining
cases from the resulting graph we derive a path $\underline{\a\a\a}$.

\Instance{\ref{lemma:addOneA}.6}
Using that path together with $G_5$ (from Instance \ref{lemma:addOneA}.4), we construct another instance of amalgamation.
Fortunately, it is the penultimate instance in the proof of the current lemma.

\newcommand{\addOneAxVIIxQ}{
\begin{tikzpicture}[scale=\scaleOfPictures]
    \begin{pgfonlayer}{foreground}
        \node[ Node] at (0, 0) (n0) {};
        \node[ Node] at (1, 0) (n1) {};
        \node[ANode] at (1, 1) (n2) {};
        \node[ANode] at (0, 1) (n3) {};
        \node[ Node] at (-0.5, -0.5) (ogon) {};

        \node[Basic] at (1.8, -0.5) {+ $\kb$};
    \end{pgfonlayer}
    \begin{pgfonlayer}{background}
        \lEdge[cA]{ogon}{\a}{n0}
        \lEdge[cA,swap]{n0}{\a}{n1}
        \lEdge[cA,swap]{n1}{\a}{n2}
        \lEdge[cA,swap]{n3}{\a}{n0}
        \lEdge[AEdge]{n2}{}{n3}
    \end{pgfonlayer}
\end{tikzpicture}
}

\newcommand{\addOneAxVIIxB}{
\begin{tikzpicture}[scale=\scaleOfPictures]
    \begin{pgfonlayer}{foreground}
        \node[ Node] at (0, 0) (n0) {};
        \node[ Node] at (1, 0) (n1) {};
        \node[ANode] at (1, 1) (n2) {};
        \node[ANode] at (0, 1) (n3) {};
        \node[ Node] at (-0.5, -0.5) (ogon) {};

        \node[Basic] at (1.8, -0.5) {+ $\kb$};
    \end{pgfonlayer}
    \begin{pgfonlayer}{background}
        \lEdge[cA]{ogon}{\a}{n0}
        \lEdge[cA,swap]{n0}{\a}{n1}
        \lEdge[cA,swap]{n1}{\a}{n2}
        \lEdge[cA,swap]{n3}{\a}{n0}
        \lEdge[cB,swap]{n2}{\b}{n3}
    \end{pgfonlayer}
    \begin{pgfonlayer}{prebackground}
        \draw[Selection] (n2.center) -- (n3.center) -- (n0.center);
    \end{pgfonlayer}
\end{tikzpicture}
}

\newcommand{\addOneAxVIIxNO}{
\begin{tikzpicture}[scale=\scaleOfPictures]
    \begin{pgfonlayer}{foreground}
        \node[ Node] at (0, 0) (n0) {};
        \node[ Node] at (1, 0) (n1) {};
        \node[ANode] at (1, 1) (n2) {};
        \node[ANode] at (0, 1) (n3) {};
        \node[ Node] at (-0.5, -0.5) (ogon) {};

        \node[Basic] at (1.8, -0.5) {+ $\kb$};
    \end{pgfonlayer}
    \begin{pgfonlayer}{background}
        \lEdge[cA]{ogon}{\a}{n0}
        \lEdge[cA,swap]{n0}{\a}{n1}
        \lEdge[cA,swap]{n1}{\a}{n2}
        \lEdge[cA,swap]{n3}{\a}{n0}
    \end{pgfonlayer}
    \begin{pgfonlayer}{prebackground}
        \draw[Selection] (n3.center) -- (n3.center) (ogon.center) -- (ogon.center) (n1.center) -- (n2.center);
    \end{pgfonlayer}
\end{tikzpicture}
}

\newcommand{\addOneAxVIIxA}{
\begin{tikzpicture}[scale=\scaleOfPictures]
    \begin{pgfonlayer}{foreground}
        \node[ Node] at (0, 0) (n0) {};
        \node[ Node] at (1, 0) (n1) {};
        \node[ANode] at (1, 1) (n2) {};
        \node[ANode] at (0, 1) (n3) {};
        \node[ Node] at (-0.5, -0.5) (ogon) {};

        \node[Basic] at (1.8, -0.5) {+ $\kb$};
    \end{pgfonlayer}
    \begin{pgfonlayer}{background}
        \lEdge[cA]{ogon}{\a}{n0}
        \lEdge[cA,swap]{n0}{\a}{n1}
        \lEdge[cA,swap]{n1}{\a}{n2}
        \lEdge[cA,swap]{n3}{\a}{n0}
        \lEdge[cA,swap]{n2}{\a}{n3}
    \end{pgfonlayer}
\end{tikzpicture}
}

\begin{center}
    \begin{tikzpicture}[scale=\scaleOfPictures]
        \node (Q) [AInstance] at (0, 0) {\addOneAxVIIxQ};
        \node (B) [AResult, left=1cm of Q] {\addOneAxVIIxB};
        \node (D) [AResult, right=1cm of Q] {\addOneAxVIIxNO};
        \node (E) [AResult, below=1cm of Q] {\addOneAxVIIxA};
        \node[anchor=north east, inner xsep=0] at (B.south east) (ok1) {\CaseOk 2)};
        \node[anchor=north east, inner xsep=0] at (D.south east) (ok2) {\CaseOk 1)};
        \node[anchor=east] at (E.west) (name) {$G_6 =$};
        \draw[CaseArrow]
        (Q)
        -- node[midway, auto,     , inner sep=0cm] {\CaseBEdge}
        (B);
        \draw[CaseArrow]
        (Q)
        -- node[midway, auto, swap, inner sep=0cm] {\CaseEdgeDoesNotExist}
        (D);
        \draw[CaseArrow]
        (Q)
        -- node[midway, auto, swap, inner sep=0cm] {\CaseAEdge}
        (E);
    \end{tikzpicture}
\end{center}
Similarly as in all previous instances, only one case does not end immediately by satisfying
one of the conditions \CaseOk 1) or \CaseOk 2). Let $G_6$ be the graph we get in the $\a$-edge--case.

\Instance{\ref{lemma:addOneA}.7}
We have nearly made it through to the end of the proof of Lemma~\ref{lemma:addOneA}. For
construction of the last amalgamation instance we need graphs $G_4$ (from Instance \ref{lemma:addOneA}.3) and $G_6$
(just created).

\newcommand{\addOneAxVIIIxQ}{
\begin{tikzpicture}[scale=\scaleOfPictures]
    \begin{pgfonlayer}{foreground}
        \node[ Node] at (0, 0) (n0) {};
        \node[ Node] at (1, 0) (n1) {};
        \node[ Node] at (1, 1) (n2) {};
        \node[ Node] at (0, 1) (n3) {};
        \node[ANode] at (-0.5, -0.5) (o1) {};
        \node[ANode] at (-1, -1) (o2) {};

        \node[Basic] at (1, -1) {+ $\kb$};
    \end{pgfonlayer}
    \begin{pgfonlayer}{background}
        \lEdge[cA]{o1}{\a}{n0}
        \lEdge[cA,swap]{n0}{\a}{n1}
        \lEdge[cA,swap]{n1}{\a}{n2}
        \lEdge[cA,swap]{n2}{\a}{n3}
        \lEdge[cA,swap]{n3}{\a}{n0}
        \lEdge[AEdge]{o2}{}{o1}
    \end{pgfonlayer}
\end{tikzpicture}
}
\newcommand{\addOneAxVIIIxA}{
\begin{tikzpicture}[scale=\scaleOfPictures]
    \begin{pgfonlayer}{foreground}
        \node[ Node] at (0, 0) (n0) {};
        \node[ Node] at (1, 0) (n1) {};
        \node[ Node] at (1, 1) (n2) {};
        \node[ Node] at (0, 1) (n3) {};
        \node[ANode] at (-0.5, -0.5) (o1) {};
        \node[ANode] at (-1, -1) (o2) {};

        \node[Basic] at (1, -1) {+ $\kb$};
    \end{pgfonlayer}
    \begin{pgfonlayer}{background}
        \lEdge[cA]{o1}{\a}{n0}
        \lEdge[cA,swap]{n0}{\a}{n1}
        \lEdge[cA,swap]{n1}{\a}{n2}
        \lEdge[cA,swap]{n2}{\a}{n3}
        \lEdge[cA,swap]{n3}{\a}{n0}
        \lEdge[cA]{o2}{\a}{o1}
    \end{pgfonlayer}
    \begin{pgfonlayer}{prebackground}
        \draw[Selection] (o1.center) -- (o2.center) (n1.center) -- (n1.center) (n3.center) -- (n3.center);
    \end{pgfonlayer}
\end{tikzpicture}
}
\newcommand{\addOneAxVIIIxB}{
\begin{tikzpicture}[scale=\scaleOfPictures]
    \begin{pgfonlayer}{foreground}
        \node[ Node] at (0, 0) (n0) {};
        \node[ Node] at (1, 0) (n1) {};
        \node[ Node] at (1, 1) (n2) {};
        \node[ Node] at (0, 1) (n3) {};
        \node[ANode] at (-0.5, -0.5) (o1) {};
        \node[ANode] at (-1, -1) (o2) {};

        \node[Basic] at (1, -1) {+ $\kb$};
    \end{pgfonlayer}
    \begin{pgfonlayer}{background}
        \lEdge[cA]{o1}{\a}{n0}
        \lEdge[cA,swap]{n0}{\a}{n1}
        \lEdge[cA,swap]{n1}{\a}{n2}
        \lEdge[cA,swap]{n2}{\a}{n3}
        \lEdge[cA,swap]{n3}{\a}{n0}
        \lEdge[cB]{o2}{\b}{o1}
    \end{pgfonlayer}
    \begin{pgfonlayer}{prebackground}
        \draw[Selection] (o2.center) -- (o1.center) -- (n0.center);
    \end{pgfonlayer}
\end{tikzpicture}
}
\newcommand{\addOneAxVIIIxNO}{
\begin{tikzpicture}[scale=\scaleOfPictures]
    \begin{pgfonlayer}{foreground}
        \node[ Node] at (0, 0) (n0) {};
        \node[ Node] at (1, 0) (n1) {};
        \node[ Node] at (1, 1) (n2) {};
        \node[ Node] at (0, 1) (n3) {};
        \node[ANode] at (-0.5, -0.5) (o1) {};
        \node[ANode] at (-1, -1) (o2) {};

        \node[Basic] at (1, -1) {+ $\kb$};
    \end{pgfonlayer}
    \begin{pgfonlayer}{background}
        \lEdge[cA]{o1}{\a}{n0}
        \lEdge[cA,swap]{n0}{\a}{n1}
        \lEdge[cA,swap]{n1}{\a}{n2}
        \lEdge[cA,swap]{n2}{\a}{n3}
        \lEdge[cA,swap]{n3}{\a}{n0}
    \end{pgfonlayer}
    \begin{pgfonlayer}{prebackground}
        \draw[Selection] (o2.center) -- (o2.center) (o1.center) -- (n0.center) (n2.center) -- (n2.center);
    \end{pgfonlayer}
\end{tikzpicture}
}

\begin{center}
    \begin{tikzpicture}[scale=\scaleOfPictures]
        \node (Q) [AInstance] at (0, 0) {\addOneAxVIIIxQ};
        \node (B) [AResult, left=1cm of Q] {\addOneAxVIIIxB};
        \node (D) [AResult, right=1cm of Q] {\addOneAxVIIIxNO};
        \node (E) [AResult, below=1cm of Q] {\addOneAxVIIIxA};
        \node[anchor=north east, inner xsep=0] at (B.south east) (ok1) {\CaseOk 2)};
        \node[anchor=north east, inner xsep=0] at (D.south east) (ok2) {\CaseOk 1)};
        \node[anchor=north east, inner xsep=0] at (E.south east) (ok3) {\CaseOk 1)};
        \draw[CaseArrow]
        (Q)
        -- node[midway, auto,     , inner sep=0cm] {\CaseBEdge}
        (B);
        \draw[CaseArrow]
        (Q)
        -- node[midway, auto, swap, inner sep=0cm] {\CaseEdgeDoesNotExist}
        (D);
        \draw[CaseArrow]
        (Q)
        -- node[midway, auto, swap, inner sep=0cm] {\CaseAEdge}
        (E);
    \end{tikzpicture}
\end{center}

Each of three possible outcomes of this instance allows to fulfill the conditions \CaseOk 1) or
\CaseOk 2), thus we finally completed the proof of Lemma~\ref{lemma:addOneA}.
\end{proof}

There is nothing left to do but to proceed with proving the next lemma. This proof will be a bit
shorter, as it consists only of four amalgamation instances.


\begin{proof}[Proof of Lemma~\ref{lemma:addOneB}]
The assumptions of the lemma require $\G$ to embed the following graphs:
\begin{itemize}
    \item graph $G_1 = \underline{\a\y} + \kb$,
    \item graph $G_2 = \underline{\a} + \bullet + \bullet + \kb$ obtained as the result of previous lemma.
\end{itemize}
for some colors $\a, \y \in \ColorsTwo$.
As before, let $\b$ denote the second (i.e. different than $\a$) color from two-element set $\ColorsTwo$.

\paragraph{Proof structure} Present lemma aims at showing that $\G$ embeds a graph of the form
$\underline{\a\z} + \bullet + \kb$. The structure of the proof has a slight subtlety: depending on color $\y$
two different cases may occur:

\begin{enumerate}
    \item if $\y = \b$, then we are bound to succeed with finding the required graph $\underline{\a\z} + \bullet + \kb$,
    \item however, if $\y = \a$, in some case we may not immediately find such graph. Instead of it,
    first we will find graph $G_1' = \underline{\a\b} + \kb$ --- a graph that looks like $G_1$
    we have in our assumptions, but with one edge recolored from $\y$ to $\b$. This graph
    allows us to repeat the whole reasoning, but now with the guarantee that we will end in the
    first case ($\y = \b$).
\end{enumerate}

Let us now move on to the proof --- even if the subtlety is not entirely clear now,
everything should get more evident, when we will get to the problematic point.

\Instance{\ref{lemma:addOneB}.1} The first amalgamation instance
is built using the graphs $G_1$ and $G_2$ following from the assumptions:

\newcommand{\addOneBxIxQ}{
\begin{tikzpicture}[scale=\scaleOfPictures]
    \begin{pgfonlayer}{foreground}
        \node[ Node] at (0, 0) (n0) {};
        \node[ANode] at (1, 0) (n1) {};
        \node[ANode] at (1, 1) (n2) {};
        \node[ Node] at (0, 1) (n3) {};

        \node[Basic] at (1.8, 0.5) {+ $\kb$};
    \end{pgfonlayer}
    \begin{pgfonlayer}{background}
        \lEdge[cA,swap]{n3}{\a}{n0}
        \lEdge[cY,swap]{n0}{\y}{n1}
        \lEdge[AEdge]{n1}{}{n2}
    \end{pgfonlayer}
\end{tikzpicture}
}
\newcommand{\addOneBxIxA}{
\begin{tikzpicture}[scale=\scaleOfPictures]
    \begin{pgfonlayer}{foreground}
        \node[ Node] at (0, 0) (n0) {};
        \node[ANode] at (1, 0) (n1) {};
        \node[ANode] at (1, 1) (n2) {};
        \node[ Node] at (0, 1) (n3) {};

        \node[Basic] at (1.8, 0.5) {+ $\kb$};
    \end{pgfonlayer}
    \begin{pgfonlayer}{background}
        \lEdge[cA,swap]{n3}{\a}{n0}
        \lEdge[cY,swap]{n0}{\y}{n1}
        \lEdge[cA]{n1}{\a}{n2}
    \end{pgfonlayer}
\end{tikzpicture}
}
\newcommand{\addOneBxIxB}{
\begin{tikzpicture}[scale=\scaleOfPictures]
    \begin{pgfonlayer}{foreground}
        \node[ Node] at (0, 0) (n0) {};
        \node[ANode] at (1, 0) (n1) {};
        \node[ANode] at (1, 1) (n2) {};
        \node[ Node] at (0, 1) (n3) {};

        \node[Basic] at (1.8, 0.5) {+ $\kb$};
    \end{pgfonlayer}
    \begin{pgfonlayer}{background}
        \lEdge[cA,swap]{n3}{\a}{n0}
        \lEdge[cY,swap]{n0}{\y}{n1}
        \lEdge[cB]{n1}{\b}{n2}
    \end{pgfonlayer}
\end{tikzpicture}
}
\newcommand{\addOneBxIxNO}{
\begin{tikzpicture}[scale=\scaleOfPictures]
    \begin{pgfonlayer}{foreground}
        \node[ Node] at (0, 0) (n0) {};
        \node[ANode] at (1, 0) (n1) {};
        \node[ANode] at (1, 1) (n2) {};
        \node[ Node] at (0, 1) (n3) {};

        \node[Basic] at (1.8, 0.5) {+ $\kb$};
    \end{pgfonlayer}
    \begin{pgfonlayer}{background}
        \lEdge[cA,swap]{n3}{\a}{n0}
        \lEdge[cY,swap]{n0}{\y}{n1}
    \end{pgfonlayer}
\end{tikzpicture}
}

\begin{center}
    \begin{tikzpicture}[scale=\scaleOfPictures]
        \node (Q) [AInstance] at (0, 0) {\addOneBxIxQ};
        \node (B) [AResult, left=1cm of Q] {\addOneBxIxA};
        \node (D) [AResult, right=1cm of Q] {\addOneBxIxNO};
        \node (E) [AResult, below=1cm of Q] {\addOneBxIxB};
        \node[anchor=north east, inner xsep=0] at (D.south east) (ok2) {\CaseOk};
        \draw[CaseArrow]
        (Q)
        -- node[midway, auto,     , inner sep=0cm] {\CaseAEdge}
        (B);
        \draw[CaseArrow]
        (Q)
        -- node[midway, auto, swap, inner sep=0cm] {\CaseEdgeDoesNotExist}
        (D);
        \draw[CaseArrow]
        (Q)
        -- node[midway, auto, swap, inner sep=0cm] {\CaseBEdge}
        (E);
    \end{tikzpicture}
\end{center}
In case where the solution does not contain a new edge, we directly get the graph we are looking for.
The case of and $\a$-edge is not much difficult -- to successfully deal with it,
we only need one additional amalgamation.
It turns out, that the appearance of a $\b$-edge is the most cumbersome case.
We will return to it in instance \ref{lemma:addOneB}.3.

\Instance{\ref{lemma:addOneB}.2} Here we use the graph $\underline{\a\x\a}$ we just obtained (in case of $\a$-edge)
together with $G_2$.

\newcommand{\addOneBxIIxQ}{
\begin{tikzpicture}[scale=\scaleOfPictures]
    \begin{pgfonlayer}{foreground}
        \node[ Node] at (0, 0) (n0) {};
        \node[ANode] at (1, 0) (n1) {};
        \node[ Node] at (1, 1) (n2) {};
        \node[ Node] at (0, 1) (n3) {};
        \node[ANode] at (2, 0) (n4) {};

        \node[Basic] at (2.2, 1) {+ $\kb$};
    \end{pgfonlayer}
    \begin{pgfonlayer}{background}
        \lEdge[cA,swap]{n3}{\a}{n0}
        \lEdge[cA]{n1}{\a}{n2}
        \lEdge[cY,swap]{n0}{\y}{n1}
        \lEdge[AEdge]{n1}{}{n4}
    \end{pgfonlayer}
\end{tikzpicture}
}
\newcommand{\addOneBxIIxZ}{
\begin{tikzpicture}[scale=\scaleOfPictures]
    \begin{pgfonlayer}{foreground}
        \node[ Node] at (0, 0) (n0) {};
        \node[ANode] at (1, 0) (n1) {};
        \node[ Node] at (1, 1) (n2) {};
        \node[ Node] at (0, 1) (n3) {};
        \node[ANode] at (2, 0) (n4) {};

        \node[Basic] at (2.2, 1) {+ $\kb$};
    \end{pgfonlayer}
    \begin{pgfonlayer}{background}
        \lEdge[cA,swap]{n3}{\a}{n0}
        \lEdge[cA]{n1}{\a}{n2}
        \lEdge[cY,swap]{n0}{\y}{n1}
        \lEdge[cX]{n1}{\x}{n4}
    \end{pgfonlayer}
    \begin{pgfonlayer}{prebackground}
        \draw[Selection] (n4.center) -- (n1.center) -- (n2.center) (n3.center) -- (n3.center);
    \end{pgfonlayer}
\end{tikzpicture}
}
\newcommand{\addOneBxIIxNO}{
\begin{tikzpicture}[scale=\scaleOfPictures]
    \begin{pgfonlayer}{foreground}
        \node[ Node] at (0, 0) (n0) {};
        \node[ANode] at (1, 0) (n1) {};
        \node[ Node] at (1, 1) (n2) {};
        \node[ Node] at (0, 1) (n3) {};
        \node[ANode] at (2, 0) (n4) {};

        \node[Basic] at (2.2, 1) {+ $\kb$};
    \end{pgfonlayer}
    \begin{pgfonlayer}{background}
        \lEdge[cA,swap]{n3}{\a}{n0}
        \lEdge[cA]{n1}{\a}{n2}
        \lEdge[cY,swap]{n0}{\y}{n1}
    \end{pgfonlayer}
    \begin{pgfonlayer}{prebackground}
        \draw[Selection] (n3.center) -- (n0.center) -- (n1.center) (n4.center) -- (n4.center);
    \end{pgfonlayer}
\end{tikzpicture}
}
\begin{center}
    \begin{tikzpicture}[scale=\scaleOfPictures]
        \node (Q) [AInstance] at (0, 0) {\addOneBxIIxQ};
        \node (B) [AResult, left=1cm of Q] {\addOneBxIIxNO};
        \node (D) [AResult, right=1cm of Q] {\addOneBxIIxZ};
        \node[anchor=north east, inner xsep=0] at (B.south east) (ok1) {\CaseOk};
        \node[anchor=north east, inner xsep=0] at (D.south east) (ok2) {\CaseOk};
        \draw[CaseArrow]
        (Q)
        -- node[midway, auto,     , inner sep=0cm] {\CaseEdgeDoesNotExist}
        (B);
        \draw[CaseArrow]
        (Q)
        -- node[midway, auto, swap, inner sep=0cm] {\CaseEdgeExists}
        (D);
    \end{tikzpicture}
\end{center}
In each of possible cases we get a graph that matches the pattern we look for --- a graph
$\underline{\a\z} + \bullet + \kb$ for some $\z \in \ColorsTwo$. Let us return to the
omitted $\b$-edge case of Instance \ref{lemma:addOneB}.1:

\Instance{\ref{lemma:addOneB}.3} Present instance differs from the previous one only
with the color of one edge, but it has substantial consequences for our proof.

\newcommand{\addOneBxIIIxQ}{
\begin{tikzpicture}[scale=\scaleOfPictures]
    \begin{pgfonlayer}{foreground}
        \node[ Node] at (0, 0) (n0) {};
        \node[ANode] at (1, 0) (n1) {};
        \node[ Node] at (1, 1) (n2) {};
        \node[ Node] at (0, 1) (n3) {};
        \node[ANode] at (2, 0) (n4) {};

        \node[Basic] at (2, 1) {+ $\kb$};
    \end{pgfonlayer}
    \begin{pgfonlayer}{background}
        \lEdge[cA,swap]{n3}{\a}{n0}
        \lEdge[cB]{n1}{\b}{n2}
        \lEdge[cY,swap]{n0}{\y}{n1}
        \lEdge[AEdge]{n1}{}{n4}
    \end{pgfonlayer}
\end{tikzpicture}
}
\newcommand{\addOneBxIIIxNO}{
\begin{tikzpicture}[scale=\scaleOfPictures]
    \begin{pgfonlayer}{foreground}
        \node[ Node] at (0, 0) (n0) {};
        \node[ANode] at (1, 0) (n1) {};
        \node[ Node] at (1, 1) (n2) {};
        \node[ Node] at (0, 1) (n3) {};
        \node[ANode] at (2, 0) (n4) {};

        \node[Basic] at (2, 1) {+ $\kb$};
    \end{pgfonlayer}
    \begin{pgfonlayer}{background}
        \lEdge[cA,swap]{n3}{\a}{n0}
        \lEdge[cB]{n1}{\b}{n2}
        \lEdge[cY,swap]{n0}{\y}{n1}
    \end{pgfonlayer}
    \begin{pgfonlayer}{prebackground}
        \draw[Selection] (n3.center) -- (n0.center) -- (n1.center) (n4.center) -- (n4.center);
    \end{pgfonlayer}
\end{tikzpicture}
}
\newcommand{\addOneBxIIIxA}{
\begin{tikzpicture}[scale=\scaleOfPictures]
    \begin{pgfonlayer}{foreground}
        \node[ Node] at (0, 0) (n0) {};
        \node[ANode] at (1, 0) (n1) {};
        \node[ Node] at (1, 1) (n2) {};
        \node[ Node] at (0, 1) (n3) {};
        \node[ANode] at (2, 0) (n4) {};

        \node[Basic] at (2, 1) {+ $\kb$};
    \end{pgfonlayer}
    \begin{pgfonlayer}{background}
        \lEdge[cA,swap]{n3}{\a}{n0}
        \lEdge[cB]{n1}{\b}{n2}
        \lEdge[cY,swap]{n0}{\y}{n1}
        \lEdge[cA]{n1}{\a}{n4}
    \end{pgfonlayer}
    \begin{pgfonlayer}{prebackground}
        \draw[Selection] (n2.center) -- (n1.center) -- (n4.center) (n3.center) -- (n3.center);
    \end{pgfonlayer}
\end{tikzpicture}
}
\newcommand{\addOneBxIIIxB}{
\begin{tikzpicture}[scale=\scaleOfPictures]
    \begin{pgfonlayer}{foreground}
        \node[ Node] at (0, 0) (n0) {};
        \node[ANode] at (1, 0) (n1) {};
        \node[ Node] at (1, 1) (n2) {};
        \node[ Node] at (0, 1) (n3) {};
        \node[ANode] at (2, 0) (n4) {};

        \node[Basic] at (2, 1) {+ $\kb$};
    \end{pgfonlayer}
    \begin{pgfonlayer}{background}
        \lEdge[cA,swap]{n3}{\a}{n0}
        \lEdge[cB]{n1}{\b}{n2}
        \lEdge[cY,swap]{n0}{\y}{n1}
        \lEdge[cB, swap]{n1}{\b}{n4}
    \end{pgfonlayer}
\end{tikzpicture}
}

\begin{center}
    \begin{tikzpicture}[scale=\scaleOfPictures]
        \node (Q) [AInstance] at (0, 0) {\addOneBxIIIxQ};
        \node (B) [AResult, left=1cm of Q] {\addOneBxIIIxA};
        \node (D) [AResult, right=1cm of Q] {\addOneBxIIIxNO};
        \node (E) [AResult, below=1cm of Q] {\addOneBxIIIxB};
        \node[anchor=north east, inner xsep=0] at (B.south east) (ok1) {\CaseOk};
        \node[anchor=north east, inner xsep=0] at (D.south east) (ok2) {\CaseOk};
        \node[anchor=east, inner xsep=0] at (E.west) (name) {$G_3 = $};
        \draw[CaseArrow]
        (Q)
        -- node[midway, auto,     , inner sep=0cm] {\CaseAEdge}
        (B);
        \draw[CaseArrow]
        (Q)
        -- node[midway, auto, swap, inner sep=0cm] {\CaseEdgeDoesNotExist}
        (D);
        \draw[CaseArrow]
        (Q)
        -- node[midway, auto, swap, inner sep=0cm] {\CaseBEdge}
        (E);
    \end{tikzpicture}
\end{center}

\noindent
Let us now consider two possible values of edge color $\y$ in the resulting graph.

\paragraph{Case $1^\circ$} ($\y = \b$). Here, to get the graph we look for,
it suffices to build one additional amalgamation instance.
As the ingredients we take two copies of graph $G_3$, having in mind the assumed color
substitution $\y = \b$:

\Instance{\ref{lemma:addOneB}.4}

\newcommand{\addOneBxIVxQ}{
\begin{tikzpicture}[scale=\scaleOfPictures]
    \begin{pgfonlayer}{foreground}
        \node[ANode] at (0, 0) (n0) {};
        \node[ Node] at (1, 0) (n1) {};
        \node[ Node] at (1, 1) (n2) {};
        \node[ANode] at (0, 1) (n3) {};

        \node[ Node] at (1.866, 0.5) (m1) {};
        \node[ Node] at (2.866, 0.5) (m2) {};

        \node[Basic] at (2.4, 1.3) {+ $\kb$};
    \end{pgfonlayer}
    \begin{pgfonlayer}{background}
        \lEdge[cA]{n3}{\a}{n2}
        \lEdge[cB]{n2}{\b}{m1}
        \lEdge[cA,swap]{n0}{\a}{n1}
        \lEdge[cB,swap]{n1}{\b}{m1}
        \lEdge[cB,swap]{m1}{\b}{m2}
        \lEdge[AEdge]{n0}{}{n3}
    \end{pgfonlayer}
\end{tikzpicture}
}
\newcommand{\addOneBxIVxX}{
\begin{tikzpicture}[scale=\scaleOfPictures]
    \begin{pgfonlayer}{foreground}
        \node[ANode] at (0, 0) (n0) {};
        \node[ Node] at (1, 0) (n1) {};
        \node[ Node] at (1, 1) (n2) {};
        \node[ANode] at (0, 1) (n3) {};

        \node[ Node] at (1.866, 0.5) (m1) {};
        \node[ Node] at (2.866, 0.5) (m2) {};

        \node[Basic] at (2.4, 1.3) {+ $\kb$};
    \end{pgfonlayer}
    \begin{pgfonlayer}{background}
        \lEdge[cA]{n3}{\a}{n2}
        \lEdge[cB]{n2}{\b}{m1}
        \lEdge[cA,swap]{n0}{\a}{n1}
        \lEdge[cB,swap]{n1}{\b}{m1}
        \lEdge[cB,swap]{m1}{\b}{m2}
        \lEdge[cBlack]{n0}{\n}{n3}
    \end{pgfonlayer}
    \begin{pgfonlayer}{prebackground}
        \draw[Selection] (n0.center) -- (n3.center) -- (n2.center) (m2.center) -- (m2.center);
    \end{pgfonlayer}
\end{tikzpicture}
}
\newcommand{\addOneBxIVxNO}{
\begin{tikzpicture}[scale=\scaleOfPictures]
    \begin{pgfonlayer}{foreground}
        \node[ANode] at (0, 0) (n0) {};
        \node[ Node] at (1, 0) (n1) {};
        \node[ Node] at (1, 1) (n2) {};
        \node[ANode] at (0, 1) (n3) {};

        \node[ Node] at (1.866, 0.5) (m1) {};
        \node[ Node] at (2.866, 0.5) (m2) {};

        \node[Basic] at (2.4, 1.3) {+ $\kb$};
    \end{pgfonlayer}
    \begin{pgfonlayer}{background}
        \lEdge[cA]{n3}{\a}{n2}
        \lEdge[cB]{n2}{\b}{m1}
        \lEdge[cA,swap]{n0}{\a}{n1}
        \lEdge[cB,swap]{n1}{\b}{m1}
        \lEdge[cB,swap]{m1}{\b}{m2}
    \end{pgfonlayer}
    \begin{pgfonlayer}{prebackground}
        \draw[Selection] (n0.center) -- (n1.center) -- (m1.center) (n3.center) -- (n3.center);
    \end{pgfonlayer}
\end{tikzpicture}
}

\begin{center}
    \begin{tikzpicture}[scale=\scaleOfPictures]
        \node (Q) [AInstance] at (0, 0) {\addOneBxIVxQ};
        \node (B) [AResult, left=1cm of Q] {\addOneBxIVxX};
        \node (D) [AResult, right=1cm of Q] {\addOneBxIVxNO};
        \node[anchor=north east, inner xsep=0] at (B.south east) (ok1) {\CaseOk};
        \node[anchor=north east, inner xsep=0] at (D.south east) (ok2) {\CaseOk};
        \draw[CaseArrow]
        (Q)
        -- node[midway, auto,     , inner sep=0cm] {\CaseEdgeExists}
        (B);
        \draw[CaseArrow]
        (Q)
        -- node[midway, auto, swap, inner sep=0cm] {\CaseEdgeDoesNotExist}
        (D);
    \end{tikzpicture}
\end{center}
It is easy to see that in each case we get an appropriate subgraph required by the lemma.
We may thus move on to the second case.

\paragraph{Case $2^\circ$} ($\y = \a$). Color $\y$ has originally appeared in our considerations,
because we started with the assumed graph $G_1 = \underline{\a\y} + \kb$.
If $\y$ is equal to $\a$, we cannot directly use the technique from the case $1^\circ$., however -- happily -- not everything is lost.
After the instance \ref{lemma:addOneB}.3. we obtained (as a subgraph of $G_3$) the following graph: $G_1' = \underline{\a\b} + \kb$.
It enables us to repeat the whole proof of the lemma \ref{lemma:addOneB}.
with a new value of variable $\y$, now being certain, that we will succeed:
even if none of the previous instances yields the graph we want, we will necessarily fall to the
case $1^\circ$.

Above observation completes the proof of Lemma~\ref{lemma:addOneB}.
\end{proof}

Lemmas~\ref{lemma:addOneA} and~\ref{lemma:addOneB} --- in accordance to the previous remarks --- form
an inductive step that allows to easily prove the following corollary:

\begin{corollary}\label{corollary:addingVertices}
If $\G$ satisfies the condition A) of Lemma~\ref{lemma:withBranches} then for every
$k \in \mathbb{N}$ there exist colors $\a, \x \in \ColorsTwo$ such that $\G$ embeds the graph:
$$
\underline{\a\x} + \kb
$$
\end{corollary}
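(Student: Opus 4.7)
The plan is to prove the corollary by induction on $k$, chaining Lemmas~\ref{lemma:addOneA} and~\ref{lemma:addOneB} in the pattern already suggested by the displayed diagram preceding Lemma~\ref{lemma:addOneA}. The natural invariant to maintain is slightly stronger than the statement of the corollary: for each $k \in \mathbb{N}$, there exist colors $\a, \x \in \ColorsTwo$ (depending on $k$) such that $\G$ embeds both $\underline{\a\x} + k\cdot\bullet$ and $\underline{\a} + (k+1)\cdot\bullet$, with the same leading color $\a$ in the two graphs. This strengthened invariant is precisely what is needed to feed the hypotheses of Lemma~\ref{lemma:addOneA} at the next step.

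For the base case $k=0$, the two required graphs are $\underline{\a\x}$ and $\underline{\a} + \bullet$, which are exactly the two subgraphs guaranteed by condition A)(b) of Lemma~\ref{lemma:withBranches} as reformulated in this section, and they already share the leading color $\a$. For the inductive step, assume the invariant at level $k$. First apply Lemma~\ref{lemma:addOneA} to the pair $\underline{\a\x} + k\cdot\bullet$ and $\underline{\a} + \bullet + k\cdot\bullet$: it produces colors $\a_2, \y \in \ColorsTwo$ and graphs $\underline{\a_2\y} + k\cdot\bullet$ and $\underline{\a_2} + \bullet + \bullet + k\cdot\bullet$. The latter graph is exactly $\underline{\a_2} + (k+2)\cdot\bullet$, which is the second half of the invariant at level $k+1$. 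Next, feed both outputs of Lemma~\ref{lemma:addOneA} into Lemma~\ref{lemma:addOneB}; since they share the leading color $\a_2$, the hypotheses match, and the conclusion yields $\underline{\a_2\z} + \bullet + k\cdot\bullet = \underline{\a_2\z} + (k+1)\cdot\bullet$ for some $\z \in \ColorsTwo$. This is the first half of the invariant at level $k+1$, and both halves share the leading color $\a_2$, so the invariant is preserved.

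Extracting the first half of the invariant at each $k$ yields precisely the statement of the corollary. I do not anticipate any real obstacle here: all the delicate amalgamation-based reasoning has already been absorbed into the two preceding lemmas, and what remains is essentially bookkeeping to verify that the output of one lemma matches the input of the next with a consistent choice of leading color. The only subtlety worth flagging is that Lemma~\ref{lemma:addOneA} is allowed to change the leading color ($\a_2$ need not equal $\a$), which is why the invariant is stated with an existential over colors rather than fixing one globally.
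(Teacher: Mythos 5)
Your proposal is correct and follows exactly the inductive scheme the paper sketches in the diagram preceding Lemma~\ref{lemma:addOneA} (and then omits with ``We omit the simple proof''): chain Lemmas~\ref{lemma:addOneA} and~\ref{lemma:addOneB} at each step, tracking an invariant that supplies both $\underline{\a\x} + k\cdot\bullet$ and $\underline{\a}+(k+1)\cdot\bullet$ with a common leading color $\a$, which is precisely what feeds the hypotheses of Lemma~\ref{lemma:addOneA}. Your observation about the leading color being allowed to drift (requiring the existential over colors rather than a fixed $\a$) is the right point to flag, and your accounting of the bullet counts through the two lemma applications is correct.
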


\noindent
We omit the simple proof.


\subsection{Adding isolated edges}\label{subsec:addingEdges}

In this part of the proof we will be showing a fact similar to the one stated in
Corollary~\ref{corollary:addingVertices}, but respecting the existence of graphs
$\underline{\a\x} + k \cdot \underline{\a} \trianglelefteq \mathbb{G}$ for some $\a, \x \in \ColorsTwo$:
\begin{center}
    \begin{tikzpicture}[scale=\scaleOfPictures]
        \node (n1) [Node] at (0, 0) {};
        \node (n2) [Node] at (0, 1) {};

        \node (m1) [Node] at (-3, 0.5) {};
        \node (m2) [Node] at (-2, 0.5) {};
        \node (m3) [Node] at (-1, 0.5) {};

        \node (label) [Basic] at (-0.32, 0.5) {$k\,\cdot$};

        \begin{pgfonlayer}{background}
            \lEdge[cA, swap]{n1}{\a}{n2}
            \lEdge[cA]{m1}{\a}{m2}
            \lEdge[cX]{m2}{\x}{m3}
        \end{pgfonlayer}
    \end{tikzpicture}
\end{center}
This time the whole reasoning is divided into three lemmas.
Their proofs will be a bit simpler, but the way we should connect them to form a valid inductive step
will be less obvious.

\medskip

\paragraph{Notational remark}
Some parts of the statements of the three lemmas were \circledb{circled}\!\!. Those expressions are
required from the formal point of view, but in fact they make the idea behind the lemmas harder to
grasp. It should be noted that the graph $S$ present in those fragments never changes
--- the lemma 'gets' it from the assumptions and yields it in its thesis in an unchanged form.
Similarly, the discrete graphs $n \cdot \bullet$ contribute to the proof in a very simple way: each
lemma 'uses' a few their isolated vertices (constants $M_\bullet$) and returns the remaining
$(n - M_\bullet)$ vertices.
Due to that fact, when reading the lemmas, one should not pay a great attention to the circled
fragments. All we have to know is that they exist, then we may safely ignore them.

\begin{lemma}\label{lemma:addingP2xAA}
Let $\G$ be a homogeneous, 2-edge-colored graph which embeds $\underline{\a\a} + \circled{n \cdot
\bullet + S}$ for some given $n \in \mathbb{N}$ and colors $\a, \b \in \ColorsTwo$ ($\a \neq \b$).
Then, if $n \geq M_{\ref{lemma:addingP2xAA}}$, $\G$ embeds also one of the following graphs:
\begin{enumerate}
    \item $\underline{\a} + \underline{\a} + \circled{(n - M_{\ref{lemma:addingP2xAA}})\cdot\bullet + S}$,
    \item $\underline{\a\b}\hspace{0.44cm} + \circled{(n - M_{\ref{lemma:addingP2xAA}})\cdot\bullet + S}$
\end{enumerate}
for some constant $ M_{\ref{lemma:addingP2xAA}} \in \mathbb{N}$ (its precise value is not important).
\end{lemma}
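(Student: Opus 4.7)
The proof follows the pattern of Lemmas~\ref{lemma:addOneA} and~\ref{lemma:addOneB}: a sequence of singleton strong amalgamation instances with case analysis on the edge colors of the resulting strong solutions.

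The first amalgamation instance I pose takes as shared base $\str{A}$ a discrete structure with $n+1$ isolated vertices (two of them distinguished as $v_1$ and $u$) together with $S$. The extensions are $\str{B}_1 = \str{A} + \{v_2\}$, where $v_2$ is joined to $v_1$ by an $\a$-edge and is $\c$-connected to everything else, and $\str{B}_2 = \str{A} + \{w\}$, where $w$ is joined to $u$ by an $\a$-edge and is $\c$-connected to everything else. Both $\str{B}_i$ are isomorphic to $\underline{\a} + n\cdot\bullet + S$, which embeds in $\G$ as an induced subgraph of the assumed $\underline{\a\a} + n\cdot\bullet + S$ (take one edge of the $\a\a$-path and drop its third vertex). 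By strong amalgamation, the resulting solution $\str{C} = \str{A} + \{v_2, w\}$ carries an edge $v_2 w$ of some color $\y \in \{\a, \b, \c\}$:
\begin{itemize}
\item If $\y = \c$, then $\{v_1, v_2, u, w\}$ induces $\underline{\a} + \underline{\a}$ in $\str{C}$; together with the remaining $n-1$ isolated vertices and $S$, this gives $\underline{\a} + \underline{\a} + (n-1)\cdot\bullet + S \embedsin \G$, fulfilling case~1 with $M_{\ref{lemma:addingP2xAA}} = 1$.
\item If $\y = \b$, then $\{v_1, v_2, w\}$ induces $\underline{\a\b}$, fulfilling case~2.
\item If $\y = \a$, then $\{v_1, v_2, w, u\}$ induces the four-vertex $\a$-path $\underline{\a\a\a}$, so $\underline{\a\a\a} + (n-1)\cdot\bullet + S \embedsin \G$; this case requires further work.
\end{itemize}

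For the remaining $\y = \a$ case, I would pose further singleton amalgamations exploiting the accessible $\underline{\a\a\a}$ structure. A natural second instance takes the middle edge $v_2 v_3$ of a $\underline{\a\a\a}$-path as shared base, with $\str{B}_1$ reintroducing the endpoint $v_1$ (joined by $\a$ to $v_2$ only) and $\str{B}_2$ reintroducing the other endpoint $v_4$ (joined by $\a$ to $v_3$ only); case analysis on the colour of the resulting $v_1 v_4$ edge yields case~1 (if $\c$, the four vertices induce $\underline{\a}+\underline{\a}$), case~2 (if $\b$, $\{v_1, v_2, v_4\}$ induces $\underline{\a\b}$), or a four-vertex $\a$-cycle (if $\a$). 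Iterating this scheme with further amalgamation instances whose bases encode successively richer substructures eventually produces either one of the two desired conclusions or an $\a$-path of length at least five; in the latter case, the first two and the last two vertices of the path (being pairwise at distance $\geq 2$ along it) induce $\underline{\a}+\underline{\a}$, yielding case~1. The constant $M_{\ref{lemma:addingP2xAA}}$ absorbs the bounded number of isolated vertices consumed across these steps.

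The main obstacle will be ensuring that the iterated case analysis in the $\y = \a$ branch terminates: each subsequent amalgamation must be designed so that its strong solution either directly concludes the proof or strictly advances the structure (extending the $\a$-path or producing a qualitatively new configuration) rather than reproducing one already seen. Careful choice of the shared base and of the singleton extensions in each instance, together with an exhaustive case analysis, will be required to avoid circular dependencies among the amalgamation instances.
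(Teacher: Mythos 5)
Your Instance~1 is a valid variant of the paper's first amalgamation step, but it is precisely where the two approaches diverge in a way that becomes fatal later: by making $\str{B}_1$ and $\str{B}_2$ symmetric (each adding a degree-one vertex), the $\a$-case solution is the $4$-vertex path $\underline{\a\a\a}$, whereas the paper deliberately makes $\str{B}_2$ add a vertex with \emph{two} $\a$-edges, so its $\a$-case solution is a ``broom'' (a degree-3 vertex with one arm of length 2 and two arms of length 1). This asymmetry is not cosmetic.

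Your Instance~2 has a concrete error in the $\c$-case. With base $\{v_2,v_3\}$ carrying the $\a$-edge $v_2 v_3$, the solution restricted to $\{v_1,v_2,v_3,v_4\}$ still contains that base edge, so in the $\c$-case the four vertices induce the path $\underline{\a\a\a}$, \emph{not} $\underline{\a}+\underline{\a}$. In other words this amalgamation simply reproduces the structure you started from; it makes no progress at all in that branch. Fixing the base to a $3$-vertex $\a\a$-path and adding endpoints on both sides does make the $\c$-case yield a $5$-vertex path (from which $\underline{\a}+\underline{\a}$ is read off), but then the $\a$-case yields a $5$-cycle $\circ\a\a\a\a\a$, which again contains neither $\underline{\a}+\underline{\a}$ nor $\underline{\a\b}$ as an induced subgraph (in a $5$-cycle any two vertex-disjoint $\a$-edges are joined by a cycle edge, and every triple of vertices induces either a $3$-path or $\underline{\a}+\bullet$). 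The $4$-cycle you obtain in your formulation is a dead end for the same reason. So the ``iteration'' you gesture at does not strictly advance: amalgamating two $d$-paths along a $(d-1)$-path gives a $d$-cycle in the $\a$-case, not a $(d+1)$-path, and these small cycles are precisely the structures from which the target configurations cannot be extracted.

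The paper's proof consists of only two amalgamation instances and avoids this trap. Instance~1 (as described above) yields a broom. Instance~2 then amalgamates two brooms along a shared \emph{star} base $\str{A}$ with center $r_1$ and three $\a$-leaves $n_1, n_4, r_2$; $\str{B}_1$ adds $n_2$ with $n_2 n_1 \in \a$, and $\str{B}_2$ adds $n_3$ with $n_3 n_4 \in \a$. In the $\a$-case, the solution is a $5$-cycle $n_2 n_1 r_1 n_4 n_3$ together with the pendant $\a$-edge $r_1 r_2$; the question edge $n_2 n_3$ and the pendant $r_1 r_2$ are vertex-disjoint $\a$-edges with all four cross-pairs $\c$, so $\{n_2, n_3, r_1, r_2\}$ induces $\underline{\a}+\underline{\a}$. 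The crucial ingredient your sketch lacks is exactly this spare leaf $r_2$: it is ``at distance $2$'' from both added vertices, so it survives as a clean second $\a$-edge no matter what color the question edge takes. Without it (i.e.\ with your path-shaped brooms), the $\a$-branch closes into a small cycle and there is no second edge far enough away.
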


\begin{lemma}\label{lemma:addingP2xAB}
Let $\G$ be a homogeneous, 2-edge-colored graph that embeds a graph $\underline{\a\b} + n \cdot
\bullet + S$ for some given $n \geq M_{\ref{lemma:addingP2xAB}}$ and colors $\a, \b \in \ColorsTwo$ ($\a \neq \b$).
Then $\G$ also embeds the graph:
$$
\underline{\x} + \underline{\y} + \circled{(n - M_{\ref{lemma:addingP2xAB}})\cdot\bullet + S}
$$
for some constant $ M_{\ref{lemma:addingP2xAB}} \in \mathbb{N}$ and colors $\x, \y \in \ColorsTwo$.
\end{lemma}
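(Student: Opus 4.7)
Plan: The proof will follow the amalgamation-based style of Lemma~\ref{lemma:addingP2xAA}, consisting of a short chain of singleton strong-amalgamation instances with branching on the color of a newly introduced edge. Denote the assumed path as $v_1 \a v_2 \b v_3$ and the isolated vertices of $n\cdot\bullet$ as $u_1, \ldots, u_n$; the structure $S$ is carried along unchanged throughout.

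The first instance will have shared base $A = \{v_1, v_3\} + (n{-}1)\cdot\bullet + S$, obtained from $H = \underline{\a\b} + n\cdot\bullet + S$ by deleting $v_2$ and one isolated vertex $u_n$. Let $B_1 = A + \{v_2\}$ re-add $v_2$ with its original edges $v_2 \a v_1$ and $v_2 \b v_3$ (this recovers a substructure of $H$), and let $B_2 = A + \{w\}$ add a new vertex $w$ carrying a single $\x$-edge to $u_{n-1}$ for some $\x \in \ColorsTwo$, and $\c$-related to all other vertices of $A$. Both sides of the instance lie in $\age{\G}$: $B_1$ by construction, and $B_2 = \underline{\x} + n\cdot\bullet + S$ because both $\underline{\a}$ and $\underline{\b}$ appear as induced subgraphs of $\underline{\a\b}$ inside $H$.

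By strong amalgamation, the solution contains $v_2$ and $w$ as distinct elements, and the edge $v_2 w$ takes one of the three colors. If $v_2 \c w$, the induced substructure on $\{v_1, v_2, w, u_{n-1}\} \cup \{u_1, \ldots, u_{n-2}\} \cup S$ is exactly $\underline{\a} + \underline{\x} + (n{-}2)\cdot\bullet + S$, and the lemma follows with $M_{\ref{lemma:addingP2xAB}} = 2$. If instead $v_2 \a w$ or $v_2 \b w$, the induced substructure on $\{v_1, v_2, w, u_{n-1}\}$ is a 4-vertex path of the form $\underline{\a\a\x}$ or $\underline{\a\b\x}$; one or two further amalgamation instances, structurally analogous to those closing the proof of Lemma~\ref{lemma:addOneB}, are then used either to insert a further vertex that breaks a middle edge of this longer path (producing two disjoint edges directly) or to first recolor a single edge and then re-apply the present argument with the roles of $\a$ and $\b$ swapped.

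The main obstacle is the asymmetry of $\underline{\a\b}$: unlike the monochromatic input of the sibling Lemma~\ref{lemma:addingP2xAA}, here the two edges of the path carry different colors, so a single amalgamation cannot by its three-case branching alone cover all possibilities, and the $\a$-end and $\b$-end of the path must be handled separately. As in the proof of Lemma~\ref{lemma:addOneB}, the worst branches of the casework will most likely require a ``recolor-and-rerun'' step in which one first derives an auxiliary substructure $\underline{\a\b} + \kb$ with the problematic edge recolored, and then feeds it back into the opening amalgamation. The constant $M_{\ref{lemma:addingP2xAB}}$ is the total number of isolated vertices consumed along the entire chain of amalgamations.
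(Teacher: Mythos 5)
Your Instance~\ref{lemma:addingP2xAB}.1 is essentially the same as the paper's first amalgamation: a singleton instance over a base of isolated vertices, with $v_2$ (degree~2) on one side and a fresh vertex $w$ (degree~1, hanging an $\x$-edge) on the other, and the $\c$-branch immediately yields $\underline{\a}+\underline{\x}$. Up to that point the proposal matches the paper. The gap is in the non-$\c$ branches, and it is a real one.

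First, by restricting to $\{v_1,v_2,w,u_{n-1}\}$ you discard $v_3$, and with it the branching structure that the paper exploits. When $v_2$ and $w$ become adjacent, the solution contains the $5$-vertex tree on $\{v_1,v_2,v_3,w,u_{n-1}\}$ in which $v_2$ has degree~$3$ (neighbours $v_1$ via $\a$, $v_3$ via $\b$, $w$ via the new colour) and $w$ continues to $u_{n-1}$. The paper's Instances~\ref{lemma:addingP2xAB}.2 and \ref{lemma:addingP2xAB}.3 amalgamate \emph{two copies of this tree} over the tree minus its far leaf: the shared base then still contains two vertex-disjoint coloured edges (one $\a$, one $\b$, both hanging off the degree-$3$ centre), so every one of the three outcomes yields two disjoint edges. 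Once you drop $v_3$ to a bare $4$-path, you no longer have a degree-$3$ vertex, and that is exactly what makes the next amalgamation close every case; a $4$-path plus isolated vertices does not obviously supply two disjoint coloured edges in the non-$\c$ outcomes.

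Second, the two continuation ideas you name do not carry over. ``Inserting a vertex that breaks a middle edge'' is not a valid amalgamation move: a solution always contains both $\str{B}_1$ and $\str{B}_2$ as induced substructures, so the middle edge of the $4$-path stays, and the natural restrictions (e.g.\ to the two end-pairs of the path) are never induced two-edge graphs because they share the middle edge. And the ``recolor-and-rerun'' trick of Lemma~\ref{lemma:addOneB} does not apply here: in Lemma~\ref{lemma:addOneB} one re-enters the lemma with a \emph{different hypothesis} $\underline{\a\b}+\kb$, whereas Lemma~\ref{lemma:addingP2xAB} already starts from the bichromatic path $\underline{\a\b}$ and asks for a weaker conclusion (two disjoint edges), so there is no stronger hypothesis to derive and feed back in. To complete the argument you should keep $v_3$, fix $\x=\a$ so the resulting tree has the right shape in both branches, and then run the paper's two-copy-of-the-tree amalgamation (one instance for the $\a$-outcome, one for the $\b$-outcome of Instance~\ref{lemma:addingP2xAB}.1).
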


\begin{lemma}\label{lemma:P2plusP3}
Let $\G$ be a \emph{strongly} homogeneous, 2-edge-colored graph that embeds the following graphs:
\begin{enumerate}
    \item $\underline{\a\x} \hspace{0.48cm} + \circled{n \cdot \bullet + S}$,
    \item $\underline{\a} + \underline{\y}  + \circled{n \cdot \bullet + S}$,
\end{enumerate}
for some $n \geq M_{\ref{lemma:P2plusP3}}$ and colors $\a, \b, \x \in \ColorsTwo$ ($\a \neq \b$) and $\y \in \{\a, \x\}$.
Then one of the following cases holds:
\begin{enumerate}
    \item $\G$ embeds a graph $\underline{\a_1\x_1} + \underline{\a_1} + \circled{R}$ for some $\a_1, \x_1 \in \ColorsTwo$,
    \item $\G$ embeds a graph $\underline{\a\b} + \circled{R}$ and also
    embeds either $\underline{\a\a} + \underline{\b} + \circled{R}$ or
    $\underline{\b\b} + \underline{\a} + \circled{R}$.
\end{enumerate}
Above $\circled{R} = \circled{(n - M_{\ref{lemma:P2plusP3}}) \cdot \bullet + S}$, and
$M_{\ref{lemma:P2plusP3}} \in \mathbb{N}$ is --- as in previous lemmas --- some constant
resulting from the structure of the proof.
\end{lemma}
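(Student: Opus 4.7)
The plan is to combine the two hypothesised subgraphs via a strong amalgamation, gluing along the shared $\underline{\a}$-edge and then performing case analysis on the colors of the few remaining free edges.

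Concretely, I would take the common base $A$ of the amalgamation to consist of the $\underline{\a}$-edge (call its vertices $u, v$) together with $S$ and all but a constant number of the $n$ discrete vertices. Set $B_1 = A \cup \{w\}$, where $w$ extends the $\a$-edge into the path $\underline{\a\x}$ (via an $\x$-edge to $v$ and $\c$-edges elsewhere), witnessing the first hypothesis. Set $B_2 = A \cup \{p, q\}$, where $p, q$ form the extra $\underline{\y}$-edge disjoint from $A$, witnessing the second hypothesis. By strong homogeneity some solution $C \in \age{\G}$ exists, and the only undetermined data are the colors of the two edges $\{w, p\}$ and $\{w, q\}$; moreover, by varying the constraints on those two edges across several amalgamation instances, every color profile that is individually realisable can be forced to occur. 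Up to the $p \leftrightarrow q$ symmetry there are six color profiles for the pair.

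I would then dispatch the cases as follows. When both new edges are $\c$-edges, the induced subgraph on $\{u, v, w, p, q\}$ together with the leftover discrete vertices and $S$ is precisely $\underline{\a\x} + \underline{\y} + R$: with $\y = \a$ this is case~(1) with $(\a_1, \x_1) = (\a, \x)$, and with $\y = \x$ it is case~(1) with $(\a_1, \x_1) = (\x, \a)$. When exactly one of the new edges, say $\{w, p\}$, is non-$\c$ while $\{w, q\}$ is $\c$, one can extract an isolated edge from either $\{w, p\}$ or $\{p, q\}$ disjoint from a two-edge path chosen among $\{u, v, w\}$ or $\{v, w, p\}$, and by matching colors one reaches case~(1). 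When both new edges are non-$\c$, the induced structure on $\{v, w, p\}$ or $\{v, w, q\}$ exhibits a path of the form $\underline{\a\b}$ or $\underline{\b\b}$; combined with the existing $\underline{\y}$-edge this yields one conjunct of case~(2), and the second conjunct follows by isolating a suitable opposite-color edge either among the original discrete vertices (from a further small amalgamation adding a single edge) or directly from the amalgamated structure.

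The main obstacle is the last family of subcases, where both free edges carry non-$\c$ colors: here the isolated $\underline{\b}$- or $\underline{\a}$-edge required by case~(2) may be attached to the newly created path, so one or two further amalgamation instances in the style of those in the proof of Lemma~\ref{lemma:addOneB} are needed to peel off a clean isolated edge, and one must check that the colorings that would avoid both goals are actually incompatible with the amalgamation instance chosen. The remaining work is bookkeeping: the remainder $R = (n - M_{\ref{lemma:P2plusP3}})\cdot\bullet + S$ must be carried through every amalgamation by keeping it inside the common base $A$, which fixes the constant $M_{\ref{lemma:P2plusP3}}$ to a small explicit value determined by the number of auxiliary amalgamations used.
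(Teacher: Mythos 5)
Your approach diverges from the paper's from the first step, and a key case of your analysis is broken, so I'll identify the gap concretely.

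You glue the two hypothesis graphs along the common $\a$-edge $\{u,v\}$ together with $S$ and most of the discrete vertices, adding one vertex $w$ on one side (making $\underline{\a\x}$) and two vertices $p,q$ on the other (making the disjoint $\underline{\y}$-edge). This leaves two free edges $\{w,p\}$, $\{w,q\}$. Two remarks before the main problem. First, amalgamation guarantees \emph{some} solution, not the one you prefer; your sentence about forcing every realisable color profile to occur is the wrong direction, and the argument must instead succeed in \emph{every} case of the free colors, not in the case you choose. Second, the paper restricts itself to \emph{singleton} amalgamation instances, so the two-new-vertex amalgamation needs to be rephrased as two singleton steps (which can be done, but the analysis is then in terms of one free edge at a time, not a joint profile).

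The decisive gap is in your middle case, where exactly one of $\{w,p\}$, $\{w,q\}$ is non-$\c$, say $\{w,p\}$ has color $c_1 \neq \c$ and $\{w,q\}$ is $\c$. You claim one can extract an isolated edge disjoint from a two-edge path among $\{u,v,w\}$ or $\{v,w,p\}$. Check the possibilities: the only isolated edges present in the amalgam are $\{u,v\}$, $\{p,q\}$, and $\{w,p\}$. Taking the two-edge path $u\text{--}v\text{--}w$ and the edge $\{p,q\}$ fails because $\{w,p\}$ is non-$\c$; pairing $v\text{--}w\text{--}p$ with $\{p,q\}$ or $\{w,p\}$ fails for vertex overlap; pairing $w\text{--}p\text{--}q$ with $\{u,v\}$ fails because $\{v,w\}$ is non-$\c$. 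So no induced subgraph of the shape $\underline{\a_1\x_1}+\underline{\a_1}+\circled{R}$ sits inside this amalgam, and case (1) is not reached. The paper avoids this by a different first amalgamation: its Instance~\ref{lemma:P2plusP3}.1 builds a clean $5$-vertex \emph{path} $\underline{\y\w\x\a}$ (or directly terminates if the new edge is absent), not a $Y$-shaped structure, and only then does the case analysis, observing that the only problematic outcome is $P=\underline{\a\a\a\a}$. Your last family of subcases (both free edges non-$\c$) is acknowledged to need more work, but the estimate of ``one or two further amalgamations'' is a substantial undercount: the paper's treatment of the analogous hard case $P=\underline{\a\a\a\a}$ uses five more amalgamation instances (\ref{lemma:P2plusP3}.3 through \ref{lemma:P2plusP3}.7), passing through four-cycles $\circ\a\a\a$ and auxiliary graphs $G_1,\heartsuit$, and it is precisely there that strong amalgamation is actually used. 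As written, your proposal does not close either the middle case or the hard case.
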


\medskip

\noindent
Proofs of the above lemmas will help us to show the following corollary:

\begin{corollary}\label{corollary:canAddP2xA}
If a strongly homogeneous, 2-edge-colored graph $\G$ satisfies Corollary~\ref{corollary:addingVertices}, i.e.,
for every  $k \in \mathbb{N}$ there exist colors $\a_0, \x_0 \in \ColorsTwo$ such that $\G$ embeds the graph $\underline{\a_0\x_0} + k \cdot
\bullet$, 
then also for every $k \in \mathbb{N}$ there exist
(potentially new) colors $\a', \x' \in \ColorsTwo$ such that $\G$ embeds the following graph:
$$
\underline{\a'\x'} + k \cdot \underline{\a'}
$$
\end{corollary}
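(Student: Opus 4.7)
The plan is to proceed by induction on $k$, with strengthened hypothesis: for every $k \in \nat$ there exist colors $\a_k, \x_k \in \ColorsTwo$ such that, for every $N \in \nat$, $\G$ embeds $\underline{\a_k\x_k} + k\cdot\underline{\a_k} + N\cdot\bullet$. The base case $k=0$ is exactly the assumption of the corollary, with the extra $N\cdot\bullet$ absorbed into the already-present isolated vertices. The engine of the inductive step is Lemma~\ref{lemma:P2plusP3}, which adds one further isolated edge; Lemmas~\ref{lemma:addingP2xAA} and~\ref{lemma:addingP2xAB} are used to manufacture its second required ingredient.

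For the inductive step at $k$, write $\a := \a_k$, $\x := \x_k$, $S := k\cdot\underline{\a_k}$. The first ingredient $\underline{\a\x}+n\cdot\bullet+S$ of Lemma~\ref{lemma:P2plusP3} is supplied directly by the inductive hypothesis for arbitrarily large $n$. To build the second ingredient $\underline{\a}+\underline{\y}+n\cdot\bullet+S$ with $\y \in \{\a,\x\}$, I take a fresh copy of the inductive hypothesis and argue by cases on $\x$: when $\x = \a$, Lemma~\ref{lemma:addingP2xAA} either produces $\underline{\a}+\underline{\a}+\cdots+S$ directly (giving $\y = \a$), or morphs the path into $\underline{\a\b}+\cdots+S$, whereupon Lemma~\ref{lemma:addingP2xAB} splits it into two isolated edges $\underline{\y_1}+\underline{\y_2}$; when $\x = \b$, Lemma~\ref{lemma:addingP2xAB} applies immediately. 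A short case check on the resulting colours, combined with the option of treating the morphed $\underline{\a\b}$ as the first ingredient in place of $\underline{\a\x}$ (available whenever Lemma~\ref{lemma:addingP2xAA} takes its second branch), always yields an arrangement that fits Lemma~\ref{lemma:P2plusP3}'s required input shape.

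Applying Lemma~\ref{lemma:P2plusP3}, case~1 yields $\underline{\a_1\x_1}+\underline{\a_1}+R$ with $R \supseteq k\cdot\underline{\a_k}$, and case~2 delivers both $\underline{\a\b}+R$ and one of $\underline{\a\a}+\underline{\b}+R$ or $\underline{\b\b}+\underline{\a}+R$. Case~2 is resolved by a single additional invocation of Lemma~\ref{lemma:P2plusP3}: from the second graph one extracts $\underline{\a}+\underline{\b}+R$ as an induced subgraph by deleting an endpoint of the $\underline{\a\a}$ (or $\underline{\b\b}$) path, and feeds it together with $\underline{\a\b}+R$ back into Lemma~\ref{lemma:P2plusP3}; the colour shape of this fresh instance forces the outcome into case~1. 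In case~1, setting $\a_{k+1} := \a_1$, $\x_{k+1} := \x_1$ completes the step provided the $k$ edges stored in $R$ already carry colour $\a_1$. The main obstacle, which the authors dub ``simple'' and omit, is exactly this colour bookkeeping: when $\a_1 \neq \a_k$, one must re-enter the ingredient-production step with $\a_1$ pre-selected as the target colour so that the accumulated $k\cdot\underline{\a_k}$ is already of colour $\a_1$ — exploiting the symmetry between the two colours of $\ColorsTwo$ afforded by strong homogeneity of $\G$ under the case~A) assumption.
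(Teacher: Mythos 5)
Your proposal has two genuine gaps, both concerning how it handles the ``bad'' outcomes of Lemma~\ref{lemma:P2plusP3}, and both solved by the paper with ideas you do not reproduce.

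First, the claim that feeding $\underline{\a\b}+R$ together with $\underline{\a}+\underline{\b}+R$ back into Lemma~\ref{lemma:P2plusP3} ``forces the outcome into case~1'' is false. With $\x=\b$ and $\y=\b$ one is always in Case~$1^\circ$ of that lemma's internal case split, but within Case~$1^\circ$ the second amalgamation (Instance~\ref{lemma:P2plusP3}.2) can still fail to produce an edge, and since here $\beta=\b$ is forced by your choice of inputs, that outcome again lands you in case~2 of the lemma. Nothing in the colour data prevents this, so your fallback loops without progress. The paper's resolution is subtler and you did not find it: one re-runs the whole ingredient-production with the newly obtained $\underline{\b}$ \emph{added to the resource} $\kbb$, and observes that if case~2 fires a second time, the pair $\underline{\a\b}$ together with the stashed $\underline{\b}$ already \emph{is} the sought graph $\underline{\a'\x'}+\underline{\a'}+\kbb$ for $\a'=\b$, $\x'=\a$ (one simply reads the 2-path from its other end). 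In other words, case~2 is not avoided; it is made to terminate by a bookkeeping trick.

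Second, your overall scheme tries to keep the edge colour $\a_k$ fixed across all $k$ accumulated edges and appeals to ``the symmetry between the two colours of $\ColorsTwo$ afforded by strong homogeneity'' to reconcile $\a_1\neq\a_k$. No such symmetry exists: automorphisms of $\G$ preserve each of $C_1,C_2,C_3$ as named relations, and strong homogeneity says nothing about the colours being interchangeable. The paper instead \emph{allows} the colour of the front path and of the newly produced edge to drift from step to step, runs the iteration $3k$ times (not $k$), and finishes with a pigeonhole argument: among the $2k$ edges produced in the first $2k$ rounds some colour $\w$ occurs at least $k$ times; if the current front path is already $\w$-fronted we stop, and otherwise the next $k$ rounds either hit $\w$ or produce $k{+}1$ edges of colour $\overline{\w}$ together with a $\overline{\w}$-fronted path. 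This pigeonhole-over-$3k$ device is the missing idea that makes colour drift harmless; your proof would need it (or an equivalent) to close.
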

\noindent
(We first show how we derive the above corollary from the lemmas,
and only later will we focus on proving the three lemmas.)
\newcommand{\kbb}{\circledb{\scriptsize\ensuremath{\clubsuit}}}

\begin{proof}  

The procedure of 'producing' the desired graph $\underline{\a'\x'} + k \cdot \underline{\a'}$ will be inductive.
Using it, we will be successively getting the following graphs:
\begin{align*}
    &\underline{\a_0\x_0}\hspace{0.5cm}+\ \ (3k\hspace{0.6cm}) \cdot M \cdot \bullet\\
    &\underline{\a_1\x_1}\hspace{0.5cm}+\ \ (3k-1) \cdot M \cdot \bullet\ \ +\ \ \underline{\a_1}\\
    &\underline{\a_2\x_2}\hspace{0.5cm}+\ \ (3k-2) \cdot M \cdot \bullet\ \ +\ \ \underline{\a_1}+\underline{\a_2}\\
    &\underline{\a_3\x_3}\hspace{0.5cm}+\ \ (3k-3) \cdot M \cdot \bullet\ \ +\ \ \underline{\a_1}+\underline{\a_2}+\underline{\a_3}\\
    &\dots\\
    &\underline{\a_{2k}\x_{2k}}\hspace{0.175cm}+\ \ (\rlap{k}\hspace{1.06cm}) \cdot M \cdot \bullet\ \ +\ \ \underline{\a_1}+\underline{\a_2}+\underline{\a_3}+\dots+\underline{\a_{2k}}\\
    &\dots\\
    &\underline{\a_{3k}\x_{3k}}\hspace{0.175cm}+\ \ (\rlap{0}\hspace{1.06cm}) \cdot M \cdot \bullet\ \ +\ \ \underline{\a_1}+\underline{\a_2}+\underline{\a_3}+\dots+\underline{\a_{2k}}+\dots+\underline{\a_{3k}}
\end{align*}
After repeating the inductive step $2k$ times, we will get the graph that -- apart from
the path $\underline{\a_{2k}\x_{2k}}$ -- will contain $2k$ isolated edges colored by $\a_1, \a_2, \dots, \a_{2k} \in \ColorsTwo$ respectively.
It is clear there exists a group of at least $k$ edges painted with a common color $\w$.
If $\a_{2k} = \w$ we get the thesis of the corollary --- we just found a graph:
$$
\underline{\a_{2k}\x_{2k}} + k \cdot \underline{\a_{2k}}
$$
Similarly, if in one of the next  $k$ steps we will get $\a_{2k+i} = \w$ ($i \in \{1,2,\dots, k\}$),
the requirements of the corollary are met. Otherwise, we have $\a_{2k} = \a_{2k+1} = \a_{2k+2} = \dots = \a_{3k} = \overline{\w}$ (where $\overline{\w} \in \ColorsTwo$, $\overline{\w} \neq \w$), so
we have just obtained $k$ isolated edges in a color $\overline{\w}$ together with a path $\overline{\w}\x_{3k}$.
It would complete the proof of the corollary.

It remains to show how to use the three lemmas to build the inductive step.

\paragraph{Inductive step}

At this point it is easy to guess, what was the purpose of the circled fragments
of the form $\circled{n\cdot\bullet + S}$ appearing in the lemmas:

$$
\underline{\a_{2k}\x_{2k}}\hspace{0.175cm}+\ \ \underbrace{(\rlap{k}\hspace{1.06cm}) \cdot M \cdot \bullet}_{\circled{k\cdot M \cdot \bullet}}\ \ +\ \ \underbrace{\underline{\a_1}+\underline{\a_2}+\underline{\a_3}+\dots+\underline{\a_{2k}}}_{\circled{S_{2k}}}\\
$$
The first part $n \cdot \bullet$ corresponds to a 'resource' of vertices that is used by the
lemmas to 'produce' the new edges $\underline{\a_i}$ that appear in the induction scheme we
presented earlier. In turn $S$ is a common notation for the edges that are already produced:
we begin with empty $S_0$ and after each inductive step we add one edge to it. After $i$ steps we get
$S_i = \underline{\a_1}+\underline{\a_2}+\dots+\underline{\a_i}$.
For the sake of simplicity, we \textbf{will omit} both kinds of graphs in the further considerations,
only indicating their presence with symbol $\kbb$.

\medskip

\noindent
Let us assume we have already shown that $\G$ embeds:
$$
\underline{\a\x} + \kbb
$$
Our current goal is to show, that $\G$ also embeds:
$$
\underline{\a'\x'} + \circled{\underline{\a'} + \kbb}
$$
If $\x = \a$ (so we have $\underline{\a\a} + \kbb$), we use lemma~\ref{lemma:addingP2xAA}.,
trying to show that $\G$ embeds $\underline{\a} + \underline{\a} + \kbb$. If we fail because the
second option from lemma takes place, we get the graph $\underline{\a\b} + \kbb$
(for $\b \neq \a$). It allows us to move on to the case $\x = \b$.
If $\x = \b$ (and then we have $\underline{\a\b} + \kbb$), we can now use
lemma~\ref{lemma:addingP2xAB}. In this case we will certainly get the graph
$\underline{\v} + \underline{\w} + \kbb$ (where $\v, \w \in \ColorsTwo$).

Summing the two above cases up, we may end getting one of the three graphs:
\begin{align*}
    &\underline{\a} + \underline{\a} + \kbb, &
    &\underline{\a} + \underline{\b} + \kbb, &
    &\underline{\b} + \underline{\b} + \kbb,
\end{align*}
wherein the latter two are obtained only when $\x = \b$. In other words, we now have:
\begin{enumerate}
    \item $\underline{\a\x} + \kbb$, \hfill (from assumptions)
    \item $\underline{\a} + \underline{\y} + \kbb$ for $\y \in \{\a, \x\}$. \hfill (just obtained)
\end{enumerate}
It turns out that those are exactly the assumptions of Lemma~\ref{lemma:P2plusP3}. Let us use it then.

The lemma lists two possible cases. When the first one holds, we directly get what we wanted --- the graph:
$$
\underline{\a'\x'} + \circled{\underline{\a'} + \kbb}
$$
for some $\a', \x' \in \ColorsTwo$.

The second case makes the situation a bit more complicated. Although just as we wanted, we get
two separate paths --- w.l.o.g.
$\underline{\a\a} + \circled{\underline{\b} + \kbb}$ --- but they do not share a color of some edge,
this being needed to complete the proof. We have to repeat all the steps we made so far,
adding the obtained edge $\underline{\b}$ to $\kbb$:
$$
\kbb' = \circled{\underline{\b} + \kbb}
$$
If we again end up in this 'unfortunate' second case of lemma~\ref{lemma:P2plusP3},
this time we will finally succeed closing the proof. Indeed, in that situation we will get the graph:
$$
\underline{\a\b} + \kbb' = \underline{\a\b} + \circled{\underline{\b} + \kbb}
$$
which corresponds to the graph $\underline{\a'\x'} + \circled{\underline{\b'} + \kbb}$
appearing in the induction scheme (for $\a' = \b$ and $\x' = \a$).
\end{proof}


\begin{proof}[Proof of Lemma~\ref{lemma:addingP2xAA}] 
A simple proof of this lemma consists of
two amalgamations only. To construct the first one, we use two subgraphs of graph $\underline{\a\a} + \kbb$
that is present in the assumptions:

\Instance{\ref{lemma:addingP2xAA}.1} 

\newcommand{\addingPPxAAxQUESTION}{
\begin{tikzpicture}[scale=\scaleOfPictures]
    \begin{pgfonlayer}{foreground}
        \node[ Node] at (0, 0) (n1) {};
        \node[ANode] at (1, 0) (n2) {};
        \node[ANode] at (2, 0) (n3) {};

        \node[ Node] at (2.866, 0.5) (up) {};
        \node[ Node] at (2.866, -0.5) (down) {};
        \node[Basic] at (1, -0.8) {+ $\kbb$};
    \end{pgfonlayer}
    \begin{pgfonlayer}{background}
        \lEdge[cA]{n1}{\a}{n2}
        \lEdge[cA]{n3}{\a}{up}
        \lEdge[cA,swap]{n3}{\a}{down}
        \lEdge[AEdge]{n2}{}{n3}
    \end{pgfonlayer}
\end{tikzpicture}
}

\newcommand{\addingPPxAAxNO}{
\begin{tikzpicture}[scale=\scaleOfPictures]
    \begin{pgfonlayer}{foreground}
        \node[ Node] at (0, 0) (n1) {};
        \node[ANode] at (1, 0) (n2) {};
        \node[ANode] at (2, 0) (n3) {};

        \node[ Node] at (2.866, 0.5) (up) {};
        \node[ Node] at (2.866, -0.5) (down) {};
        \node[Basic] at (1, -0.8) {+ $\kbb$};
    \end{pgfonlayer}
    \begin{pgfonlayer}{background}
        \lEdge[cA]{n1}{\a}{n2}
        \lEdge[cA]{n3}{\a}{up}
        \lEdge[cA,swap]{n3}{\a}{down}
    \end{pgfonlayer}
    \begin{pgfonlayer}{prebackground}
        \draw[Selection] (n1.center) -- (n2.center) (n3.center) -- (up.center);
    \end{pgfonlayer}
\end{tikzpicture}
}

\newcommand{\addingPPxAAxA}{
\begin{tikzpicture}[scale=\scaleOfPictures]
    \begin{pgfonlayer}{foreground}
        \node[ Node] at (0, 0) (n1) {};
        \node[ANode] at (1, 0) (n2) {};
        \node[ANode] at (2, 0) (n3) {};

        \node[ Node] at (2.866, 0.5) (up) {};
        \node[ Node] at (2.866, -0.5) (down) {};
        \node[Basic] at (1, -0.8) {+ $\kbb$};
    \end{pgfonlayer}
    \begin{pgfonlayer}{background}
        \lEdge[cA]{n1}{\a}{n2}
        \lEdge[cA]{n3}{\a}{up}
        \lEdge[cA,swap]{n3}{\a}{down}
        \lEdge[cA]{n2}{\a}{n3}
    \end{pgfonlayer}
\end{tikzpicture}
}

\newcommand{\addingPPxAAxB}{
\begin{tikzpicture}[scale=\scaleOfPictures]
    \begin{pgfonlayer}{foreground}
        \node[ Node] at (0, 0) (n1) {};
        \node[ANode] at (1, 0) (n2) {};
        \node[ANode] at (2, 0) (n3) {};

        \node[ Node] at (2.866, 0.5) (up) {};
        \node[ Node] at (2.866, -0.5) (down) {};
        \node[Basic] at (1, -0.8) {+ $\kbb$};
    \end{pgfonlayer}
    \begin{pgfonlayer}{background}
        \lEdge[cA]{n1}{\a}{n2}
        \lEdge[cA]{n3}{\a}{up}
        \lEdge[cA,swap]{n3}{\a}{down}
        \lEdge[cB]{n2}{\b}{n3}
    \end{pgfonlayer}
    \begin{pgfonlayer}{prebackground}
        \draw[Selection] (n1.center) -- (n2.center) -- (n3.center);
    \end{pgfonlayer}
\end{tikzpicture}
}

\begin{center}
    \begin{tikzpicture}[scale=\scaleOfPictures]
        \node (Q) [AInstance] at (0, 0) {\addingPPxAAxQUESTION};
        \node (B) [AResult, left=1cm of Q] {\addingPPxAAxNO};
        \node (D) [AResult, right=1cm of Q] {\addingPPxAAxB};
        \node (E) [AResult, below=1cm of Q] {\addingPPxAAxA};
        \node[anchor=north east, inner xsep=0] at (B.south east) (ok1) {\CaseOk case 1.};
        \node[anchor=north east, inner xsep=0] at (D.south east) (ok2) {\CaseOk case 2.};
        \draw[CaseArrow]
        (Q)
        -- node[midway, auto,     , inner sep=0cm] {\CaseEdgeDoesNotExist}
        (B);
        \draw[CaseArrow]
        (Q)
        -- node[midway, auto, swap, inner sep=0cm] {\CaseBEdge}
        (D);
        \draw[CaseArrow]
        (Q)
        -- node[midway, auto, swap, inner sep=0cm] {\CaseAEdge}
        (E);
    \end{tikzpicture}
\end{center}
If an $\a$-edge does not appear, either case 1. or 2. of the lemma holds,
so we are done. If it does, we use two copies of the resulting graph to form the next amalgamation:

\Instance{\ref{lemma:addingP2xAA}.2}

\newcommand{\addingPPxAAxIIxQUESTION}{
\begin{tikzpicture}[scale=\scaleOfPictures]
    \begin{pgfonlayer}{foreground}
        \node[ Node] at (2, 0.5) (n1) {};
        \node[ANode] at (1, 0.5) (n2) {};
        \node[ANode] at (1, -0.5) (n3) {};
        \node[ Node] at (2, -0.5) (n4) {};

        \node[ Node] at (2.866, 0) (r1) {};
        \node[ Node] at (3.866, 0) (r2) {};
        \node[Basic] at (3.5, 0.7) {+ $\kbb$};
    \end{pgfonlayer}
    \begin{pgfonlayer}{background}
        \lEdge[cA,swap]{r1}{\a}{n1}
        \lEdge[cA,swap]{n1}{\a}{n2}
        \lEdge[cA,swap]{n3}{\a}{n4}
        \lEdge[cA,swap]{n4}{\a}{r1}
        \lEdge[cA,swap]{r1}{\a}{r2}
        \lEdge[AEdge]{n2}{}{n3}
    \end{pgfonlayer}
\end{tikzpicture}
}
\newcommand{\addingPPxAAxIIxA}{
\begin{tikzpicture}[scale=\scaleOfPictures]
    \begin{pgfonlayer}{foreground}
        \node[ Node] at (2, 0.5) (n1) {};
        \node[ANode] at (1, 0.5) (n2) {};
        \node[ANode] at (1, -0.5) (n3) {};
        \node[ Node] at (2, -0.5) (n4) {};

        \node[ Node] at (2.866, 0) (r1) {};
        \node[ Node] at (3.866, 0) (r2) {};
        \node[Basic] at (3.5, 0.7) {+ $\kbb$};
    \end{pgfonlayer}
    \begin{pgfonlayer}{background}
        \lEdge[cA,swap]{r1}{\a}{n1}
        \lEdge[cA,swap]{n1}{\a}{n2}
        \lEdge[cA,swap]{n3}{\a}{n4}
        \lEdge[cA,swap]{n4}{\a}{r1}
        \lEdge[cA,swap]{r1}{\a}{r2}
        \lEdge[cA,swap]{n2}{\a}{n3}
    \end{pgfonlayer}
    \begin{pgfonlayer}{prebackground}
        \draw[Selection] (n2.center) -- (n3.center) (r1.center) -- (r2.center);
    \end{pgfonlayer}
\end{tikzpicture}
}
\newcommand{\addingPPxAAxIIxB}{
\begin{tikzpicture}[scale=\scaleOfPictures]
    \begin{pgfonlayer}{foreground}
        \node[ Node] at (2, 0.5) (n1) {};
        \node[ANode] at (1, 0.5) (n2) {};
        \node[ANode] at (1, -0.5) (n3) {};
        \node[ Node] at (2, -0.5) (n4) {};

        \node[ Node] at (2.866, 0) (r1) {};
        \node[ Node] at (3.866, 0) (r2) {};
        \node[Basic] at (3.5, 0.7) {+ $\kbb$};
    \end{pgfonlayer}
    \begin{pgfonlayer}{background}
        \lEdge[cA,swap]{r1}{\a}{n1}
        \lEdge[cA,swap]{n1}{\a}{n2}
        \lEdge[cA,swap]{n3}{\a}{n4}
        \lEdge[cA,swap]{n4}{\a}{r1}
        \lEdge[cA,swap]{r1}{\a}{r2}
        \lEdge[cB,swap]{n2}{\b}{n3}
    \end{pgfonlayer}
    \begin{pgfonlayer}{prebackground}
        \draw[Selection] (n1.center) -- (n2.center) -- (n3.center);
    \end{pgfonlayer}
\end{tikzpicture}
}
\newcommand{\addingPPxAAxIIxNO}{
\begin{tikzpicture}[scale=\scaleOfPictures]
    \begin{pgfonlayer}{foreground}
        \node[ Node] at (2, 0.5) (n1) {};
        \node[ANode] at (1, 0.5) (n2) {};
        \node[ANode] at (1, -0.5) (n3) {};
        \node[ Node] at (2, -0.5) (n4) {};

        \node[ Node] at (2.866, 0) (r1) {};
        \node[ Node] at (3.866, 0) (r2) {};
        \node[Basic] at (3.5, 0.7) {+ $\kbb$};
    \end{pgfonlayer}
    \begin{pgfonlayer}{background}
        \lEdge[cA,swap]{r1}{\a}{n1}
        \lEdge[cA,swap]{n1}{\a}{n2}
        \lEdge[cA,swap]{n3}{\a}{n4}
        \lEdge[cA,swap]{n4}{\a}{r1}
        \lEdge[cA,swap]{r1}{\a}{r2}
    \end{pgfonlayer}
    \begin{pgfonlayer}{prebackground}
        \draw[Selection] (n1.center) -- (n2.center) (n3.center) -- (n4.center);
    \end{pgfonlayer}
\end{tikzpicture}
}

\begin{center}
    \begin{tikzpicture}[scale=\scaleOfPictures]
        \node (Q) [AInstance] at (0, 0) {\addingPPxAAxIIxQUESTION};
        \node (B) [AResult, left=1cm of Q] {\addingPPxAAxIIxNO};
        \node (D) [AResult, right=1cm of Q] {\addingPPxAAxIIxB};
        \node (E) [AResult, below=1cm of Q] {\addingPPxAAxIIxA};
        \node[anchor=north east, inner xsep=0] at (B.south east) (ok1) {\CaseOk case 1.};
        \node[anchor=north east, inner xsep=0] at (E.south east) (ok3) {\CaseOk case 1.};
        \node[anchor=north east, inner xsep=0] at (D.south east) (ok2) {\CaseOk case 2.};
        \draw[CaseArrow]
        (Q)
        -- node[midway, auto,     , inner sep=0cm] {\CaseEdgeDoesNotExist}
        (B);
        \draw[CaseArrow]
        (Q)
        -- node[midway, auto, swap, inner sep=0cm] {\CaseBEdge}
        (D);
        \draw[CaseArrow]
        (Q)
        -- node[midway, auto, swap, inner sep=0cm] {\CaseAEdge}
        (E);
    \end{tikzpicture}
\end{center}
Here, no matter what the result is, we get one of the cases stated in the lemma, what ends the proof.
(We may notice here, that for this lemma the constant $M_{\ref{lemma:addingP2xAA}}$ is equal to 2,
both isolated vertices were consumed in the first instance of amalgamation.)
\end{proof}

Proof of the next lemma is equally simple --- it is built from three amalgamations,
wherein two of them are very similar, so we omit one of them.


\begin{proof}[Proof of Lemma~\ref{lemma:addingP2xAB}] 

Now in the assumptions we have the graph
$\underline{\a\b} + \kbb \trianglelefteq \G$. We build the first instance as in the previous proof:

\Instance{\ref{lemma:addingP2xAB}.1} 
\newcommand{\addingPPxABxIxQUESTION}{
\begin{tikzpicture}[scale=\scaleOfPictures]
    \begin{pgfonlayer}{foreground}
        \node[ Node] at (0, 0) (n1) {};
        \node[ANode] at (1, 0) (n2) {};
        \node[ANode] at (2, 0) (n3) {};

        \node[ Node] at (2.866, 0.5) (up) {};
        \node[ Node] at (2.866, -0.5) (down) {};
        \node[Basic] at (3.4, 0) {+ $\kbb$};
    \end{pgfonlayer}
    \begin{pgfonlayer}{background}
        \lEdge[cA]{n1}{\a}{n2}
        \lEdge[cB]{n3}{\b}{up}
        \lEdge[cA,swap]{n3}{\a}{down}
        \lEdge[AEdge]{n2}{}{n3}
    \end{pgfonlayer}
\end{tikzpicture}
}
\newcommand{\addingPPxABxIxNO}{
\begin{tikzpicture}[scale=\scaleOfPictures]
    \begin{pgfonlayer}{foreground}
        \node[ Node] at (0, 0) (n1) {};
        \node[ANode] at (1, 0) (n2) {};
        \node[ANode] at (2, 0) (n3) {};

        \node[ Node] at (2.866, 0.5) (up) {};
        \node[ Node] at (2.866, -0.5) (down) {};
        \node[Basic] at (3.4, 0) {+ $\kbb$};
    \end{pgfonlayer}
    \begin{pgfonlayer}{background}
        \lEdge[cA]{n1}{\a}{n2}
        \lEdge[cB]{n3}{\b}{up}
        \lEdge[cA,swap]{n3}{\a}{down}
    \end{pgfonlayer}
    \begin{pgfonlayer}{prebackground}
        \draw[Selection] (n1.center) -- (n2.center) (n3.center) -- (down.center);
    \end{pgfonlayer}
\end{tikzpicture}
}
\newcommand{\addingPPxABxIxA}{
\begin{tikzpicture}[scale=\scaleOfPictures]
    \begin{pgfonlayer}{foreground}
        \node[ Node] at (0, 0) (n1) {};
        \node[ANode] at (1, 0) (n2) {};
        \node[ANode] at (2, 0) (n3) {};

        \node[ Node] at (2.866, 0.5) (up) {};
        \node[ Node] at (2.866, -0.5) (down) {};
        \node[Basic] at (3.4, 0) {+ $\kbb$};
    \end{pgfonlayer}
    \begin{pgfonlayer}{background}
        \lEdge[cA]{n1}{\a}{n2}
        \lEdge[cB]{n3}{\b}{up}
        \lEdge[cA,swap]{n3}{\a}{down}
        \lEdge[cA]{n2}{\a}{n3}
    \end{pgfonlayer}
\end{tikzpicture}
}
\newcommand{\addingPPxABxIxB}{
\begin{tikzpicture}[scale=\scaleOfPictures]
    \begin{pgfonlayer}{foreground}
        \node[ Node] at (0, 0) (n1) {};
        \node[ANode] at (1, 0) (n2) {};
        \node[ANode] at (2, 0) (n3) {};

        \node[ Node] at (2.866, 0.5) (up) {};
        \node[ Node] at (2.866, -0.5) (down) {};
        \node[Basic] at (3.4, 0) {+ $\kbb$};
    \end{pgfonlayer}
    \begin{pgfonlayer}{background}
        \lEdge[cA]{n1}{\a}{n2}
        \lEdge[cB]{n3}{\b}{up}
        \lEdge[cA,swap]{n3}{\a}{down}
        \lEdge[cB]{n2}{\b}{n3}
    \end{pgfonlayer}
\end{tikzpicture}
}

\begin{center}
    \begin{tikzpicture}[scale=\scaleOfPictures]
        \node (Q) [AInstance] at (0, 0) {\addingPPxABxIxQUESTION};
        \node (B) [AResult, left=1cm of Q] {\addingPPxABxIxNO};
        \node (D) [AResult, below right=1.5cm and -3.2cm of Q] {\addingPPxABxIxB};
        \node (E) [AResult, below left=2cm and -2.2cm of Q, fill=white, fill opacity=.8, text opacity=1] {\addingPPxABxIxA};
        \node[anchor=north east, inner xsep=0] at (B.south east) (ok1) {\CaseOk};
        \draw[CaseArrow]
        (Q)
        -- node[midway, auto,     , inner sep=0cm] {\CaseEdgeDoesNotExist}
        (B);
        \draw[CaseArrow]
        (Q)
        -- node[midway, auto, inner sep=0cm] {\CaseBEdge}
        (D);
        \draw[CaseArrow]
        (Q)
        -- node[midway, auto, swap, inner sep=0cm] {\CaseAEdge}
        (E);
    \end{tikzpicture}
\end{center}
In the case of nonexistent edge we get what we were looking for ---
two disjoint edges ($+\,\kbb$). However, if the edge exists, we have to use
two further instances --- for $\a$ and for $\b$. Again, they are similar, so we omit the second one.

\Instance{\ref{lemma:addingP2xAB}.2} (The third one is analogous.)

\newcommand{\addingPPxABxIIxQUESTION}{
\begin{tikzpicture}[scale=\scaleOfPictures]
    \begin{pgfonlayer}{foreground}
        \node[ Node] at (2, 0.5) (n1) {};
        \node[ANode] at (1, 0.5) (n2) {};
        \node[ANode] at (1, -0.5) (n3) {};
        \node[ Node] at (2, -0.5) (n4) {};

        \node[ Node] at (2.866, 0) (r1) {};
        \node[ Node] at (3.866, 0) (r2) {};
        \node[Basic] at (3.5, 0.5) {+ $\kbb$};
    \end{pgfonlayer}
    \begin{pgfonlayer}{background}
        \lEdge[cA,swap]{r1}{\a}{n1}
        \lEdge[cA,swap]{n1}{\a}{n2}
        \lEdge[cA,swap]{n3}{\a}{n4}
        \lEdge[cA,swap]{n4}{\a}{r1}
        \lEdge[cB,swap]{r1}{\b}{r2}
        \lEdge[AEdge]{n2}{}{n3}
    \end{pgfonlayer}
\end{tikzpicture}
}
\newcommand{\addingPPxABxIIxA}{
\begin{tikzpicture}[scale=\scaleOfPictures]
    \begin{pgfonlayer}{foreground}
        \node[ Node] at (2, 0.5) (n1) {};
        \node[ANode] at (1, 0.5) (n2) {};
        \node[ANode] at (1, -0.5) (n3) {};
        \node[ Node] at (2, -0.5) (n4) {};

        \node[ Node] at (2.866, 0) (r1) {};
        \node[ Node] at (3.866, 0) (r2) {};
        \node[Basic] at (3.5, 0.5) {+ $\kbb$};
    \end{pgfonlayer}
    \begin{pgfonlayer}{background}
        \lEdge[cA,swap]{r1}{\a}{n1}
        \lEdge[cA,swap]{n1}{\a}{n2}
        \lEdge[cA,swap]{n3}{\a}{n4}
        \lEdge[cA,swap]{n4}{\a}{r1}
        \lEdge[cB,swap]{r1}{\b}{r2}
        \lEdge[cA]{n2}{\a}{n3}
    \end{pgfonlayer}
    \begin{pgfonlayer}{prebackground}
        \draw[Selection] (n2.center) -- (n3.center) (r1.center) -- (r2.center);
    \end{pgfonlayer}
\end{tikzpicture}
}
\newcommand{\addingPPxABxIIxB}{
\begin{tikzpicture}[scale=\scaleOfPictures]
    \begin{pgfonlayer}{foreground}
        \node[ Node] at (2, 0.5) (n1) {};
        \node[ANode] at (1, 0.5) (n2) {};
        \node[ANode] at (1, -0.5) (n3) {};
        \node[ Node] at (2, -0.5) (n4) {};

        \node[ Node] at (2.866, 0) (r1) {};
        \node[ Node] at (3.866, 0) (r2) {};
        \node[Basic] at (3.5, 0.5) {+ $\kbb$};
    \end{pgfonlayer}
    \begin{pgfonlayer}{background}
        \lEdge[cA,swap]{r1}{\a}{n1}
        \lEdge[cA,swap]{n1}{\a}{n2}
        \lEdge[cA,swap]{n3}{\a}{n4}
        \lEdge[cA,swap]{n4}{\a}{r1}
        \lEdge[cB,swap]{r1}{\b}{r2}
        \lEdge[cB]{n2}{\b}{n3}
    \end{pgfonlayer}
    \begin{pgfonlayer}{prebackground}
        \draw[Selection] (n2.center) -- (n3.center) (r1.center) -- (r2.center);
    \end{pgfonlayer}
\end{tikzpicture}
}
\newcommand{\addingPPxABxIIxNO}{
\begin{tikzpicture}[scale=\scaleOfPictures]
    \begin{pgfonlayer}{foreground}
        \node[ Node] at (2, 0.5) (n1) {};
        \node[ANode] at (1, 0.5) (n2) {};
        \node[ANode] at (1, -0.5) (n3) {};
        \node[ Node] at (2, -0.5) (n4) {};

        \node[ Node] at (2.866, 0) (r1) {};
        \node[ Node] at (3.866, 0) (r2) {};
        \node[Basic] at (3.5, 0.5) {+ $\kbb$};
    \end{pgfonlayer}
    \begin{pgfonlayer}{background}
        \lEdge[cA,swap]{r1}{\a}{n1}
        \lEdge[cA,swap]{n1}{\a}{n2}
        \lEdge[cA,swap]{n3}{\a}{n4}
        \lEdge[cA,swap]{n4}{\a}{r1}
        \lEdge[cB,swap]{r1}{\b}{r2}
    \end{pgfonlayer}
    \begin{pgfonlayer}{prebackground}
        \draw[Selection] (n1.center) -- (n2.center) (n3.center) -- (n4.center);
    \end{pgfonlayer}
\end{tikzpicture}
}

\begin{center}
    \begin{tikzpicture}[scale=\scaleOfPictures]
        \node (Q) [AInstance] at (0, 0) {\addingPPxABxIIxQUESTION};
        \node (B) [AResult, left=1cm of Q] {\addingPPxABxIIxNO};
        \node (D) [AResult, right=1cm of Q] {\addingPPxABxIIxB};
        \node (E) [AResult, below=1cm of Q] {\addingPPxABxIIxA};
        \node[anchor=north east, inner xsep=0] at (B.south east) (ok1) {$\underline{\a} + \underline{\a} + \kbb$ \CaseOk};
        \node[anchor=north east, inner xsep=0] at (E.south east) (ok3) {$\underline{\a} + \underline{\b} + \kbb$ \CaseOk};
        \node[anchor=north east, inner xsep=0] at (D.south east) (ok2) {$\underline{\b} + \underline{\b} + \kbb$ \CaseOk};
        \draw[CaseArrow]
        (Q)
        -- node[midway, auto,     , inner sep=0cm] {\CaseEdgeDoesNotExist}
        (B);
        \draw[CaseArrow]
        (Q)
        -- node[midway, auto, swap, inner sep=0cm] {\CaseBEdge}
        (D);
        \draw[CaseArrow]
        (Q)
        -- node[midway, auto, swap, inner sep=0cm] {\CaseAEdge}
        (E);
    \end{tikzpicture}
\end{center}
This ends the proof of the lemma, since in all above cases we get the subgraphs we need.
\end{proof}

\noindent
Now the only remaining part is the proof of Lemma~\ref{lemma:P2plusP3}.

\begin{proof}[Proof of Lemma~\ref{lemma:P2plusP3}] 
From the assumptions we get the subgraphs $\underline{\a\x} + \kbb \trianglelefteq \G$ and
$\underline{\a} + \underline{\y}  + \kbb \trianglelefteq \G$ for some colors $\a, \b, \x \in \ColorsTwo$ ($\a \neq
\b$) and $\y \in \{\a, \x\}$.


We will start by considering the following instance:


\Instance{\ref{lemma:P2plusP3}.1}

\newcommand{\addingPPPxIxQUESTION}{
\begin{tikzpicture}[scale=1*\scaleOfPictures]
    \begin{pgfonlayer}{foreground}
        \node[ Node] at (0, 0) (n1) {};
        \node[ANode] at (1, 0) (n2) {};
        \node[ANode] at (2, 0) (n3) {};
        \node[ Node] at (3, 0) (n4) {};
        \node[ Node] at (4, 0) (n5) {};
        \node[Basic] at (3.5, 0.5) {+ $\kbb$};
    \end{pgfonlayer}
    \begin{pgfonlayer}{background}
        \lEdge[cY,swap]{n1}{\y}{n2}
        \lEdge[AEdge]{n2}{}{n3}
        \lEdge[cX,swap]{n3}{\x}{n4}
        \lEdge[cA,swap]{n4}{\a}{n5}
    \end{pgfonlayer}
\end{tikzpicture}
}
\newcommand{\addingPPPxIxX}{
\begin{tikzpicture}[scale=1*\scaleOfPictures]
    \begin{pgfonlayer}{foreground}
        \node[ Node] at (0, 0) (n1) {};
        \node[ANode] at (1, 0) (n2) {};
        \node[ANode] at (2, 0) (n3) {};
        \node[ Node] at (3, 0) (n4) {};
        \node[ Node] at (4, 0) (n5) {};
        \node[Basic] at (3.5, 0.5) {+ $\kbb$};
    \end{pgfonlayer}
    \begin{pgfonlayer}{background}
        \lEdge[cY,swap]{n1}{\y}{n2}
        \lEdge[cBlack, swap]{n2}{\w}{n3}
        \lEdge[cX,swap]{n3}{\x}{n4}
        \lEdge[cA,swap]{n4}{\a}{n5}
    \end{pgfonlayer}
\end{tikzpicture}
}
\newcommand{\addingPPPxIxNO}{
\begin{tikzpicture}[scale=1*\scaleOfPictures]
    \begin{pgfonlayer}{foreground}
        \node[ Node] at (0, 0) (n1) {};
        \node[ANode] at (1, 0) (n2) {};
        \node[ANode] at (2, 0) (n3) {};
        \node[ Node] at (3, 0) (n4) {};
        \node[ Node] at (4, 0) (n5) {};
        \node[Basic] at (3.5, 0.5) {+ $\kbb$};
    \end{pgfonlayer}
    \begin{pgfonlayer}{background}
        \lEdge[cY,swap]{n1}{\y}{n2}
        \lEdge[cX,swap]{n3}{\x}{n4}
        \lEdge[cA,swap]{n4}{\a}{n5}
    \end{pgfonlayer}
    \begin{pgfonlayer}{prebackground}
        \draw[Selection] (n1.center) -- (n2.center) (n3.center) -- (n4.center) -- (n5.center);
    \end{pgfonlayer}
\end{tikzpicture}
}

\begin{center}
    \begin{tikzpicture}[scale=\scaleOfPictures]
        \node (Q) [AInstance] at (0, 0) {\addingPPPxIxQUESTION};
        \node (B) [AResult, left=1cm of Q] {\addingPPPxIxNO};
        \node (E) [AResult, below=1cm of Q] {\addingPPPxIxX};
        \node[anchor=north east, inner xsep=0] at (B.south east) (ok1) {\CaseOk case 1.};
        \draw[CaseArrow]
        (Q)
        -- node[midway, auto,     , inner sep=0cm] {\CaseEdgeDoesNotExist}
        (B);
        \draw[CaseArrow]
        (Q)
        -- node[midway, auto, swap, inner sep=0cm] {\CaseEdgeExists}
        (E);
    \end{tikzpicture}
\end{center}
If in its solution the edge will not emerge, we get appropriate graph: since
$\y \in \{\a, \x\}$, we know that $\y$ will appear somewhere on the path $\underline{\a\x}$,
and this suffices to fulfill the case 1. of the lemma we are proving.

Otherwise we get a path $P = \underline{\y\w\x\a} + \kbb$. We will consider two cases,
depending on whether it has the form $W = \underline{\a\b\n\n} + \kbb$ or not.

We should first observe, that there is only one case when $P$ does not match $W$. Indeed: When $\x =
\b$, $P$ is bound to have the form $W$. In the other case $P$ takes the shape $\underline{\a\w\a\a}
+ \kbb$, since $\y \in \{ \a, \x \}$, and yet now $\x = \a$. It follow immediately that the only
case when $P$ is not of the form $W$ is $P = \underline{\a\a\a\a} + \kbb$.

\paragraph{Case $1^\circ$} ($P = \underline{\a\b\mathbf{\alpha\beta}} + \kbb$, where $\mathbf{\alpha}, \mathbf{\beta} \in \ColorsTwo$)
Here, the only amalgamation instance is built as follows:

\Instance{\ref{lemma:P2plusP3}.2}

\newcommand{\addingPPPxIIxQUESTION}{
\begin{tikzpicture}[scale=1*\scaleOfPictures]
    \begin{pgfonlayer}{foreground}
        \node[ Node] at (0, 0) (n1) {};
        \node[ Node] at (1, 0) (n2) {};
        \node[ Node] at (2, 0) (n3) {};
        \node[ Node] at (3, 0) (n4) {};
        \node[ANode] at (3.866, 0.5) (n5) {};
        \node[ANode] at (3.866, -0.5) (down) {};
        \node[Basic] at (1.5, 0.5) {+ $\kbb$};
    \end{pgfonlayer}
    \begin{pgfonlayer}{background}
        \lEdge[cA,swap]{n1}{\a}{n2}
        \lEdge[cB,swap]{n2}{\b}{n3}
        \lEdge[cBlack,swap]{n3}{$\mathbf{\alpha}$}{n4}
        \lEdge[cBlack]{n4}{$\mathbf{\beta}$}{n5}
        \lEdge[cBlack,swap]{n4}{$\mathbf{\beta}$}{down}
        \lEdge[AEdge]{n5}{}{down}
    \end{pgfonlayer}
\end{tikzpicture}
}
\newcommand{\addingPPPxIIxX}{
\begin{tikzpicture}[scale=1*\scaleOfPictures]
    \begin{pgfonlayer}{foreground}
        \node[ Node] at (0, 0) (n1) {};
        \node[ Node] at (1, 0) (n2) {};
        \node[ Node] at (2, 0) (n3) {};
        \node[ Node] at (3, 0) (n4) {};
        \node[ANode] at (3.866, 0.5) (n5) {};
        \node[ANode] at (3.866, -0.5) (down) {};
        \node[Basic] at (1.5, 0.5) {+ $\kbb$};
    \end{pgfonlayer}
    \begin{pgfonlayer}{background}
        \lEdge[cA,swap]{n1}{\a}{n2}
        \lEdge[cB,swap]{n2}{\b}{n3}
        \lEdge[cBlack,swap]{n3}{$\mathbf{\alpha}$}{n4}
        \lEdge[cBlack]{n4}{$\mathbf{\beta}$}{n5}
        \lEdge[cBlack,swap]{n4}{$\mathbf{\beta}$}{down}
        \lEdge[cBlack]{n5}{$\gamma$}{down}
    \end{pgfonlayer}
    \begin{pgfonlayer}{prebackground}
        \draw[Selection] (n1.center) -- (n2.center) -- (n3.center) (n5.center) -- (down.center);
    \end{pgfonlayer}
\end{tikzpicture}
}
\newcommand{\addingPPPxIIxNO}{
\begin{tikzpicture}[scale=1*\scaleOfPictures]
    \begin{pgfonlayer}{foreground}
        \node[ Node] at (0, 0) (n1) {};
        \node[ Node] at (1, 0) (n2) {};
        \node[ Node] at (2, 0) (n3) {};
        \node[ Node] at (3, 0) (n4) {};
        \node[ANode] at (3.866, 0.5) (n5) {};
        \node[ANode] at (3.866, -0.5) (down) {};
        \node[Basic] at (1.5, 0.5) {+ $\kbb$};
    \end{pgfonlayer}
    \begin{pgfonlayer}{background}
        \lEdge[cA,swap]{n1}{\a}{n2}
        \lEdge[cB,swap]{n2}{\b}{n3}
        \lEdge[cBlack,swap]{n3}{$\mathbf{\alpha}$}{n4}
        \lEdge[cBlack]{n4}{$\mathbf{\beta}$}{n5}
        \lEdge[cBlack,swap]{n4}{$\mathbf{\beta}$}{down}
    \end{pgfonlayer}
    \begin{pgfonlayer}{prebackground}
        \draw[Selection] (down.center) -- (n4.center) -- (n5.center) (n1.center) -- (n2.center);
    \end{pgfonlayer}
\end{tikzpicture}
}

\begin{center}
    \begin{tikzpicture}[scale=\scaleOfPictures]
        \node (Q) [AInstance] at (0, 0) {\addingPPPxIIxQUESTION};
        \node (B) [AResult, left=1cm of Q] {\addingPPPxIIxNO};
        \node (E) [AResult, below=1cm of Q] {\addingPPPxIIxX};
        \node[anchor=north east, inner xsep=0] at (B.south east) (ok1) {\CaseOk case 1. or 2.};
        \node[anchor=north east, inner xsep=0] at (E.south east) (ok1) {\CaseOk case 1.};
        \draw[CaseArrow]
        (Q)
        -- node[midway, auto,     , inner sep=0cm] {\CaseEdgeDoesNotExist}
        (B);
        \draw[CaseArrow]
        (Q)
        -- node[midway, auto, swap, inner sep=0cm] {\CaseEdgeExists}
        (E);
    \end{tikzpicture}
\end{center}
If as a result of amalgamation we get an edge, we may easily fulfill case 1.
of our lemma --- the only thing we need is that color $\underline{\gamma}$ appears on the path $\underline{\a\b}$,
and this of course is happening, since $\gamma \in \{\a, \b\} = \ColorsTwo$.

If in turn the edge was not produced, we get (as a subgraph): $\underline{\beta\beta} + \underline{\a} + \kbb$.
Now, depending on the value of $\beta$, either case 1. or 2. is fulfilled. Indeed, when
$\beta = \a$ we obtain the subgraph $\underline{\a\a} + \underline{\a} + \kbb$ and case 1. of the lemma holds.
When we get $\beta = \b$, then (together with graph $\underline{\a\b} + \kbb \trianglelefteq P$) we have all
what is needed for case 2. of the lemma.

\paragraph{Case $2^\circ$} ($P = \underline{\a\a\a\a} + \kbb$)
Here, the simple amalgamation instance similar to the one from case $1^\circ$ (picture omitted) completes the
proof only in cases \raisebox{-0.2cm}{\CaseEdgeDoesNotExist} and \raisebox{-0.2cm}{\CaseAEdge}. If instead
we got the following result
\begin{center}
    \begin{tikzpicture}[scale=\scaleOfPictures]
        \begin{pgfonlayer}{foreground}
            \node[ Node] at (0, 0) (n1) {};
            \node[ Node] at (1, 0) (n2) {};
            \node[ Node] at (2, 0) (n3) {};
            \node[ Node] at (3, 0) (n4) {};
            \node[ANode] at (3.866, 0.5) (n5) {};
            \node[ANode] at (3.866, -0.5) (down) {};
            \node[Basic] at (1.5, 0.5) {+ $\kbb$};
        \end{pgfonlayer}
        \begin{pgfonlayer}{background}
            \lEdge[cA,swap]{n1}{\a}{n2}
            \lEdge[cA,swap]{n2}{\a}{n3}
            \lEdge[cA,swap]{n3}{\a}{n4}
            \lEdge[cA]{n4}{\a}{n5}
            \lEdge[cA,swap]{n4}{\a}{down}
            \lEdge[cB]{n5}{\b}{down}
        \end{pgfonlayer}
    \end{tikzpicture}
\end{center}
we cannot use it for case 1. of the lemma, and to satisfy the case 2. an additional graph
$\heartsuit = \underline{\a\b} + \kbb$ is required.
Another sequence of amalgamations awaits --- four extra instances will be needed.

\Instance{\ref{lemma:P2plusP3}.3}

\newcommand{\addingPPPxMISSINGxQUESTION}{
\begin{tikzpicture}[scale=\scaleOfPictures]
    \begin{pgfonlayer}{foreground}
        \node[ Node] at (0, 0) (n1) {};
        \node[ANode] at (0, 1) (n2) {};
        \node[ Node] at (0, 2) (n3) {};

        \node[ Node] at (1, 0) (n4) {};
        \node[ANode] at (1, 1) (n5) {};
        \node[Basic] at (1, 1.8) {+ $\kbb$};
    \end{pgfonlayer}
    \begin{pgfonlayer}{background}
        \lEdge[cA,swap]{n1}{\a}{n2}
        \lEdge[AEdge,swap]{n2}{}{n5}
        \lEdge[cA,swap]{n2}{\a}{n3}
        \lEdge[cA,swap]{n4}{\a}{n5}
    \end{pgfonlayer}
\end{tikzpicture}
}
\newcommand{\addingPPPxMISSINGxB}{
\begin{tikzpicture}[scale=\scaleOfPictures]
    \begin{pgfonlayer}{foreground}
        \node[ Node] at (0, 0) (n1) {};
        \node[ANode] at (0, 1) (n2) {};
        \node[ Node] at (0, 2) (n3) {};

        \node[ Node] at (1, 0) (n4) {};
        \node[ANode] at (1, 1) (n5) {};
        \node[Basic] at (1, 1.8) {+ $\kbb$};
    \end{pgfonlayer}
    \begin{pgfonlayer}{background}
        \lEdge[cA,swap]{n1}{\a}{n2}
        \lEdge[cB,swap]{n2}{\b}{n5}
        \lEdge[cA,swap]{n2}{\a}{n3}
        \lEdge[cA,swap]{n4}{\a}{n5}
    \end{pgfonlayer}
    \begin{pgfonlayer}{prebackground}
        \draw[Selection] (n2.center) -- (n5.center) -- (n4.center);
    \end{pgfonlayer}
\end{tikzpicture}
}
\newcommand{\addingPPPxMISSINGxNO}{
\begin{tikzpicture}[scale=\scaleOfPictures]
    \begin{pgfonlayer}{foreground}
        \node[ Node] at (0, 0) (n1) {};
        \node[ANode] at (0, 1) (n2) {};
        \node[ Node] at (0, 2) (n3) {};

        \node[ Node] at (1, 0) (n4) {};
        \node[ANode] at (1, 1) (n5) {};
        \node[Basic] at (1, 1.8) {+ $\kbb$};
    \end{pgfonlayer}
    \begin{pgfonlayer}{background}
        \lEdge[cA,swap]{n1}{\a}{n2}
        \lEdge[cA,swap]{n2}{\a}{n3}
        \lEdge[cA,swap]{n4}{\a}{n5}
    \end{pgfonlayer}
    \begin{pgfonlayer}{prebackground}
        \draw[Selection] (n1.center) -- (n2.center) -- (n3.center) (n4.center) -- (n5.center);
    \end{pgfonlayer}
\end{tikzpicture}
}
\newcommand{\addingPPPxMISSINGxA}{
\begin{tikzpicture}[scale=\scaleOfPictures]
    \begin{pgfonlayer}{foreground}
        \node[ Node] at (0, 0) (n1) {};
        \node[ANode] at (0, 1) (n2) {};
        \node[ Node] at (0, 2) (n3) {};

        \node[ Node] at (1, 0) (n4) {};
        \node[ANode] at (1, 1) (n5) {};
        \node[Basic] at (1, 1.8) {+ $\kbb$};
    \end{pgfonlayer}
    \begin{pgfonlayer}{background}
        \lEdge[cA,swap]{n1}{\a}{n2}
        \lEdge[cA,swap]{n2}{\a}{n5}
        \lEdge[cA,swap]{n2}{\a}{n3}
        \lEdge[cA,swap]{n4}{\a}{n5}
    \end{pgfonlayer}
\end{tikzpicture}
}

\begin{center}
    \begin{tikzpicture}[scale=\scaleOfPictures]
        \node (Q) [AInstance] at (0, 0) {\addingPPPxMISSINGxQUESTION};
        \node (B) [AResult, left=1cm of Q] {\addingPPPxMISSINGxNO};
        \node (D) [AResult, right=1cm of Q] {\addingPPPxMISSINGxB};
        \node (E) [AResult, below=1cm of Q] {\addingPPPxMISSINGxA};
        \node[anchor=north east, inner xsep=0] at (B.south east) (ok1) {\CaseOk case 1.};
        \node[anchor=north east, inner xsep=0] at (D.south east) (ok1) {\CaseOk case 2.};
        \draw[CaseArrow]
        (Q)
        -- node[midway, auto,     , inner sep=0cm] {\CaseEdgeDoesNotExist}
        (B);
        \draw[CaseArrow]
        (Q)
        -- node[midway, auto, swap, inner sep=0cm] {\CaseBEdge}
        (D);
        \draw[CaseArrow]
        (Q)
        -- node[midway, auto, swap, inner sep=0cm] {\CaseAEdge}
        (E);
    \end{tikzpicture}
\end{center}
Above, when the edge does not exist case~1. of lemma easily follows.
If in turn we get a $\b$-edge, there appears the graph~$\heartsuit$ we are searching for,
allowing to meet the requirements of case~2. Let us assume then, that we got an $\a$-edge.

\Instance{\ref{lemma:P2plusP3}.4} The graph we just obtained allows to build the following instance:

\newcommand{\addingPPPxIIIxQUESTION}{
\begin{tikzpicture}[scale=\scaleOfPictures]
    \begin{pgfonlayer}{foreground}
        \node[ Node] at (0, 0) (n1) {};
        \node[ANode] at (1, 0) (n2) {};
        \node[ANode] at (2, 0) (n3) {};
        \node[ Node] at (3, 0) (n4) {};
        \node[ Node] at (2.5, -0.866) (n5) {};
        \node[ Node] at (1.5, 0.866) (up) {};
        \node[Basic] at (2.7, 0.5) {+ $\kbb$};
    \end{pgfonlayer}
    \begin{pgfonlayer}{background}
        \lEdge[cA,swap]{n1}{\a}{n2}
        \lEdge[AEdge,swap]{n2}{}{n3}
        \lEdge[cA,swap]{n3}{\a}{n4}
        \lEdge[cA,swap]{n3}{\a}{n5}

        \lEdge[cA]{n2}{\a}{up}
        \lEdge[cA]{up}{\a}{n3}
    \end{pgfonlayer}
\end{tikzpicture}
}
\newcommand{\addingPPPxIIIxB}{
\begin{tikzpicture}[scale=\scaleOfPictures]
    \begin{pgfonlayer}{foreground}
        \node[ Node] at (0, 0) (n1) {};
        \node[ANode] at (1, 0) (n2) {};
        \node[ANode] at (2, 0) (n3) {};
        \node[ Node] at (3, 0) (n4) {};
        \node[ Node] at (2.5, -0.866) (n5) {};
        \node[ Node] at (1.5, 0.866) (up) {};
        \node[Basic] at (2.7, 0.5) {+ $\kbb$};
    \end{pgfonlayer}
    \begin{pgfonlayer}{background}
        \lEdge[cA,swap]{n1}{\a}{n2}
        \lEdge[cB,swap]{n2}{\b}{n3}
        \lEdge[cA,swap]{n3}{\a}{n4}
        \lEdge[cA,swap]{n3}{\a}{n5}

        \lEdge[cA]{n2}{\a}{up}
        \lEdge[cA]{up}{\a}{n3}
    \end{pgfonlayer}
    \begin{pgfonlayer}{prebackground}
        \draw[Selection] (n1.center) -- (n2.center) -- (n3.center);
    \end{pgfonlayer}
\end{tikzpicture}
}
\newcommand{\addingPPPxIIIxA}{
\begin{tikzpicture}[scale=\scaleOfPictures]
    \begin{pgfonlayer}{foreground}
        \node[ Node] at (0, 0) (n1) {};
        \node[ANode] at (1, 0) (n2) {};
        \node[ANode] at (2, 0) (n3) {};
        \node[ Node] at (3, 0) (n4) {};
        \node[ Node] at (2.5, -0.866) (n5) {};
        \node[ Node] at (1.5, 0.866) (up) {};
        \node[Basic] at (2.7, 0.5) {+ $\kbb$};
    \end{pgfonlayer}
    \begin{pgfonlayer}{background}
        \lEdge[cA,swap]{n1}{\a}{n2}
        \lEdge[cA,swap]{n2}{\a}{n3}
        \lEdge[cA,swap]{n3}{\a}{n4}
        \lEdge[cA,swap]{n3}{\a}{n5}

        \lEdge[cA]{n2}{\a}{up}
        \lEdge[cA]{up}{\a}{n3}
    \end{pgfonlayer}
\end{tikzpicture}
}
\newcommand{\addingPPPxIIIxNO}{
\begin{tikzpicture}[scale=\scaleOfPictures]
    \begin{pgfonlayer}{foreground}
        \node[ Node] at (0, 0) (n1) {};
        \node[ANode] at (1, 0) (n2) {};
        \node[ANode] at (2, 0) (n3) {};
        \node[ Node] at (3, 0) (n4) {};
        \node[ Node] at (2.5, -0.866) (n5) {};
        \node[ Node] at (1.5, 0.866) (up) {};
        \node[Basic] at (2.7, 0.5) {+ $\kbb$};
    \end{pgfonlayer}
    \begin{pgfonlayer}{background}
        \lEdge[cA,swap]{n1}{\a}{n2}
        \lEdge[cA,swap]{n3}{\a}{n4}
        \lEdge[cA,swap]{n3}{\a}{n5}

        \lEdge[cA]{n2}{\a}{up}
        \lEdge[cA]{up}{\a}{n3}
    \end{pgfonlayer}
    \begin{pgfonlayer}{prebackground}
        \draw[Selection] (n1.center) -- (n2.center) (n4.center) -- (n3.center) -- (n5.center);
    \end{pgfonlayer}
\end{tikzpicture}
}

\begin{center}
    \begin{tikzpicture}[scale=\scaleOfPictures]
        \node (Q) [AInstance] at (0, 0) {\addingPPPxIIIxQUESTION};
        \node (B) [AResult, left=1cm of Q] {\addingPPPxIIIxNO};
        \node (D) [AResult, right=1cm of Q] {\addingPPPxIIIxB};
        \node (E) [AResult, below=1cm of Q] {\addingPPPxIIIxA};
        \node[anchor=north east, inner xsep=0] at (B.south east) (ok1) {\CaseOk case 1.};
        \node[anchor=north east, inner xsep=0] at (D.south east) (ok1) {\CaseOk case 2.};
        \draw[CaseArrow]
        (Q)
        -- node[midway, auto,     , inner sep=0cm] {\CaseEdgeDoesNotExist}
        (B);
        \draw[CaseArrow]
        (Q)
        -- node[midway, auto, swap, inner sep=0cm] {\CaseBEdge}
        (D);
        \draw[CaseArrow]
        (Q)
        -- node[midway, auto, swap, inner sep=0cm] {\CaseAEdge}
        (E);
    \end{tikzpicture}
\end{center}
As before, the lack of an edge of the appearance of a $\b$-edge lead us straight to the cases 1. or 2.
Again, we assume we unluckily got an $\a$-edge.

\Instance{\ref{lemma:P2plusP3}.5}
From the result of previous instance we take the subgraph $\circ\a\a\a + \kbb$, and, pairing it with the
graph
$\underline{\a} + \underline{\a} + \kbb$, we build an amalgamation as follows:

\newcommand{\addingPPPxIVxQUESTION}{
\begin{tikzpicture}[scale=\scaleOfPictures]
    \begin{pgfonlayer}{foreground}
        \node[ Node] at (-1, 0) (n0) {};
        \node[ANode] at (0, 0) (n1) {};
        \node[ANode] at (1, 0) (n2) {};
        \node[ Node] at (2, 0) (n3) {};
        \node[ Node] at (1.5, 0.866) (up) {};
        \node[Basic] at (1.7, -0.7) {+ $\kbb$};
    \end{pgfonlayer}
    \begin{pgfonlayer}{background}
        \lEdge[cA,swap]{n0}{\a}{n1}
        \lEdge[AEdge,swap]{n1}{}{n2}
        \lEdge[cA,swap]{n2}{\a}{n3}
        \lEdge[cA]{n2}{\a}{up}
        \lEdge[cA]{up}{\a}{n3}
    \end{pgfonlayer}
\end{tikzpicture}
}
\newcommand{\addingPPPxIVxB}{
\begin{tikzpicture}[scale=\scaleOfPictures]
    \begin{pgfonlayer}{foreground}
        \node[ Node] at (-1, 0) (n0) {};
        \node[ANode] at (0, 0) (n1) {};
        \node[ANode] at (1, 0) (n2) {};
        \node[ Node] at (2, 0) (n3) {};
        \node[ Node] at (1.5, 0.866) (up) {};
        \node[Basic] at (1.7, -0.7) {+ $\kbb$};
    \end{pgfonlayer}
    \begin{pgfonlayer}{background}
        \lEdge[cA,swap]{n0}{\a}{n1}
        \lEdge[cB,swap]{n1}{\b}{n2}
        \lEdge[cA,swap]{n2}{\a}{n3}
        \lEdge[cA]{n2}{\a}{up}
        \lEdge[cA]{up}{\a}{n3}
    \end{pgfonlayer}
    \begin{pgfonlayer}{prebackground}
        \draw[Selection] (n0.center) -- (n1.center) -- (n2.center);
    \end{pgfonlayer}
\end{tikzpicture}
}
\newcommand{\addingPPPxIVxA}{
\begin{tikzpicture}[scale=\scaleOfPictures]
    \begin{pgfonlayer}{foreground}
        \node[ Node] at (-1, 0) (n0) {};
        \node[ANode] at (0, 0) (n1) {};
        \node[ANode] at (1, 0) (n2) {};
        \node[ Node] at (2, 0) (n3) {};
        \node[ Node] at (1.5, 0.866) (up) {};
        \node[Basic] at (1.7, -0.7) {+ $\kbb$};
    \end{pgfonlayer}
    \begin{pgfonlayer}{background}
        \lEdge[cA,swap]{n0}{\a}{n1}
        \lEdge[cA,swap]{n1}{\a}{n2}
        \lEdge[cA,swap]{n2}{\a}{n3}
        \lEdge[cA]{n2}{\a}{up}
        \lEdge[cA]{up}{\a}{n3}
    \end{pgfonlayer}
\end{tikzpicture}
}
\newcommand{\addingPPPxIVxNO}{
\begin{tikzpicture}[scale=\scaleOfPictures]
    \begin{pgfonlayer}{foreground}
        \node[ Node] at (-1, 0) (n0) {};
        \node[ANode] at (0, 0) (n1) {};
        \node[ANode] at (1, 0) (n2) {};
        \node[ Node] at (2, 0) (n3) {};
        \node[ Node] at (1.5, 0.866) (up) {};
        \node[Basic] at (1.7, -0.7) {+ $\kbb$};
    \end{pgfonlayer}
    \begin{pgfonlayer}{background}
        \lEdge[cA,swap]{n0}{\a}{n1}
        \lEdge[cA,swap]{n2}{\a}{n3}
        \lEdge[cA]{n2}{\a}{up}
        \lEdge[cA]{up}{\a}{n3}
    \end{pgfonlayer}
\end{tikzpicture}
}

\begin{center}
\centering
\makebox[\linewidth]{%
    \begin{tikzpicture}[scale=\scaleOfPictures]
        \node (Q) [AInstance] at (0, 0) {\addingPPPxIVxQUESTION};
        \node (B) [AResult, left=1cm of Q] {\addingPPPxIVxNO};
        \node (D) [AResult, right=1cm of Q] {\addingPPPxIVxB};
        \node (E) [AResult, below=0.8cm of Q] {\addingPPPxIVxA};
        \node[anchor=north west, inner xsep=0.1cm] at (B.north west) (name) {$G_1=$};
        \node[anchor=north east, inner xsep=0] at (D.south east) (ok1) {\CaseOk case 1.};
        \draw[CaseArrow]
        (Q)
        -- node[midway, auto,     , inner sep=0cm] {\CaseEdgeDoesNotExist}
        (B);
        \draw[CaseArrow]
        (Q)
        -- node[midway, auto, swap, inner sep=0cm] {\CaseBEdge}
        (D);
        \draw[CaseArrow]
        (Q)
        -- node[midway, auto, swap, inner sep=0cm] {\CaseAEdge}
        (E);
    \end{tikzpicture}
}
\end{center}
Here, if we got a $\b$-edge, we finish with case 2., having found the graph $\heartsuit$.
If the edge was not present, we immediately get a graph $G_1$ that later will help us to finish
the proof.
If in turn an $\a$-edge appeared, we need to perform one additional amalgamation in order
to get the same $G_1$.

\Instance{\ref{lemma:P2plusP3}.6} (building $G_1$)
Now, we pair the previous result with the path $P$:

\newcommand{\addingPPPxVxQUESTION}{
\begin{tikzpicture}[scale=0.9*\scaleOfPictures,rotate=-30]
    \begin{pgfonlayer}{foreground}
        \node[ Node] at (-1, 0) (n0) {};
        \node[ Node] at (0, 0) (n1) {};
        \node[ Node] at (1, 0) (n2) {};
        \node[ Node] at (1.866, -0.5) (n3) {};
        \node[ANode] at (2.732, 0) (n4) {};
        \node[ANode] at (1.866, 0.5) (up) {};
        \node[Basic] at (0.2, -1.1) {+ $\kbb$};
    \end{pgfonlayer}
    \begin{pgfonlayer}{background}
        \lEdge[cA,swap]{up}{\a}{n3}
        \lEdge[AEdge,swap]{n4}{}{up}
        \lEdge[cA,swap]{n0}{\a}{n1}
        \lEdge[cA,swap]{n1}{\a}{n2}
        \lEdge[cA,swap]{n2}{\a}{n3}
        \lEdge[cA,swap]{n3}{\a}{n4}
        \lEdge[cA]{n2}{\a}{up}
    \end{pgfonlayer}
\end{tikzpicture}
}
\newcommand{\addingPPPxVxA}{
\begin{tikzpicture}[scale=0.9*\scaleOfPictures,rotate=-30]
    \begin{pgfonlayer}{foreground}
        \node[ Node] at (-1, 0) (n0) {};
        \node[ Node] at (0, 0) (n1) {};
        \node[ Node, fill=cGray!30] at (1, 0) (n2) {};
        \node[ Node] at (1.866, -0.5) (n3) {};
        \node[ANode] at (2.732, 0) (n4) {};
        \node[ANode] at (1.866, 0.5) (up) {};
        \node[Basic] at (0.2, -1.1) {+ $\kbb$};
    \end{pgfonlayer}
    \begin{pgfonlayer}{background}
        \lEdge[cA,swap]{up}{\a}{n3}
        \lEdge[cA,swap]{n4}{\a}{up}
        \lEdge[cA,swap]{n0}{\a}{n1}
        \lEdge[cA,swap, opacity=0.3]{n1}{\a}{n2}
        \lEdge[cA,swap, opacity=0.3]{n2}{\a}{n3}
        \lEdge[cA,swap]{n3}{\a}{n4}
        \lEdge[cA, opacity=0.3]{n2}{\a}{up}
    \end{pgfonlayer}
\end{tikzpicture}
}
\newcommand{\addingPPPxVxB}{
\begin{tikzpicture}[scale=0.9*\scaleOfPictures,rotate=-30]
    \begin{pgfonlayer}{foreground}
        \node[ Node] at (-1, 0) (n0) {};
        \node[ Node] at (0, 0) (n1) {};
        \node[ Node] at (1, 0) (n2) {};
        \node[ Node] at (1.866, -0.5) (n3) {};
        \node[ANode] at (2.732, 0) (n4) {};
        \node[ANode] at (1.866, 0.5) (up) {};
        \node[Basic] at (0.2, -1.1) {+ $\kbb$};
    \end{pgfonlayer}
    \begin{pgfonlayer}{background}
        \lEdge[cA,swap]{up}{\a}{n3}
        \lEdge[cB,swap]{n4}{\b}{up}
        \lEdge[cA,swap]{n0}{\a}{n1}
        \lEdge[cA,swap]{n1}{\a}{n2}
        \lEdge[cA,swap]{n2}{\a}{n3}
        \lEdge[cA,swap]{n3}{\a}{n4}
        \lEdge[cA]{n2}{\a}{up}
    \end{pgfonlayer}
    \begin{pgfonlayer}{prebackground}
        \draw[Selection] (n2.center) -- (up.center) -- (n4.center);
    \end{pgfonlayer}
\end{tikzpicture}
}
\newcommand{\addingPPPxVxNO}{
\begin{tikzpicture}[scale=0.9*\scaleOfPictures,rotate=-30]
    \begin{pgfonlayer}{foreground}
        \node[ Node] at (-1, 0) (n0) {};
        \node[ Node] at (0, 0) (n1) {};
        \node[ Node] at (1, 0) (n2) {};
        \node[ Node] at (1.866, -0.5) (n3) {};
        \node[ANode] at (2.732, 0) (n4) {};
        \node[ANode] at (1.866, 0.5) (up) {};
        \node[Basic] at (0.2, -1.1) {+ $\kbb$};
    \end{pgfonlayer}
    \begin{pgfonlayer}{background}
        \lEdge[cA,swap]{up}{\a}{n3}
        \lEdge[cA,swap]{n0}{\a}{n1}
        \lEdge[cA,swap]{n1}{\a}{n2}
        \lEdge[cA,swap]{n2}{\a}{n3}
        \lEdge[cA,swap]{n3}{\a}{n4}
        \lEdge[cA]{n2}{\a}{up}
    \end{pgfonlayer}
    \begin{pgfonlayer}{prebackground}
        \draw[Selection] (up.center) -- (n3.center) -- (n4.center) (n0.center) -- (n1.center);
    \end{pgfonlayer}
\end{tikzpicture}
}

\begin{center}
    \begin{tikzpicture}[scale=\scaleOfPictures]
        \node (Q) [AInstance, inner ysep=0.1cm] at (0, 0) {\addingPPPxVxQUESTION};
        \node (B) [AResult, inner ysep=0.1cm, left=1cm of Q] {\addingPPPxVxNO};
        \node (D) [AResult, inner ysep=0.1cm, right=1cm of Q] {\addingPPPxVxB};
        \node (E) [AResult, inner ysep=0.1cm, below=0.8cm of Q] {\addingPPPxVxA};
        \node[anchor=north east, inner xsep=0] at (B.south east) (ok1) {\CaseOk case 1.};
        \node[anchor=north east, inner xsep=0] at (D.south east) (ok2) {\CaseOk graph $\heartsuit$};
        \node[anchor=north east, inner xsep=0] at (E.south east) (ok3) {graph $G_1$};
        \draw[CaseArrow]
        (Q)
        -- node[midway, auto,     , inner sep=0cm] {\CaseEdgeDoesNotExist}
        (B);
        \draw[CaseArrow]
        (Q)
        -- node[midway, auto, swap, inner sep=0cm] {\CaseBEdge}
        (D);
        \draw[CaseArrow]
        (Q)
        -- node[midway, auto, swap, inner sep=0cm] {\CaseAEdge}
        (E);
    \end{tikzpicture}
\end{center}
In two out of three possible cases we finish immediately, while in the third one the expected graph
$G_1$ appears as a subgraph.

\Instance{\ref{lemma:P2plusP3}.7}
Using the graph $G_1$ and the result of instance~\ref{lemma:P2plusP3}.4., we perform the last amalgamation in the proof of this lemma,
thus providing the final missing link needed to finalize the proof of Corollary~\ref{corollary:canAddP2xA}.

\newcommand{\addingPPPxVIxQUESTION}{
\begin{tikzpicture}[scale=0.9*\scaleOfPictures]
    \begin{pgfonlayer}{foreground}
        \node[ Node] at (-2, 0) (nn) {};
        \node[ANode] at (-1, 0) (n0) {};
        \node[ANode] at (0, 0) (n1) {};
        \node[ Node] at (1, 0) (n2) {};
        \node[ Node] at (1.866, -0.5) (n3) {};
        \node[ Node] at (1.866, 0.5) (up) {};
        \node[Basic] at (-0.5, -0.7) {+ $\kbb$};
    \end{pgfonlayer}
    \begin{pgfonlayer}{background}
        \lEdge[cA,swap]{up}{\a}{n3}
        \lEdge[cA,swap]{nn}{\a}{n0}
        \lEdge[AEdge,swap]{n0}{}{n1}
        \lEdge[cA,swap]{n1}{\a}{n2}
        \lEdge[cA,swap]{n2}{\a}{n3}
        \lEdge[cA]{n2}{\a}{up}
    \end{pgfonlayer}
\end{tikzpicture}
}
\newcommand{\addingPPPxVIxX}{
\begin{tikzpicture}[scale=0.9*\scaleOfPictures]
    \begin{pgfonlayer}{foreground}
        \node[ Node] at (-2, 0) (nn) {};
        \node[ANode] at (-1, 0) (n0) {};
        \node[ANode] at (0, 0) (n1) {};
        \node[ Node] at (1, 0) (n2) {};
        \node[ Node] at (1.866, -0.5) (n3) {};
        \node[ Node] at (1.866, 0.5) (up) {};
        \node[Basic] at (-0.5, -0.7) {+ $\kbb$};
    \end{pgfonlayer}
    \begin{pgfonlayer}{background}
        \lEdge[cA,swap]{up}{\a}{n3}
        \lEdge[cA,swap]{nn}{\a}{n0}
        \lEdge[cBlack,swap]{n0}{\v}{n1}
        \lEdge[cA,swap]{n1}{\a}{n2}
        \lEdge[cA,swap]{n2}{\a}{n3}
        \lEdge[cA]{n2}{\a}{up}
    \end{pgfonlayer}
    \begin{pgfonlayer}{prebackground}
        \draw[Selection] (nn.center) -- (n0.center) -- (n1.center) (n3.center) -- (up.center);
    \end{pgfonlayer}
\end{tikzpicture}
}
\newcommand{\addingPPPxVIxNO}{
\begin{tikzpicture}[scale=0.9*\scaleOfPictures]
    \begin{pgfonlayer}{foreground}
        \node[ Node] at (-2, 0) (nn) {};
        \node[ANode] at (-1, 0) (n0) {};
        \node[ANode] at (0, 0) (n1) {};
        \node[ Node] at (1, 0) (n2) {};
        \node[ Node] at (1.866, -0.5) (n3) {};
        \node[ Node] at (1.866, 0.5) (up) {};
        \node[Basic] at (-0.5, -0.7) {+ $\kbb$};
    \end{pgfonlayer}
    \begin{pgfonlayer}{background}
        \lEdge[cA,swap]{up}{\a}{n3}
        \lEdge[cA,swap]{nn}{\a}{n0}
        \lEdge[cA,swap]{n1}{\a}{n2}
        \lEdge[cA,swap]{n2}{\a}{n3}
        \lEdge[cA]{n2}{\a}{up}
    \end{pgfonlayer}
    \begin{pgfonlayer}{prebackground}
        \draw[Selection] (nn.center) -- (n0.center) (n1.center) -- (n2.center) -- (n3.center);
    \end{pgfonlayer}
\end{tikzpicture}
}

\begin{center}
    \begin{tikzpicture}[scale=\scaleOfPictures]
        \node (Q) [AInstance, inner sep=0.1cm] at (0, 0) {\addingPPPxVIxQUESTION};
        \node (B) [AResult, inner sep=0.15cm, left=0.8cm of Q] {\addingPPPxVIxNO};
        \node (D) [AResult, inner sep=0.15cm, right=0.8cm of Q] {\addingPPPxVIxX};
        \node[anchor=north east, inner xsep=0] at (B.south east) (ok1) {\CaseOk case 1.};
        \node[anchor=north east, inner xsep=0] at (D.south east) (ok2) {\CaseOk case 1.};
        \draw[CaseArrow]
        (Q)
        -- node[midway, auto,     , inner sep=0cm] {\CaseEdgeDoesNotExist}
        (B);
        \draw[CaseArrow]
        (Q)
        -- node[midway, auto, swap, inner sep=0cm] {\CaseEdgeExists}
        (D);
    \end{tikzpicture}
\end{center}
No matter if the edge appeared or not, case 1. of the lemma gets fulfilled, what finishes the proof.
\end{proof}


\subsection{Adding paths of length 2}\label{subsec:addingPaths}

In the previous part of the proof we had a 'resource' $\kbb$ of isolated vertices
and we could use them as needed to construct successive instances of amalgamation.
From now on --- thanks to Corollary~\ref{corollary:canAddP2xA}. --- we may afford to
maintain an arbitrarily large collection of edges $\underline{\a}$ ($\a \in \ColorsTwo$).

The aim of the next four amalgamations will be to show, that
we actually can afford even more --- a collection of 2-edge paths of the form $\underline{\a\x}$ (for some $\x \in
\ColorsTwo$). It is the last step we need to make before showing the ultimate goal of this branch of the proof
 --- deriving the existence of arbitrarily long $\a\b$-paths in $\G$.

Let us formalize the lemma we intend to prove:

\begin{lemma}\label{lemma:addingPPP}
If a strongly homogeneous, 2-edge-colored graph $\G$ satisfies Corollary~\ref{corollary:canAddP2xA}, i.e.,
for every $k \in
\mathbb{N}$ there exist colors $\a, \x \in \ColorsTwo$ such that $\G$ embeds the graph $\underline{\a\x}
+ k\cdot\underline{\a}$, then 
for every $n\in\mathbb{N}$ there exist colors $\a, \y\in\ColorsTwo$ such that
$\G$ embeds the graph 
\[
n\cdot\underline{\a\y}\  \trianglelefteq\  \G
\]
\end{lemma}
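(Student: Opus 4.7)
The plan is to mimic the inductive structure of Corollary~\ref{corollary:canAddP2xA}, combined with a pigeonhole argument. Starting from the graph $\underline{\a\x_0}+k\cdot\underline{\a}\trianglelefteq\G$ supplied by the hypothesis for $k$ arbitrarily large, I will iteratively convert isolated $\underline{\a}$-edges into isolated 2-edge paths $\underline{\a\y_j}$, maintaining intermediate graphs of the shape
\[
\underline{\a\x_0}+\underline{\a\y_1}+\ldots+\underline{\a\y_j}+(k-O(j))\cdot\underline{\a}\trianglelefteq\G.
\]
Since each $\y_j\in\ColorsTwo=\{\a,\b\}$, after enough iterations the pigeonhole principle produces $n$ of these paths sharing a common second color, giving $n\cdot\underline{\a\y}\trianglelefteq\G$ as required.

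The heart of the argument is the single conversion step, which should be executable by four amalgamation instances in the spirit of Lemma~\ref{lemma:addingP2xAA}. For the primary instance, fix the current state graph $H$ and an isolated $\a$-edge $\{u,v\}\subseteq H$, take the common part $A=H\setminus\{v\}$, and let both upper structures be $H$, viewed as two embeddings into $\G$ that agree on $A$ but differ in the vertex playing the role of $v$. Such a second vertex $v'$ exists by homogeneity (for $v$ is $\a$-connected to $u$ and $\c$-connected elsewhere in $A$, and these constraints leave infinitely many choices). Strong amalgamation forces $v$ and $v'$ to be distinct in the solution, and the only unconstrained edge is $vv'$. In the favorable case $vv'=\c$, the triple $\{v,u,v'\}$ forms an induced $\underline{\a\a}$-path that is $\c$-separated from $A\setminus\{u\}$, yielding directly the converted state $H\setminus\{u,v\}+\underline{\a\a}$.

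The main obstacle is treating the two unfavorable outcomes $vv'\in\{\a,\b\}$, in which the solution contains a triangle $\triangle\a\a\alpha$ where we had hoped to see an induced path. For these, the plan is to chain three further amalgamation instances that use the newly found triangle jointly with another unused isolated edge from the resource, pushing each case either to a direct embedding of an isolated $\underline{\a\y}$ or to a configuration symmetric to one already resolved, in the same style of case-trees employed throughout Section~\ref{sec:coreA}. The delicate color-combinatorics in these follow-ups is what motivates having the full four-instance block rather than just the first one, but each instance still consumes only a bounded number of vertices, so the iteration runs indefinitely against the arbitrarily large resource granted by the hypothesis. With the single step in hand, iterating enough times and applying pigeonhole on $\ColorsTwo$ concludes the proof.
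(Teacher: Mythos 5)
The high-level skeleton of your plan — an inductive conversion of isolated $\underline{\a}$-edges into isolated two-edge paths $\underline{\a\y_j}$, followed by a pigeonhole step on $\ColorsTwo$ — correctly matches the paper's strategy, and your bookkeeping of the resource is in the right spirit.

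The genuine gap is in the conversion step itself, and it is not a small one. Your first amalgamation instance is a valid (and rather elegant) idea: take $A = H\setminus\{v\}$, $B_1 = B_2 = H$, apply strong amalgamation, and observe that the only free edge is $v_1v_2$. When $v_1v_2$ carries $\c$, you indeed obtain $(H\setminus\{u,v\}) + \underline{\a\a}$. But when $v_1v_2$ carries $\a$ or $\b$, you are left with a triangle $\triangle\a\a\a$ or $\triangle\a\a\b$ attached to the resource, and at that point your proof ends with an IOU: ``chain three further amalgamation instances \ldots pushing each case either to a direct embedding \ldots or to a configuration symmetric to one already resolved.'' That sentence is the entire hard content of the lemma, and you never actually construct those instances. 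The ``delicate color-combinatorics'' you gesture at is precisely what a proof of this lemma must supply; asserting in advance that a four-instance block in the Section~\ref{sec:coreA} style will close the cases is not a proof that it will.

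Moreover, it is not evident that your chosen first step leaves you in a position from which a short chain of follow-ups even exists. The paper's conversion step does not try to duplicate a vertex at all: its Instance~\ref{lemma:addingPPP}.1 glues two copies of the \emph{path} $\underline{\a\u}$ at an interior vertex so that, regardless of which color appears on the dashed edge, one can read off an \emph{induced three-edge path} $\underline{\a\x\y}$ in the solution (both branches give a usable path, not a triangle). That three-edge path is what the subsequent Instances~\ref{lemma:addingPPP}.2--\ref{lemma:addingPPP}.4 are designed to ``split'' into two disjoint two-edge paths, with one further instance (Instance~\ref{lemma:addingPPP}.3) inserted specifically to handle the color mismatch $\z\neq\y$ — a subtlety your sketch does not acknowledge exists. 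Your route, by contrast, produces a triangle whenever the free edge is not $\c$, which is a qualitatively different object to feed into later amalgamations, and you give no reason to believe it can be turned into an isolated $\underline{\a\y}$ plus intact resource in a bounded number of steps. Until those follow-up instances are actually written down and their case trees shown to terminate, the proof is incomplete.
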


\begin{proof}  
As in the previous part, the proof will be inductive. This time,
aiming to find $n\cdot\underline{\a\y} \trianglelefteq \G$ (for some $\y \in \ColorsTwo$), we will
produce successively all the graphs bellow:
\begin{align*}
    &\underline{\a\x_0}\ \ +\ \ (2n\hspace{0.6cm}) \cdot M \cdot \underline{\a}\\
    &\underline{\a\x_1}\ \ +\ \ (2n-1) \cdot M \cdot \underline{\a}\ \ +\ \ \underline{\a\y_1}\\
    &\underline{\a\x_2}\ \ +\ \ (2n-2) \cdot M \cdot \underline{\a}\ \ +\ \ \underline{\a\y_1}+\underline{\a\y_2}\\
    &\underline{\a\x_3}\ \ +\ \ (2n-3) \cdot M \cdot \underline{\a}\ \ +\ \ \underline{\a\y_1}+\underline{\a\y_2}+\underline{\a\y_3}\\
    &\dots\\
    &\underline{\a\x_{2n}}\hspace{0.09cm}+\ \ (\rlap{$0$}\phantom{2}\hspace{0.896cm}) \cdot M \cdot \underline{\a}\ \ +\ \ \underline{\a\y_1}+\underline{\a\y_2}+\underline{\a\y_3}+\dots+\underline{\a\y_{2n}}
\end{align*}
At each point, to produce one isolated path $\underline{\a\y_i}$ we will have to get some constant
number $M \in \mathbb{N}$ of isolated edges $\underline{\a}$ from our 'resource'.
After completing $2n$ steps, among the resulting paths $\underline{\a\y_\bullet}$, by pigeonhole principle,
there exists a subset of $n$ paths all colored the same way. This will finish the proof.

\newcommand{\kbbb}{\circledb{\scriptsize\ensuremath{\blacklozenge}}}

Similarly as before, to hide the unnecessary details, we will use the symbol $\kbbb$ for
the frequently appearing graphs of the form  $\alpha \cdot \underline{\a}\ \ +\ \ \underline{\a\y_1}+\dots+\underline{\a\y_i}$
 --- they are almost passive in the steps of the coming proof.
It is enough to remember, that each time we need a new isolated edge $\underline{\a}$,
we take it from $\kbbb$. Moreover, after each inductive step we
add to $\kbbb$ a new isolated path $\underline{\a\y_i}$.

\paragraph{Inductive step}
From the assumptions we have the graph $\underline{\a\u} + \kbbb$ (for some $\a, \u \in \ColorsTwo$),
and this time our goal is to prove that $\G$ embeds a graph $\underline{\a\v} + \underline{\a\w} +
\kbbb$. As we have already mentioned, we only have to consider four instances of amalgamation.

\Instance{\ref{lemma:addingPPP}.1}
\newcommand{\IIIxQ}[0]{
\begin{tikzpicture}[yscale=-1, xscale=1]
    \node[Basic] at (2.75, 0.7) (kb) {$+\,\kbbb$};
    \node[ Node] at (0, 0) (n1) {};
    \node[ANode] at (1, 0) (n2) {};
    \node[ Node] at (1.5, 0.866) (n3) {};
    \node[ANode] at (2, 0) (n4) {};
    \node[ Node] at (3, 0) (n5) {};
    \begin{pgfonlayer}{background}
        \lEdge[cA, swap]{n1}{\a}{n2}
        \lEdge[cBlack, swap]{n2}{\u}{n3}
        \lEdge[cBlack, swap]{n3}{\u}{n4}
        \lEdge[cA, swap]{n4}{\a}{n5}
        \lEdge[AEdge]{n2}{}{n4}
    \end{pgfonlayer}
\end{tikzpicture}
}

\newcommand{\IIIxX}{
\begin{tikzpicture}[yscale=-1, xscale=1]
    \node[Basic] at (2.75, 0.7) (kb) {$+\,\kbbb$};
    \node[ Node] at (0, 0) (n1) {};
    \node[ANode] at (1, 0) (n2) {};
    \node[ Node] at (1.5, 0.866) (n3) {};
    \node[ANode] at (2, 0) (n4) {};
    \node[ Node] at (3, 0) (n5) {};
    \begin{pgfonlayer}{background}
        \lEdge[cA, swap]{n1}{\a}{n2}
        \lEdge[cBlack, swap]{n2}{\u}{n3}
        \lEdge[cBlack, swap]{n3}{\u}{n4}
        \lEdge[cA, swap]{n4}{\a}{n5}
        \lEdge[cBlack]{n2}{\n}{n4}
    \end{pgfonlayer}
    \begin{pgfonlayer}{prebackground}
        \draw[Selection]
        (n1.center) to (n2.center) to (n4.center) to (n5.center);
    \end{pgfonlayer}
\end{tikzpicture}
}

\newcommand{\IIIxNO}{
\begin{tikzpicture}[yscale=-1, xscale=1]
    \node[Basic] at (2.75, 0.7) (kb) {$+\,\kbbb$};
    \node[ Node] at (0, 0) (n1) {};
    \node[ANode] at (1, 0) (n2) {};
    \node[ Node] at (1.5, 0.866) (n3) {};
    \node[ANode] at (2, 0) (n4) {};
    \node[ Node] at (3, 0) (n5) {};
    \begin{pgfonlayer}{background}
        \lEdge[cA, swap]{n1}{\a}{n2}
        \lEdge[cBlack, swap]{n2}{\u}{n3}
        \lEdge[cBlack, swap]{n3}{\u}{n4}
        \lEdge[cA, swap]{n4}{\a}{n5}
    \end{pgfonlayer}

    \begin{pgfonlayer}{prebackground}
        \draw[Selection]
        (n1.center) to (n2.center) to (n3.center) to (n4.center);
    \end{pgfonlayer}
\end{tikzpicture}
}

\begin{center}
    \begin{tikzpicture}[scale=\scaleOfPictures]
        \node (Q) [AInstance, minimum height=2.1cm] at (0, 0) {\IIIxQ};
        \node (B) [AResult, minimum height=2.1cm, left=0.1cm of Q] {\IIIxNO};
        \node (D) [AResult, minimum height=2.1cm, right=0.1cm of Q] {\IIIxX};
        \coordinate[below=0.4cm of Q] (QQ);
        \draw[CaseArrow]
        (Q) -- (QQ)
        -- node[midway, auto,     , inner sep=0cm] {\CaseEdgeDoesNotExist}
        (QQ -| B) -- (B);
        \draw[CaseArrow]
        (Q) -- (QQ)
        -- node[midway, auto, swap, inner sep=0cm] {\CaseEdgeExists}
        (QQ -| D) -- (D);
    \end{tikzpicture}
\end{center}

\noindent
No matter what the result will be, we will get the following path:
\begin{center}
    \begin{tikzpicture}[scale=\scaleOfPictures]
        \node[ Node] (n1) at (0, 0) {};
        \node[ Node] (n2) at (1, 0) {};
        \node[ Node] (n3) at (2, 0) {};
        \node[ Node] (n4) at (3, 0) {};
        \node[Basic] (kb) at (3.7, 0) {$+\,\kbbb$};
        \begin{pgfonlayer}{background}
            \lEdge[cA]{n1}{\a}{n2}
            \lEdge[cX]{n2}{\x}{n3}
            \lEdge[cY]{n3}{\y}{n4}
        \end{pgfonlayer}
    \end{tikzpicture}
\end{center}
for some $\x, \y \in \ColorsTwo$.

\Instance{\ref{lemma:addingPPP}.2}
Using it (with an additional edge $\underline{\a}$ taken from $\kbbb$), we build the
following instance:

\newcommand{\addingPPPPxIIxQ}{
\begin{tikzpicture}[yscale=-1*\scaleOfPictures, xscale=\scaleOfPictures]
    \node[ Node] at (-0.5, -1) (n1) {};
    \node[ Node] at (-0.5,  0) (n2) {};
    \node[ANode] at (-0.5,  1) (n3) {};
    \node[ANode] at ( 0.5,  1) (n4) {};
    \node[ Node] at ( 0.5,  0) (n5) {};
    \node[ Node] at ( 0.5, -1) (n6) {};
    \node[ Node] at (-0.5,  2) (d1) {};
    \node[Basic] (kb) at (1.5, 0.6) {$+\,\kbbb$};
    \begin{pgfonlayer}{background}
        \lEdge[cA, swap]{n1}{\a}{n2}
        \lEdge[cX, swap]{n2}{\x}{n3}
        \lEdge[cY, swap]{n3}{\y}{d1}
        \lEdge[cX]{n4}{\x}{n5}
        \lEdge[cA]{n5}{\a}{n6}
        \lEdge[AEdge]{n3}{}{n4}
    \end{pgfonlayer}
\end{tikzpicture}
}
\newcommand{\addingPPPPxIIxX}{
\begin{tikzpicture}[yscale=-1*\scaleOfPictures, xscale=\scaleOfPictures]
    \node[ Node] at (-0.5, -1) (n1) {};
    \node[ Node] at (-0.5,  0) (n2) {};
    \node[ANode] at (-0.5,  1) (n3) {};
    \node[ANode] at ( 0.5,  1) (n4) {};
    \node[ Node] at ( 0.5,  0) (n5) {};
    \node[ Node] at ( 0.5, -1) (n6) {};
    \node[ Node] at (-0.5,  2) (d1) {};
    \node[Basic] (kb) at (1.5, 0.6) {$+\,\kbbb$};
    \begin{pgfonlayer}{background}
        \lEdge[cA,swap]{n1}{\a}{n2}
        \lEdge[cX,swap]{n2}{\x}{n3}
        \lEdge[cY,swap]{n3}{\y}{d1}
        \lEdge[cX]{n4}{\x}{n5}
        \lEdge[cA]{n5}{\a}{n6}
        \lEdge[cZ,swap]{n3}{\z}{n4}
    \end{pgfonlayer}
\end{tikzpicture}
}
\newcommand{\addingPPPPxIIxNO}{
\begin{tikzpicture}[yscale=-1*\scaleOfPictures, xscale=\scaleOfPictures]
    \node[ Node] at (-0.5, -1) (n1) {};
    \node[ Node] at (-0.5,  0) (n2) {};
    \node[ANode] at (-0.5,  1) (n3) {};
    \node[ANode] at ( 0.5,  1) (n4) {};
    \node[ Node] at ( 0.5,  0) (n5) {};
    \node[ Node] at ( 0.5, -1) (n6) {};
    \node[ Node] at (-0.5,  2) (d1) {};
    \node[Basic] (kb) at (1.5, 0.6) {$+\,\kbbb$};
    \begin{pgfonlayer}{background}
        \lEdge[cA,swap]{n1}{\a}{n2}
        \lEdge[cX,swap]{n2}{\x}{n3}
        \lEdge[cY,swap]{n3}{\y}{d1}
        \lEdge[cX]{n4}{\x}{n5}
        \lEdge[cA]{n5}{\a}{n6}
    \end{pgfonlayer}
    \begin{pgfonlayer}{prebackground}
        \draw[Selection] (n1.center) -- (n2.center) -- (n3.center) (n4.center) -- (n5.center) -- (n6.center);
    \end{pgfonlayer}
\end{tikzpicture}
}

\begin{center}
    \begin{tikzpicture}[scale=\scaleOfPictures]
        \node (Q) [AInstance] at (0, 0) {\addingPPPPxIIxQ};
        \node (B) [AResult, left=1.2cm of Q] {\addingPPPPxIIxNO};
        \node (D) [AResult, right=1.2cm of Q] {\addingPPPPxIIxX};
        \node[anchor=north east, inner xsep=0] at (B.south east) (ok1) {\CaseOk};
        \draw[CaseArrow]
        (Q)
        -- node[midway, auto,     , inner sep=0cm] {\CaseEdgeDoesNotExist}
        (B);
        \draw[CaseArrow]
        (Q)
        -- node[midway, auto, swap, inner sep=0cm] {\CaseEdgeExists}
        (D);
    \end{tikzpicture}
\end{center}
If the edge is not present in the solution, we readily get two disjoint paths of length 2.
Suppose then that some $\z$-edge appeared ($\z \in \ColorsTwo$).
If $\z = \y$, we move on straight to the instance 4. If in turn $\z \neq \y$, an additional step in
necessary:

\Instance{\ref{lemma:addingPPP}.3}
Once more we get one edge from $\kbbb$ and build an instance similar to the previous one ---
the only difference is the new edge colored with $\z \neq \y$.

\newcommand{\addingPPPPxIIIxQ}{
\begin{tikzpicture}[yscale=-1*\scaleOfPictures, xscale=\scaleOfPictures]
    \node[ Node] at (-0.5, -1) (n1) {};
    \node[ Node] at (-0.5,  0) (n2) {};
    \node[ANode] at (-0.5,  1) (n3) {};
    \node[ANode] at ( 0.5,  1) (n4) {};
    \node[ Node] at ( 0.5,  0) (n5) {};
    \node[ Node] at ( 0.5, -1) (n6) {};
    \node[ Node] at ( 0  ,  1.866) (d1) {};
    \node[Basic] (kb) at (1.5, 0.6) {$+\,\kbbb$};
    \node[ Node] at (-1,  1.866) (left) {};
    \begin{pgfonlayer}{background}
        \lEdge[cA, swap]{n1}{\a}{n2}
        \lEdge[cX, swap]{n2}{\x}{n3}
        \lEdge[cY, swap]{n3}{\y}{left}
        \lEdge[cZ, swap]{n3}{\z}{d1}
        \lEdge[cX]{n4}{\x}{n5}
        \lEdge[cA]{n5}{\a}{n6}
        \lEdge[AEdge]{n3}{}{n4}
    \end{pgfonlayer}
\end{tikzpicture}
}
\newcommand{\addingPPPPxIIIxX}{
\begin{tikzpicture}[yscale=-1*\scaleOfPictures, xscale=\scaleOfPictures]
    \node[ Node] at (-0.5, -1) (n1) {};
    \node[ Node] at (-0.5,  0) (n2) {};
    \node[ANode] at (-0.5,  1) (n3) {};
    \node[ANode] at ( 0.5,  1) (n4) {};
    \node[ Node] at ( 0.5,  0) (n5) {};
    \node[ Node] at ( 0.5, -1) (n6) {};
    \node[ Node] at ( 0  ,  1.866) (d1) {};
    \node[Basic] (kb) at (1.5, 0.6) {$+\,\kbbb$};
    \node[ Node] at (-1,  1.866) (left) {};
    \begin{pgfonlayer}{background}
        \lEdge[cA, swap]{n1}{\a}{n2}
        \lEdge[cX, swap]{n2}{\x}{n3}
        \lEdge[cY, swap]{n3}{\y}{left}
        \lEdge[cZ, swap]{n3}{\z}{d1}
        \lEdge[cX]{n4}{\x}{n5}
        \lEdge[cA]{n5}{\a}{n6}
        \lEdge[cZZ, swap]{n3}{$\gamma$}{n4}
    \end{pgfonlayer}
\end{tikzpicture}
}
\newcommand{\addingPPPPxIIIxNO}{
\begin{tikzpicture}[yscale=-1*\scaleOfPictures, xscale=\scaleOfPictures]
    \node[ Node] at (-0.5, -1) (n1) {};
    \node[ Node] at (-0.5,  0) (n2) {};
    \node[ANode] at (-0.5,  1) (n3) {};
    \node[ANode] at ( 0.5,  1) (n4) {};
    \node[ Node] at ( 0.5,  0) (n5) {};
    \node[ Node] at ( 0.5, -1) (n6) {};
    \node[ Node] at ( 0  ,  1.866) (d1) {};
    \node[Basic] (kb) at (1.5, 0.6) {$+\,\kbbb$};
    \node[ Node] at (-1,  1.866) (left) {};
    \begin{pgfonlayer}{background}
        \lEdge[cA, swap]{n1}{\a}{n2}
        \lEdge[cX, swap]{n2}{\x}{n3}
        \lEdge[cY, swap]{n3}{\y}{left}
        \lEdge[cZ, swap]{n3}{\z}{d1}
        \lEdge[cX]{n4}{\x}{n5}
        \lEdge[cA]{n5}{\a}{n6}
    \end{pgfonlayer}
    \begin{pgfonlayer}{prebackground}
        \draw[Selection] (n1.center) -- (n2.center) -- (n3.center) (n4.center) -- (n5.center) -- (n6.center);
    \end{pgfonlayer}
\end{tikzpicture}
}

\begin{center}
    \begin{tikzpicture}[scale=\scaleOfPictures]
        \node (Q) [AInstance] at (0, 0) {\addingPPPPxIIIxQ};
        \node (B) [AResult, left=1.2cm of Q] {\addingPPPPxIIIxNO};
        \node (D) [AResult, right=1.2cm of Q] {\addingPPPPxIIIxX};
        \node[anchor=north east, inner xsep=0] at (B.south east) (ok1) {\CaseOk};
        \draw[CaseArrow]
        (Q)
        -- node[midway, auto,     , inner sep=0cm] {\CaseEdgeDoesNotExist}
        (B);
        \draw[CaseArrow]
        (Q)
        -- node[midway, auto, swap, inner sep=0cm] {\CaseEdgeExists}
        (D);
    \end{tikzpicture}
\end{center}
If we do not obtain an edge, we end having -- as before -- two disjoint paths.
When some edge exists, we are sure that its color $\Color{cZZ}{\gamma} \in \ColorsTwo$
is either equal $\y$ or $\z$, since $\y \neq \z$ and $\ColorsTwo$ has only two elements.
W.l.o.g. let us assume, that $\Color{cZZ}{\gamma} = \y$.

Then we have, as a result of Instance~\ref{lemma:addingPPP}.2 or~\ref{lemma:addingPPP}.3,
a graph of the form:

\begin{center}
    \begin{tikzpicture}[rotate=-90]
        \node[ Node] at (-0.5, -1) (n1) {};
        \node[ Node] at (-0.5,  0) (n2) {};
        \node[ Node] at (-0.5,  1) (n3) {};
        \node[ Node] at ( 0  ,  1.866) (d1) {};
        \node[ Node] at ( -1  ,  2.866) (d2) {};
        \node[ Node] at ( -1  ,  3.866) (d3) {};
        \node[Basic] (kb) at (-0.5, 2.5) {$+\,\kbbb$};
        \node[ Node] at (-1,  1.866) (left) {};
        \begin{pgfonlayer}{background}
            \lEdge[cA, swap]{n1}{\a}{n2}
            \lEdge[cX, swap]{n2}{\x}{n3}
            \lEdge[cY]{n3}{\y}{left}
            \lEdge[cY, swap]{n3}{\y}{d1}
            \lEdge[cX]{left}{\x}{d2}
            \lEdge[cA]{d2}{\a}{d3}
        \end{pgfonlayer}
    \end{tikzpicture}
\end{center}
Using it we may create the last amalgamation instance and finalize the proof.

\Instance{\ref{lemma:addingPPP}.4}

\newcommand{\addingPPPPxIVxQ}{
\begin{tikzpicture}[yscale=-1*\scaleOfPictures, xscale=\scaleOfPictures]
    \node[Basic] (kb) at (1, -0.7) {$+\,\kbbb$};
    \draw (0,0) node[Node] (n1) {}
    ++(1,0) node[Node] (n2) {}
    ++(1,0) node[Node] (n3) {}
    ++(0.866,-0.5) node[Node] (u1) {}
    ++(1,0) node[ANode] (u2) {}
    ++(1,0) node[Node] (u3) {}
    ++(0,1) node[Node] (d3) {}
    ++(-1,0) node[ANode] (d2) {}
    ++(-1,0) node[Node] (d1) {};
    \begin{pgfonlayer}{background}
        \lEdge[cA]{n1}{\a}{n2}
        \lEdge[cX]{n2}{$x$}{n3}
        \lEdge[cY]{n3}{\y}{u1}
        \lEdge[cY,swap]{n3}{\y}{d1}
        \lEdge[cX]{u1}{\x}{u2}
        \lEdge[cX,swap]{d1}{\x}{d2}
        \lEdge[cA]{u2}{\a}{u3}
        \lEdge[cA,swap]{d2}{\a}{d3}
        \lEdge[AEdge]{u2}{}{d2}
    \end{pgfonlayer}
\end{tikzpicture}
}

\newcommand{\addingPPPPxIVxX}{
\begin{tikzpicture}[yscale=-1*\scaleOfPictures, xscale=\scaleOfPictures]
    \node[Basic] (kb) at (1, -0.7) {$+\,\kbbb$};
    \draw (0,0) node[Node] (n1) {}
    ++(1,0) node[Node] (n2) {}
    ++(1,0) node[Node] (n3) {}
    ++(0.866,-0.5) node[Node] (u1) {}
    ++(1,0) node[ANode] (u2) {}
    ++(1,0) node[Node] (u3) {}
    ++(0,1) node[Node] (d3) {}
    ++(-1,0) node[ANode] (d2) {}
    ++(-1,0) node[Node] (d1) {};
    \begin{pgfonlayer}{background}
        \lEdge[cA]{n1}{\a}{n2}
        \lEdge[cX]{n2}{$x$}{n3}
        \lEdge[cY]{n3}{\y}{u1}
        \lEdge[cY,swap]{n3}{\y}{d1}
        \lEdge[cX]{u1}{\x}{u2}
        \lEdge[cX,swap]{d1}{\x}{d2}
        \lEdge[cA]{u2}{\a}{u3}
        \lEdge[cA,swap]{d2}{\a}{d3}
        \lEdge[cBlack, swap]{u2}{$\alpha$}{d2}
    \end{pgfonlayer}
    \begin{pgfonlayer}{prebackground}
        \draw[Selection]
        (u3.center) -- (u2.center) -- (d2.center)
        (n1.center) -- (n2.center) -- (n3.center);
    \end{pgfonlayer}
\end{tikzpicture}
}

\newcommand{\addingPPPPxIVxNO}{
\begin{tikzpicture}[yscale=-1*\scaleOfPictures, xscale=\scaleOfPictures]
    \node[Basic] (kb) at (1, -0.7) {$+\,\kbbb$};
    \draw (0,0) node[Node] (n1) {}
    ++(1,0) node[Node] (n2) {}
    ++(1,0) node[Node] (n3) {}
    ++(0.866,-0.5) node[Node] (u1) {}
    ++(1,0) node[ANode] (u2) {}
    ++(1,0) node[Node] (u3) {}
    ++(0,1) node[Node] (d3) {}
    ++(-1,0) node[ANode] (d2) {}
    ++(-1,0) node[Node] (d1) {};
    \begin{pgfonlayer}{background}
        \lEdge[cA]{n1}{\a}{n2}
        \lEdge[cX]{n2}{$x$}{n3}
        \lEdge[cY]{n3}{\y}{u1}
        \lEdge[cY,swap]{n3}{\y}{d1}
        \lEdge[cX]{u1}{\x}{u2}
        \lEdge[cX,swap]{d1}{\x}{d2}
        \lEdge[cA]{u2}{\a}{u3}
        \lEdge[cA,swap]{d2}{\a}{d3}
    \end{pgfonlayer}
    \begin{pgfonlayer}{prebackground}
        \draw[Selection]
        (u3.center) -- (u2.center) -- (u1.center)
        (d3.center) -- (d2.center) -- (d1.center);
    \end{pgfonlayer}
\end{tikzpicture}
}

\begin{center}
    \begin{tikzpicture}[scale=\scaleOfPictures]
        \node (Q) [AInstance] at (0, 0) {\addingPPPPxIVxQ};
        \node (B) [AResult, below left=1.5cm and -2cm of Q] {\addingPPPPxIVxNO};
        \node (D) [AResult, below right=1.5cm and -2cm of Q] {\addingPPPPxIVxX};
        \node[anchor=north east, inner xsep=0] at (B.south east) (ok1) {\CaseOk};
        \node[anchor=north east, inner xsep=0] at (D.south east) (ok2) {\CaseOk};
        \draw[CaseArrow]
        (Q)
        -- node[midway, auto,     , inner sep=-0.1cm, pos=0.4] {\CaseEdgeDoesNotExist}
        (B);
        \draw[CaseArrow]
        (Q)
        -- node[midway, auto, swap, inner sep=-0.1cm, pos=0.4] {\CaseEdgeExists}
        (D);
    \end{tikzpicture}
\end{center}
Independently form the existence of an edge, in the result we may fine a subgraph of
the following form:
$$
\underline{\a\v} + \underline{\a\w} + \kbbb
$$
Its presence ends the proof of the lemma.
\end{proof}


\subsection{Producing arbitrarily long paths}\label{subsec:longPaths}

There is the last thing to do in case A) --- showing that in $\G$ arbitrarily long paths exist. It is formalized by the following lemma:

\begin{lemma}\label{lemma:longPaths}
    If a homogeneous, 2-edge-colored graph $\G$ satisfies Lemma~\ref{lemma:addingPPP}, i.e.,
    for every $n\in\mathbb{N}$ there exist colors $\a,\x\in\ColorsTwo$ such that 
    $\G$ embeds the graph $n\cdot\underline{\a\x}\  \trianglelefteq\  \G$,
    then $\G$ also embeds an arbitrarily long $\a\b$-path ($\b$ being the second element of $\ColorsTwo$).
\end{lemma}

\begin{proof} 
Once more we conduct an induction.
    \paragraph{Inductive step}
    Here we will be showing how from shorter paths we may produce longer ones:
    Assuming that we have a graph that is a sum of paths of length
    $d \in \mathbb{N}, d \geq 2$., we will build (using amalgamation) a graph
    that is a sum of (fewer) paths of length $2d-1$.

    More precisely, we will show how from the graph $2k \cdot P$ (where $k\in\mathbb{N}$ and $P$ is some $\a\b$-path of
    length $d$) we can derive in $k$ steps a graph $Q_1 + Q_2 + \dots + Q_k$, where $Q_\bullet$ are
    $\a\b$-paths of length $2d-1$. Taking sufficiently large $k$ we will ensure, that among those
    $k$ paths (by pigeonhole principle) there will be a group of size $n$ of equally colored ones.

    The outline of the procedure is as follows:
    \begin{align*}
        &(k\phantom{-1}) \cdot P\\
        &(k-1) \cdot P\ \ +\ \ Q_1\\
        &(k-2) \cdot P\ \ +\ \ Q_1+Q_2\\
        &(k-3) \cdot P\ \ +\ \ Q_1+Q_2+Q_3\\
        &\dots\\
        &(\rlap{$0$}\hspace{0.896cm}) \cdot P\ \ +\ \ Q_1+Q_2+Q_3+\dots+Q_k
    \end{align*}

    \noindent
    To show a single step, one amalgamation will be enough:

    \Instance{\ref{lemma:longPaths}.1}
    \newcommand{\longPathsxIxQ}{
    \begin{tikzpicture}[scale=0.9*\scaleOfPictures]
        \begin{pgfonlayer}{foreground}
            \node (n0) [ Node] at (-2, 0) {};
            \node (n1) [ Node] at (-1, 0) {};
            \node (n2) [ Node] at (3, 0) {};
            \node (n3) [ANode] at (4, 0) {};
            \node (nc) [ Node] at (4.5, -0.866) {};
            \node (n4) [ANode] at (5, 0) {};
            \node (n5) [ Node] at (6, 0) {};
            \node (n6) [ Node] at (10, 0) {};
            \node (n7) [ Node] at (11, 0) {};
            \node (kb) [Basic] at (8.5, -1.2) {\scriptsize$+\ (2k-2i-2)\cdot P \ \ +\ \  Q_1 + \dots + Q_i$};
        \end{pgfonlayer}
        \begin{pgfonlayer}{background}
            \draw[ThinEdge, ->]
            (n1.center) to
                node[auto, midway, fill=white, minimum width=0, inner xsep=0.1cm, inner ysep=0, swap] {{\strut\small$(d-3)$ edges}}
            (n2);
            \draw[ThinEdge, ->]
            (n6.center) to
                node[auto, midway, fill=white, minimum width=0, inner xsep=0.1cm, inner ysep=0, ] {{\strut\small$(d-3)$ edges}}
            (n5);
            \lEdge[AEdge]{n3}{}{n4}
            \lEdge[cX,swap]{n0}{\x}{n1}
            \lEdge[cY,swap]{n2}{\y}{n3}
            \lEdge[cZ,swap]{n3}{\z}{nc}
            \lEdge[cZ,swap]{nc}{\z}{n4}
            \lEdge[cY,swap]{n4}{\y}{n5}
            \lEdge[cX,swap]{n6}{\x}{n7}
        \end{pgfonlayer}
    \end{tikzpicture}
    }
    \newcommand{\longPathsxIxQX}{
    \begin{tikzpicture}[scale=0.9*\scaleOfPictures]
        \begin{pgfonlayer}{foreground}
            \node (n0) [ Node] at (-2, 0) {};
            \node (n1) [ Node] at (-1, 0) {};
            \node (n2) [ Node] at (3, 0) {};
            \node (n3) [ANode] at (4, 0) {};
            \node (nc) [ Node] at (4.5, -0.866) {};
            \node (n4) [ANode] at (5, 0) {};
            \node (n5) [ Node] at (6, 0) {};
            \node (n6) [ Node] at (10, 0) {};
            \node (n7) [ Node] at (11, 0) {};
            \node (kb) [Basic] at (8.5, -1.2) {\scriptsize$+\ (2k-2i-2)\cdot P \ \ +\ \  Q_1 + \dots + Q_i$};
        \end{pgfonlayer}
        \begin{pgfonlayer}{background}
            \draw[ThinEdge, ->]
            (n1.center) to
            node[auto, midway, fill=white, minimum width=0, inner xsep=0.1cm, inner ysep=0, swap] {}
            (n2);
            \draw[ThinEdge, ->]
            (n6.center) to
            node[auto, midway, fill=white, minimum width=0, inner xsep=0.1cm, inner ysep=0, ] {}
            (n5);
            \lEdge[cBlack]{n3}{\w}{n4}
            \lEdge[cX,swap]{n0}{\x}{n1}
            \lEdge[cY,swap]{n2}{\y}{n3}
            \lEdge[cZ,swap]{n3}{\z}{nc}
            \lEdge[cZ,swap]{nc}{\z}{n4}
            \lEdge[cY,swap]{n4}{\y}{n5}
            \lEdge[cX,swap]{n6}{\x}{n7}
        \end{pgfonlayer}
        \begin{pgfonlayer}{prebackground}
            \draw[Selection] (n0.center) -- (n1.center) -- (n2.center) -- (n3.center) -- (n4.center) -- (n5.center) -- (n6.center) -- (n7.center);
        \end{pgfonlayer}
    \end{tikzpicture}
    }
    \newcommand{\longPathsxIxQNO}{
    \begin{tikzpicture}[scale=0.9*\scaleOfPictures]
        \begin{pgfonlayer}{foreground}
            \node (n0) [ Node] at (-2, 0) {};
            \node (n1) [ Node] at (-1, 0) {};
            \node (n2) [ Node] at (3, 0) {};
            \node (n3) [ANode] at (4, 0) {};
            \node (nc) [ Node] at (4.5, -0.866) {};
            \node (n4) [ANode] at (5, 0) {};
            \node (n5) [ Node] at (6, 0) {};
            \node (n6) [ Node] at (10, 0) {};
            \node (n7) [ Node] at (11, 0) {};
            \node (kb) [Basic] at (8.5, -1.2) {\scriptsize$+\ (2k-2i-2)\cdot P \ \ +\ \  Q_1 + \dots + Q_i$};
        \end{pgfonlayer}
        \begin{pgfonlayer}{background}
            \draw[ThinEdge, ->]
            (n1.center) to
            node[auto, midway, fill=white, minimum width=0, inner xsep=0.1cm, inner ysep=0, swap] {}
            (n2);
            \draw[ThinEdge, ->]
            (n6.center) to
            node[auto, midway, fill=white, minimum width=0, inner xsep=0.1cm, inner ysep=0, ] {}
            (n5);
            \lEdge[cX,swap]{n0}{\x}{n1}
            \lEdge[cY,swap]{n2}{\y}{n3}
            \lEdge[cZ,swap]{n3}{\z}{nc}
            \lEdge[cZ,swap]{nc}{\z}{n4}
            \lEdge[cY,swap]{n4}{\y}{n5}
            \lEdge[cX,swap]{n6}{\x}{n7}
        \end{pgfonlayer}
        \begin{pgfonlayer}{prebackground}
            \draw[Selection] (n0.center) -- (n1.center) -- (n2.center) -- (n3.center) -- (nc.center) -- (n4.center) -- (n5.center) -- (n6.center);
        \end{pgfonlayer}
    \end{tikzpicture}
    }

    \begin{center}
        \begin{tikzpicture}[scale=\scaleOfPictures]
            \node (Q) [AInstance, minimum width=13.5cm*\scaleOfPictures] at (0, 0) {\longPathsxIxQ};
            \node (B) [AResult, minimum width=13.5cm*\scaleOfPictures, below=1cm of Q] {\longPathsxIxQNO};
            \node (D) [AResult, minimum width=13.5cm*\scaleOfPictures, below=1cm of B] {\longPathsxIxQX};
            \coordinate[left=0.5cm of Q.west] (pB);
            \coordinate[left=0.5cm of B.west] (ppB);
            \coordinate[right=0.5cm of Q.east] (pD);
            \coordinate[right=0.5cm of D.east] (ppD);
            \draw[CaseArrow]
                (Q) to
                (pB) to node[pos=0.45, auto, inner sep=0cm] {\CaseEdgeDoesNotExist}
                (ppB) to
                (B);
            \draw[CaseArrow]
                (Q) to
                (pD) to node[pos=0.25*0.9, auto, swap, inner sep=0cm] {\CaseEdgeExists}
                (ppD) to
                (D);
        \end{tikzpicture}
    \end{center}
    In both cases we get the path of length $2d - 1$ we wanted. It should be noted here, that
    the construction required a pair of equally colored paths.

    Repeating the above amalgamation $k$ times (according to the previously mentioned outline)
    we get a collection of $k$ disjoint paths, each of length $2d - 1$.
    They are not necessarily painted the same way, but fixing some sufficiently large $k$ (e.g. $k = 2^{d-1} \cdot n$ surely would do),
    we may choose a subset of $n$ same-looking paths. We are allowed to do that, since by
    assumption for each $k\in \mathbb{N}$ we can produce a graph $k \cdot P$
    (for some $\a\b$-path $P$ of length $d$). This ends the proof of the inductive step.

    Because at the very beginning we can choose an arbitrarily large sum of equal paths $\a\x$, then
    using the inductive step repeatedly we will be proving the possibility of
    producing collections of paths of increasing lengths:
    $$
        2 \xrightarrow{\hspace{0.8cm}} (2\cdot2) - 1 = 3 \xrightarrow{\hspace{0.8cm}} 5 \xrightarrow{\hspace{0.8cm}} 9 \xrightarrow{\hspace{0.8cm}} 17 \xrightarrow{\hspace{0.8cm}} \dots
    $$
    This observation completes our proof.
\end{proof}

\subsubsection{Summary}\label{subsec:branchAsummary}
    We made our way to the end of Section~\ref{sec:coreA}. The chain of lemmas that were
    stated has its beginning at the case A) of Lemma~\ref{lemma:withBranches}.
    As we move along this chain, we show the possibility of adding to the initial graphs respectively:
    \begin{itemize}
        \item first, an arbitrary number of isolated vertices,
        \item then, isolated edges,
        \item next, 2-edge paths,
        \item and finally, arbitrarily long paths.
    \end{itemize}
    At the end of the chain, we have obtained the second case of Theorem~\ref{thm:core},
    so we may at last consider the case A) as resolved.

\vspace{3mm}

\para{Acknowledgements}
We are grateful to the anonymous referees for valuable comments.

\bibliographystyle{abbrv}
\bibliography{bib}

\end{document}